\def\mdseries@tt{m}
\newtheorem{assumption}{Assumption}
\begin{document}

%% Title information
\title{Fast and Precise Type Checking for JavaScript}

\author{Avik Chaudhuri}
\affiliation{\institution{Facebook Inc.}\country{USA}}
\email{avik@fb.com}

\author{Panagiotis Vekris}
\affiliation{\institution{University of California, San Diego} \country{USA}}
\email{pvekris@cs.ucsd.edu}

\author{Sam Goldman}
\affiliation{\institution{Facebook Inc.}\country{USA}}
\email{samgoldman@fb.com}

\author{Marshall Roch}
\affiliation{\institution{Facebook Inc.}\country{USA}}
\email{mroch@fb.com}

\author{Gabriel Levi}
\affiliation{\institution{Facebook Inc.}\country{USA}}
\email{gabe@fb.com}

%% Paper note
%% The \thanks command may be used to create a "paper note" ---
%% similar to a title note or an author note, but not explicitly
%% associated with a particular element.  It will appear immediately
%% above the permission/copyright statement.
% \thanks{with paper note}                %% \thanks is optional
                                        %% can be repeated if necesary
                                        %% contents suppressed with 'anonymous'

%% Abstract
%% Note: \begin{abstract}...\end{abstract} environment must come
%% before \maketitle command
\begin{abstract}
In this paper we present the design and implementation of \flow, a fast and
precise type checker for \js that is used by thousands of developers on millions
of lines of code at \facebook every day.
\flow uses sophisticated type inference to understand common \js idioms
precisely. This helps it find non-trivial bugs in code and provide code
intelligence to editors\- without requiring significant rewriting or annotations
from the developer. We formalize an important fragment of \flow's analysis and
prove its soundness.
Furthermore, \flow uses aggressive parallelization and incrementalization to
deliver near-instantaneous response times. This helps it avoid introducing any
latency in the usual edit-refresh cycle of rapid JavaScript development. We
describe the algorithms and systems infrastructure that we built to scale
\flow's analysis.
\end{abstract}

%% %% 2012 ACM Computing Classification System (CSS) concepts
%% %% Generate at 'http://dl.acm.org/ccs/ccs.cfm'.
\begin{CCSXML}
<ccs2012>
<concept>
<concept_id>10003752.10010124.10010125.10010130</concept_id>
<concept_desc>Theory of computation~Type structures</concept_desc>
<concept_significance>500</concept_significance>
</concept>
<concept>
<concept_id>10003752.10010124.10010138.10010143</concept_id>
<concept_desc>Theory of computation~Program analysis</concept_desc>
<concept_significance>500</concept_significance>
</concept>
</ccs2012>
\end{CCSXML}

\ccsdesc[500]{Theory of computation~Type structures}
\ccsdesc[500]{Theory of computation~Program analysis}
%% End of generated code

%% Keywords
%% comma separated list
\keywords{Type Systems, Type Inference, JavaScript}

%% \maketitle
%% Note: \maketitle command must come after title commands, author
%% commands, abstract environment, Computing Classification System
%% environment and commands, and keywords command.
\maketitle

\section{Introduction}\label{sec:intro}

\js is one of the most popular languages for writing web and mobile applications
today. The language facilitates fast prototyping of ideas via dynamic
typing. The runtime provides the means for fast iteration on those ideas via
dynamic compilation. This fuels a fast edit-refresh cycle, which promises an
immersive coding experience that is quite appealing to creative developers.

However, evolving and growing a \js codebase is notoriously
challenging. Developers spend a lot of time debugging silly mistakes---like
mistyped property names, out-of-order arguments, references to missing values,
checks that never fail due to implicit conversions, and so on---and worse,
unraveling assumptions and guarantees in code written by others. In many other
languages, this overhead is mitigated by having a layer of types over the code
and building tools for the developer that use type information. For example,
types can be used to identify common bugs and to document interfaces of
libraries. Our aim is to bring such type-based tooling to \js.

\subsection{Goals}

In this paper, we present the design and implementation of \flow, a static type
checker for \js we have built and have been using at \facebook for the past
three years.

The idea of using types to manage code evolution and growth in \js (and related
languages) is not new. In fact, several useful type systems have been built for
\js in recent years. The design and implementation of \flow are driven by the
specific demands of real-world \js development we have observed at \facebook and
the industry at large.
\begin{itemize}
\item The type checker must be able to cover large parts of the codebase
  without requiring too many changes in the code. Developers want precise
  answers to code intelligence queries (the type of an expression, the
  definition reaching a reference, the set of possible completions at a
  point). Relatedly, they want to catch a large number of common bugs with
  few false positives.
\item The type checker must provide very fast responses, even on a very large
  codebase. Developers do not want any noticeable ``compile-time'' latency in
  their normal workflow, because that would defeat the whole purpose of using \js.
\end{itemize}
To meet these demands, we had to make careful choices and solve technical
challenges in \flow that go beyond related existing systems.
\begin{itemize}
\item We precisely model common \js idioms that appear pervasively in a modern
  \js codebase. For example, \flow understands the pattern \inlinejs{x = x || 0}
  that, \eg, initializes an optional parameter \inlinejs{x} in a function body.
  Handling a case like this necessitates support for type refinements:
  the system needs to recognize that the assigned value will not be \inlinejs{null} or \inlinejs{undefined},
  \ie, the type of the initial value of \inlinejs{x} must be \emph{refined} to exclude \emph{falsy} values.\footnote{The values \inlinejs{false},
  \inlinejs{0}, \inlinejs{""}, \inlinejs{null},
\inlinejs{undefined}, and \inlinejs{NaN} are \emph{falsy}. All other values are \emph{truthy}.}
  (More examples are shown below.)

\item At the same time, we \emph{do not} focus on reflection and legacy
  patterns that appear in a relatively small fraction (that is also usually
  stable and well-tested).
  Today, tools like Babel convert modern
  \js to (the more low-level) ES5 executed on browsers. \flow focuses on
  analyzing the source, instead of the target, of such translations (unlike many
  previous efforts that address ES5, or the even more low-level, and therefore harder, ES3).

%  \todo{Technically, we incorporate a necessary degree of
%  flow- and path-sensitivity in a mostly flow-insensitive constraint-based
%analysis, while keeping the type system efficiently implementable.}

\item We modularize our constraint-based analysis and implement data structures,
  algorithms, and systems infrastructure for parallel computing, shared-memory
  communication, and incremental updates to scale to millions of lines of code
  while answering most queries in well under a second.
\end{itemize}

\subsection{Overview}\label{subsec:overview}

We now introduce the main ideas behind \flow's design and implementation. (A
full description of \flow is not possible due to space constraints.)

\paragraph{Precise type checking}

One of the main contributors of \flow's precision is path-sensitivity: the way
types interact with runtime tests. The essence of many \js idioms is to put
together ad hoc sets of runtime values and to take them apart with shallow,
structural (in)equality checks. In \flow, the set of runtime values that a
variable may contain is described by its type, and a runtime test on that
variable \emph{refines} the type to a smaller set. This ability turns out to be
quite powerful and general in practice.

In this paper, we formalize refinements in a core subset of \js. The system is
particularly interesting because of \emph{the combination of mutable local
  variables and closures that capture them by reference}. Next, we illustrate
this system via a series of examples (\figref{fig:intro:code}).

\begin{figure}
\footnotesize
\begin{minipage}[l]{.50\textwidth}
\begin{jscode*}{escapeinside=++}
+\label{code:01}+function pipe(x, f) { f(x); }
+\label{code:05}+var hello = (s) => console.log("hello", s);
+\label{code:07}+pipe("world", hello);
+\label{code:09}+pipe("hello", null); // error
\end{jscode*}
\vspace{1em}
\begin{jscode*}{escapeinside=++}
+\label{code:11}+function pipe(x, f) {
+\label{code:12}+  if (f != null) { f(x); } // ok
+\label{code:15}+}
\end{jscode*}
\vspace{1em}
\begin{jscode*}{escapeinside=++}
+\label{code:17}+var nil = { kind: "nil" };
+\label{code:18}+var cons = (head, tail) => {
+\label{code:19}+  return { kind: "cons", head, tail };
+\label{code:20}+}
\end{jscode*}
\end{minipage}
\hfill
\begin{minipage}[r]{.48\textwidth}
\begin{jscode*}{escapeinside=--}
-\label{code:22}-function sum(list) {
-\label{code:23}-  if (list.kind === "cons") {
-\label{code:24}-    return list.head + sum(list.tail); // ok
-\label{code:25}-  }
-\label{code:26}-  return 0;
-\label{code:27}-}
-\label{code:29}-sum(cons(6, cons(7, nil)));
\end{jscode*}
\vspace{1em}
\begin{jscode*}{escapeinside=--}
-\label{code:31}-function merge(x) {
-\label{code:32}-  x = x || nil;
-\label{code:33}-  return x.kind; // ok
-\label{code:34}-}
\end{jscode*}
\vspace{1em}
\begin{jscode*}{escapeinside=--}
-\label{code:36}-function havoc(x) {
-\label{code:37}-  function reset() { x = null; }
-\label{code:38}-  x = x || nil;
-\label{code:39}-  reset();
-\label{code:40}-  return x.kind; // error
-\label{code:41}-}
\end{jscode*}
\end{minipage}
\caption{Modern \js Examples}
\label{fig:intro:code}
\end{figure}

%%%%%%%%%%%%%%%%%%%%%%%%%%%%%%%%

Higher-order functions (lines~\ref{code:01}--\ref{code:09}) are quite common in \js.
Unfortunately, it is also common to use \inlinejs{null} as a default for
everything (\mylineref{code:09}). In particular, this causes the dreaded ``\emph{null is not a
function}'' error to hit often.
Fortunately, \flow finds these errors by
following flows of \inlinejs{null} to calls in the code.

Checking for nullability is the idiomatic way to prevent such errors at
runtime. (In \js, the check \inlinejs{f != null} is equivalent to \inlinejs{f
  !== null && f !== undefined}, which additionally rules out
\inlinejs{undefined}, commonly used to denote missing values.)  Thankfully \flow
understands that this code is safe. It refines the type of \inlinejs{f} to
filter out \inlinejs{null} in \mylineref{code:12}, and thus knows that
\inlinejs{null} cannot reach the call. Many other idiomatic variants also work,
such as \inlinejs{f && f(x)}, where \inlinejs{f} is checked to be truthy
(ruling out \inlinejs{null}, \inlinejs{undefined} and other falsy values)
% such as \inlinejs{false}, \inlinejs{0}, and \inlinejs{""})
before calling.

Refinements also power a common technique to encode algebraic data types in \js,
which are used quite widely (to manage actions and dispatchers, data and
queries, \etc in user interface libraries). Records of different shapes have a
common property that specifies the ``constructor,'' and other properties
dependent on the constructor value. These records are then analyzed by ``pattern
matching''---inspecting and branching on the constructor value.

% \begin{wrapfigure}{l}{0.55\textwidth}\begin{mdframed}
% \begin{jscode*}{highlightlines={13,15}}
% let nil = { kind: "nil" };
% let cons = (head, tail) => {
%   return { kind: "cons", head, tail };
% }
% function sum(list) {
%   if (list.kind === "cons") {
%     return list.head + sum(list.tail); // ok
%   }
%   return 0;
% }
% sum(cons(6, cons(7, nil)));
% \end{jscode*}
% \end{mdframed}\end{wrapfigure}
% %
For example, consider the encoding of lists in lines~\ref{code:17}--\ref{code:20}.
A \inlinejs{sum} function (\mylineref{code:22})
checks whether a list is non-empty before accessing properties specific to
non-empty lists.
Following the calls to \inlinejs{sum}, \flow knows that the parameter
\inlinejs{list} in \mylineref{code:22} can contain both kinds of objects---those whose
\inlinejs{kind} property is \inlinejs{"cons"}, and those for which it is
\inlinejs{"nil"}. The latter ones are filtered out by refining the type of
\inlinejs{list} in \mylineref{code:23}, so that the only objects reaching the property
accesses of \inlinejs{head} and \inlinejs{tail} in \mylineref{code:24} are guaranteed to
have those properties. Thus, \flow knows that this code is safe. Without
refinements, on the other hand, the analysis would have over-conservatively
concluded that \inlinejs{nil} can also flow to the property accesses, leading to
spurious type errors.

% \begin{wrapfigure}{l}{0.45\textwidth}\begin{mdframed}
% \begin{jscode*}{highlightlines={20}}
% function merge(x) {
%   x = x || nil;
%   return x.kind; // ok
% }
% \end{jscode*}
% \end{mdframed}\end{wrapfigure}

Refinements are tracked by a flow-sensitive analysis, and interact in
interesting ways with variable assignment. The common idiom in \mylineref{code:32}
of \inlinejs{merge} ensures
that a variable has a non-\inlinejs{null} default.
\flow models the assignment by merging the refined type of \inlinejs{x} with the
type of \inlinejs{nil} and updating the type of \inlinejs{x} with it.

% \begin{wrapfigure}{l}{0.45\textwidth}\begin{mdframed}
% \begin{jscode*}{highlightlines={24}}
% function havoc(x) {
%   let reset = () => { x = null; }
%   x = x || nil;
%   reset();
%   return x.kind; // error
% }
% \end{jscode*}
% \end{mdframed}\end{wrapfigure}

On the other hand, refinements can be \emph{invalidated} by assignments, which
can even happen indirectly via calls (\mylineref{code:39}).
While invalidating refinements is necessary for soundness, they should be
preserved as much as possible to avoid spurious type errors. \flow tracks
variable assignments as \emph{effects} for precise invalidation.

Refinements and their invalidation carry over to higher-order functions. We also
have limited support for refining mutable object properties, but those
refinements are invalidated aggressively (\ie, our analysis is not
heap-sensitive).

Behind the scenes, \flow relies on set-based analysis as a common low-level
``assembly language'' for encoding a wide variety of high-level
analyses. Compared with pure unification, this affords far more precision, but
is much less efficient (quasi-cubic vs. quasi-linear in program size). How do we
scale such an analysis?

\paragraph{Fast type checking}

The key to \flow's speed is modularity: the ability to break the analysis into
file-sized chunks that can be assembled later.
Fortunately, \js is already written using files as modules, so we modularize our
analysis simply by asking that modules have explicitly typed signatures. (We
still infer types for the vast majority of code ``local'' to modules.)
Coincidentally, developers consider this good software engineering practice
anyway.

With modularity, we can aggressively parallelize our analysis. Furthermore, when
files change, we can incrementally re-analyze only those files that depend on
the changed files, and avoid re-analysis when their typed signatures have not
changed. Together, these choices have helped scale the analysis to millions of
lines of code.

Under the hood, \flow relies on a high-throughput low-latency systems
infrastructure that enables distribution of tasks among parallel workers, and
communication of results in parallel via shared memory. Combined with an
architecture where the analysis of a codebase is updated automatically in the
background on file system changes, \flow delivers near-instantaneous feedback as
the developer edits and rebases code, even in a large repository.

\subsection{Contributions}

Overall, this paper (and our work on \flow) shows how, through careful design
and implementation, type checking for \js can be both extremely fast and precise
enough in practice. We make the following contributions.

\begin{enumerate}
\item We identify a lightweight form of type refinement as a crucial feature for
  supporting a variety of common \js idioms in practice. We formalize inference
  to support refinements in a core fragment of \js containing higher-order
  functions, mutable variables, runtime tests, and capture-by-reference. Knowing
  when to invalidate a refinement can be quite tricky in this setting
  (Sections \ref{sec:cons:gen} and \ref{sec:cons:prop}).
  To the best of our knowledge, no prior work has formalized this subset of
  features in a type inference for JavaScript.
  We also discuss our implementation of type inference as a system of set-based
  constraints combined with unification for optimization
  (\secref{sec:inf:impl}).
  We define a runtime semantics (\secref{sec:semantics}) and
  prove our system safe (\secref{sec:type:soundness}) with
  respect to it.
\item We show how our inference system can be extended to check type
  annotations. Union types introduce some interesting complications
  (\secref{sec:annotations}). We then identify a lightweight way to use type
  annotations to modularize our analysis, piggybacking on existing best
  practices in \js development such as breaking a codebase into several small
  modules and documenting types at their boundaries (\secref{sec:modules-deps}).
\item We show how we can exploit modularity and dependency management to make
  \flow responsive at scale. We describe an algorithm for parallelizing
  % the analysis
  and incrementally updating the analysis when files change
  (\secref{sec:incremental}). We describe how we extend an existing system
  infrastructure for parallel computation and communication, and implement our
  algorithms on it to achieve high throughput and low latency
  (\secref{sec:parallel}).
\item We perform a thorough experimental evaluation of \flow on a codebase with
  millions of lines of \js. Through key metrics we demonstrate the behavior of
  various stages of type checking at scale,
  and validate our hypotheses on the precision gained by refinement tracking
              and our low annotation footprint.
  (\secref{sec:experiments}).
\end{enumerate}

We conclude with a discussion of related work
(\secref{sec:related}) and of limitations and threats to
validity (\secref{sec:limitations}).

\section{Language \flowcore}\label{sec:core}

We consider a minimal subset of \js that includes functions,
mutable variables, primitive values and
records.
Notably, we leave out data structures like dictionaries and
arrays, as well as object-oriented features like
\inlinejs{this},
methods, classes, and inheritance. These parts of the
language are mostly orthogonal to understanding refinements.
Their type inference, while interesting, is built on the same foundations,
and behave more or less similarly to
previous work---we can safely extend our model to include
them, without significantly complicating our guarantees.
% \flow implements all these features, as well as several
% other type system features not formalized in this paper.
%
Our focus is on formalizing type inference and refinement
strengthening, with the exception of refinements on mutable
fields that are not tracked through the heap.
While compact, this fragment is expressive enough to model the examples of
Section~\ref{sec:intro}---which illustrate how \flow uses predicate refinements
to reduce false positives, while remain sound with respect to variable updates.

%%%%%%%%%%%%%%%%%%%%%%%%%%%%%%%%%%%%%%%%%%%%%%%%%%%%%%%%%%%%%%%%%%%%%%%%%%%%%%%%
%%% Language Syntax
%%%%%%%%%%%%%%%%%%%%%%%%%%%%%%%%%%%%%%%%%%%%%%%%%%%%%%%%%%%%%%%%%%%%%%%%%%%%%%%%

\subsection{Syntax}\label{subsec:syntax}

\figref{fig:flow:expr} describes the language of
\emph{expressions} \expr and \emph{statements} \stmt.
Here, \const represents constants, and \evar and \evarb
range over program variables.

\begin{figure}[t]
\begin{langdefsmall}
%\langline{\const}{\inclusion}{\constset}{\textbf{Constants}}
%\vsep
%\\
%\langline{\evar, \evarb}{\inclusion}{\varset}{\textbf{Variables}}
%\vsep
%\\
\langline{\expr}{\production}{
     \evar
\sep \vconst
\sep \assign{\evar}{\expr}
\sep \arrow{\evar}{\stmt}{\expr}
\sep \efuncall{\expr_1}{\expr_2}
}{\textbf{Expressions}}
\\
\langline{}{\sep}{
     \objlit{\fieldsym_1}{\expr_1}{\fieldsym_n}{\expr_n}
\sep \fieldread{\expr}{\fieldsym}
\sep \fieldwrite{\expr_1}{\fieldsym}{\expr_2}
}{}
\\
\langline{}{\sep}{
     \predof{\evar}
\sep \binand{\expr_1}{\expr_2}
\sep \binor{\expr_1}{\expr_2}
\sep \unaryneg{\expr}
}{}
\vsep
\\
\langline{\stmt}{\production}{
     \expr
\sep \varassign{\evar}{\expr}
\sep \ite{\expr}{\stmt_1}{\stmt_2}
\sep \seq{\stmt_1}{\stmt_2}
\sep \skipstmt
}{\textbf{Statements}}
\end{langdefsmall}
\caption{\flowcore Syntax}
\label{fig:flow:expr}
\end{figure}

\paragraph{Expressions}
%
%\begin{align*}
%  \expr \production &      \evar
%                      \sep \assign{\evar}{\expr}
%                      \sep \arrow{\evar}{\stmt}{\expr}
%                      \sep \efuncall{\expr_1}{\expr_2}
%                      \sep \predof{\evar}
%                      \sep \binand{\expr_1}{\expr_2}
%                      \sep \binor{\expr_1}{\expr_2} \sep
%  \\ &                     \unaryneg{\expr}
%                      \sep \objlit{\fieldsym_1}{\expr_1}{\fieldsym_n}{\expr_n}
%                      \sep \fieldread{\expr}{\fieldsym}
%                      \sep \fieldwrite{\expr_1}{\fieldsym}{\expr_2}
%\end{align*}
%
%
We elide primitive operations (which may include
arithmetic operations). Constants include, \eg, numbers, strings, and
\inlinejs{undefined}.
The syntax \predof{\evar} draws from a fixed, possibly
infinite set of unary \emph{predicates} \primpred on \evar. These
model dynamic checks, such as
\inlinejs{typeof x === "number"},
\inlinejs{x === undefined},
\inlinejs{x} (testing if an expression is \emph{truthy}),
or model tests like \inlinejs{x.f === "nil"} on
records.  Note that in this system the last check
does not imply a predicate on the value of \inlinejs{x.f},
but rather on \inlinejs{x} itself.
The former would be a heap refinement, which
\flow only supports in a limited fashion, and which is
excluded from the formalism.

General-purpose functions (using the keyword \functionkw)
are complicated in JavaScript: they can be additionally used
as methods and as constructors. To simplify our exposition,
we restrict our attention to \emph{arrow} functions
(essentially lambdas). We assume that a function body
consists of a statement followed by the return of an
expression. Functions that do not explicitly return anything
can be thought of as implicitly returning \inlinejs{undefined}.
(\flow's
treatment of abnormal control flows via \inlinejs{return} is
also interesting, but we omit it here.)
We also include the logical conjunction (\code{\&\&}),
disjunction (\code{||}) and negation (\code{!}) operators, as
they are pervasive in \js and inform our refinement
strategy.

\paragraph{Statements}
We use \inlinejs{var} to introduce variables, and include statements
for conditional execution and sequencing.
We omit \inlinejs{const} because it is
much simpler than \inlinejs{var}, since refinements never need to be invalidated.
We also omit \inlinejs{let}-bound variables. Their main difference with
\inlinejs{var}-bound variables is in scoping rules, so handling them does
not add any insight to our type system overview.
Finally, we omit \inlinejs{while}; although it can be encoded with
\inlinejs{if} and recursion, \flow's treatment of it is more precise.
%
%We assume that all variables in a program are alpha-renamed to be distinct.

\paragraph{Assumption}
We assume an $\alpha$-renaming pre-pass over the program's AST that
guarantees that each variable definition point
(which is either a \inlinejs{var} statement or an arrow definition)
introduces a unique variable identifier.
This is a fairly straightforward transformation for any
preprocessor that helps avoid unintentional capture of
variables in exported closures.
%leading to unnecessary precision loss.

%%%%%%%%%%%%%%%%%%%%%%%%%%%%%%%%%%%%%%%%%%%%%%%%%%%%%%%%%%%%%%%%%%%%%%%%%%%%%%%%
%%% Types & Effects
%%%%%%%%%%%%%%%%%%%%%%%%%%%%%%%%%%%%%%%%%%%%%%%%%%%%%%%%%%%%%%%%%%%%%%%%%%%%%%%%

\subsection{Types, Effects and Constraints}\label{sec:types}

\begin{figure}[t]
%
%\newcolumntype{C}{>{$}c<{$}}
%\newcolumntype{L}{>{$}l<{$}}
%
\begin{langdefsmall}
\langline{\tvar, \tvarb, \tvarc, \tvard}{\inclusion}{\tvarset}{\textbf{Type Vars}}
\vsep
\\
\langline{\tlit}{\production}{
     \tbase
\sep \tarrow{\tvar}{\effect}{\type}
\sep \tobj{\fieldsym_1}{\tvar_1}{\fieldsym_n}{\tvar_n}
}{\textbf{Type Literals}}
\vsep
\\
\langline{\type}{\production}{
     \tlit
\sep \tjoin{\type_1}{\type_2}
\sep \tvar
}{\textbf{Types}}
\vsep
\\
\langline{\effvar}{\inclusion}{\effvarset}{\textbf{Effect Vars}}
\vsep
\\
\langline{\efflit}{\production}{\effempty \sep \evar}{\textbf{Effect Literals}}
\vsep
\\
\langline{\effect}{\production}{
     \efflit
\sep \ejoin{\effect_1}{\effect_2}
\sep \effvar
}{\textbf{Effects}}
\vsep
\\
\langline{\use_{\type}}{\production}{
\tvar
\sep \calluse{\type}{\effvar}{\tvar}
\sep \getuse{\fieldsym}{\tvar}
\sep \setuse{\fieldsym}{\type}
\sep \preduse{\pred}{\tvar}
}{\textbf{Type Uses}}
\vsep
\\
\langline{\use_{\effect}}{\production}{
\effvar
\sep \havocuse{\env}
}{\textbf{Effect Uses}}
\vsep
\\
\langline{\pred}{\production}{
\primpred \sep \logneg{\primpred}
}{\textbf{Predicates}}
\vsep
\\
\langline{\cons}{\production}{
\flows{\type}{\use_{\type}}
\sep \flows{\effect}{\use_{\effect}}
}{\textbf{Constraints}}
\end{langdefsmall}
\caption{\flowcore Type, Effect and Constraint Syntax}
\label{fig:flow:constraint}
%\caption{\flowcore Type and Effect Syntax}
\label{fig:flow:types}
\end{figure}

The basic ingredients of our constraint system are \emph{types} \type and
\emph{effects} \effect. Their syntax is described in \figref{fig:flow:types}.
%\todo{I have a couple of problems with the grammar. (1) I really dislike the
%  reuse of $\type$ in call type uses. It should just be the specific function
%  type. If we're doing this we should also update the get and set type uses to
%  take singleton object types. Then pred can be the only ``outlier'' and that's
%  fine, it is indeed special. (2) I don't like the mix of type variables and
%  types in a grammar. They are not principled enough, we end up flowing things
%  to types in various places, and don't explain well why that's OK. We need a
%  general story around why this is OK, and then have only type variables
%  throughout.}
%

\paragraph{Types}
Types are ranged over by variables
\tvar, \tvarb, \etc taken from an enumerable set \tvarset.
The building blocks for constructing complex type structures
are \emph{type literals} \tlit.
%
% (resp. \emph{effect literals} \efflit).
%
These include primitive types \tbase
(\eg, \tnum, \tstring, and \tvoid for \inlinejs{undefined}),
arrow types \tarrow{\tvar}{\effect}{\type} for functions,
and record types \tobj{\fieldsym_1}{\tvar_1}{\fieldsym_n}{\tvar_n}.
% The record types match fields $\fieldsym_i$ to pairs of types $\rdtype\,|\,\wrtype$
% that correspond to the \emph{read} and \emph{write} types of each field, respctively.
%
Arrow types are annotated with an effect \effect which
describes a set of names \evar that may be assigned in the
function's body or transitively in code that is executed
when calling this function.  A more proper introduction of
effects follows.
Types also feature a binary operator \tjoin{}{} denoting the union of types.

\paragraph{Effects}
The effect we are interested in tracking here is variable
updates. Each language term is associated with an effect, as
we will see later in constraint generation. This is
(roughly) the set of variables that are (re)assigned within
this term.
The base constructors of effects are the empty effect
\effempty and variable symbols \evar, corresponding to the
variables that are updated.
Like types, effects also feature a binary operator \tjoin{}{} denoting union of effects.
Finally, effects are ranged over by variables \effvar
taken from an enumerable set \effvarset.

\paragraph{Environments}
An \emph{environment} \env binds variables \evar to entries
\entry{\type}{\tvar}, meaning that its most recent
assignment was of type \type, whereas the type variable
\tvar is used as the collective summary for all its (past,
current, and future) assignments.
Here \type is flow-sensitive---its value may change from one (flow-sensitive)
environment to another---whereas \tvar is invariant.  We also distinguish
between environment \emph{extension}---denoted
\envext{\env}{\evar}{\entry{\type}{\tvar}} (variable \evar is not bound in the
original environment \env), and environment \emph{update}---denoted
\envupd{\env}{\evar}{\entry{\type}{\tvar}} (variable \evar was bound in \env).
%
%$$ \env \production \envempty \sep \env, x \mapsto
%\entry{\type}{\tvar} $$
%
%In certain situations a more general form of environment
%entry \entry{\type}{\type'} might be used, \ie using a type
%$\type'$ instead of a type variable as the general type.
%This is a mere convenience and not a fundamental difference.
%\todo{EXPLAIN}

\paragraph{Predicates}
Key to our type refining process is the notion of predicates.
A predicate \pred is a
clause denoting a property of its implied argument.
In our setting, syntactically it can be a \emph{base}
predicate \primpred or its negation.
Base predicates describe properties of constructed or
primitive types.  For the remaining sections we will keep
these predicates abstract, but examples of these predicates
are the ones implied by checks of the form
$\inlinejs{typeof }\star\inlinejs{ === "string"}$,
$\inlinejs{typeof }\star\inlinejs{ === "number"}$,
$\star\inlinejs{.f === "null"}$, \etc, where $\star$ is to be
filled in with a
program variable.
%\todo{Why $\star$?}

%\startypeof\isnum, \isfun, \textsf{.f="nil"} (to denote records
%whose field \textsf{f}'s value is \texttt{"nil"}), \etc.
%
% , or a pair \anglepair{\fieldsym}{\pred}, where the first
% part is a record field \fieldsym and the second a
% predicate on the inhabiting value of that field.

\paragraph{Constraints}
A \emph{constraint} \cons is a ``flow'' from a type \type
(resp. effect \effect) to a type \emph{use} $\use_{\type}$
(resp. an effect \emph{use} $\use_{\effect}$).
Flows from types to type uses generalize the notion of subtyping.
However, we chose to enforce some structural restrictions to
the forms that can appear on the right-hand side of
constraints, namely the uses.
Type and effect variables can appear as uses themselves.
We do not allow general types and effects to appear
as uses.
Instead they are wrapped by constructors that contain
information about the operations that caused the generation of such
constraints.
Uses account for data flow through function calls (\callusesym), object
operations (\getusesym, \setusesym), control flow refinement (\predusesym), and
refinement invalidation (\havocusesym).

The use \calluse{\type}{\effvar}{\tvar} corresponds to a function call with
argument type $\type$, resulting in type $\tvar$; the effect variable \effvar
models the effect of the target function. A constraint
\flows{\type'}{\calluse{\type}{\effvar}{\tvar}} looks up the parameter, return, and effect of $\type'$
and propagates \type to the parameter, the return to \tvar, and the effect to \effvar.
The uses for reading and writing to a field, \getuse{\fieldsym}{\tvar} and
\setuse{\fieldsym}{\type}, are straightforward.  A constraint
\flows{\type'}{\getuse{\fieldsym}{\tvar}}
(resp. \flows{\type'}{\setuse{\fieldsym}{\type}}) looks up the field \fieldsym of
$\type'$ and propagates the result to \tvar (resp. \type to the result).
The use \preduse{\pred}{\tvar} is used to
\emph{refine} an incoming type using \emph{predicate} \pred,
resulting in fresh type variable \tvar.
In other words, a constraint
\flows{\type}{\preduse{\pred}{\tvar}} will only allow the
parts of $\type$ that satisfy \pred to flow to \tvar.
Finally, for refinement invalidation we introduce \havocusesym, which takes an environment argument
\env. A constraint \flows{\evar}{\havocuse{\env}} says that the variable \evar
may be updated, so that any refinement involving \evar in \env must be
invalidated. This will be discussed later on in greater detail
(\secref{sec:cons:prop}).
%
%The precise usage of each of these uses will become clearer
%in \secref{sec:cons:prop}.

%\paragraph{Constraint graph}
%A set of flow constraints \cset can be thought of as a
%constraint graph, where variables, literals and uses are the nodes and
%the constraints among them are the edges.
%We will refer to the set of type (resp. effect) literals that reach a variable
%through a flow as the \emph{lower bounds} of the variable.

\section{Constraint System}
\label{sec:cons:sys}

We present the static semantics of our formal fragment by means of a constraint
generating type inference scheme. Our constraints encode type safety obligations
that arise as values flow to operations throughout the program.

%%%%%%%%%%%%%%%%%%%%%%%%%%%%%%%%%%%%%%%%%%%%%%%%%%%%%%%%%%%%%%%%%%%%%%%%%%%%%%%%
%%% Static Semantics - CG
%%%%%%%%%%%%%%%%%%%%%%%%%%%%%%%%%%%%%%%%%%%%%%%%%%%%%%%%%%%%%%%%%%%%%%%%%%%%%%%%

\subsection{Constraint Generation}\label{sec:cons:gen}

%\figref{fig:flow:cons:gen:expr}
The core type inference judgments for expressions and
statements in \flowcore are:
\begin{align*}
\jexpr{\env}{\expr}{\type}{\effect}{\predmap}{\cset}{\env'}
&&
\jstmt{\env}{\stmt}{\effect}{\cset}{\env'}
\end{align*}
The derivation of a judgment relies on a set of constraints
\cset as proof obligations, which appear on the right of the
$\triangleright$ symbol.
We use $\cup$ for the union of two constraint sets and $\bigcup$ for
the union of a number of constraint sets ranged over by the
index in the subscript of the arguments set (\eg, $\bigcup\nolimits_{i=1}^{k}\cset_i$).
For both expressions and statements this judgment is
\emph{flow-sensitive} which is achieved by introducing an
output environment $\env'$, in addition to the input
environment \env.
The set of variable names assigned in \expr or \stmt is modeled by
\effect.
The case of expressions has two additional byproducts: a
\emph{type} \type and a \emph{predicate mapping} \predmap.
The latter includes bindings from names to predicates that
must hold when \expr is truthy, and symbolic operations over
them (explained later):

\begin{langdefinline}
  \langline{\predmap}{\production}{\predmapempty}{empty mapping}
  \\
  \langline{}{\sep}{\singletonpredmap{\evar}{\pred}}{variable binding}
  \\
  \langline{}{\sep}{\predmap_1 \wedge \predmap_2}{conjunction}
  \\
  \langline{}{\sep}{\predmap_1 \vee \predmap_2}{disjunction}
  \\
  \langline{}{\sep}{\neg \predmap}{negation}
  \\
  \langline{}{\sep}{\predmap \backslash \effect}{exclude effect}
\end{langdefinline}

Below we describe constraint generation in more detail,
starting from rules handling variables,
functions, and calls
(\figref{fig:flow:cons:gen:expr}).

\begin{figure}[t]
\judgementHead{Expression Constraint Generation}{\jexpr{\env}{\expr}{\type}{\effect}{\predmap}{\cset}{\env'}}
\begin{mathpar}
\inferruleright{\cgconst\label{rule:cgconst}}{
  }{
  \jexpr{\env}{\const}{\tbase_{\const}}{\effempty}{\predmapempty}{\csetemp}{\env}
}
\and
\inferruleright{\cgvar\label{rule:cgvar}}{
  \idxis{\env}{\evar}{\entry{\type}{\tvar}}
}{
\jexpr{\env}{\evar}{\type}{\effempty}{\singletonpredmap{\evar}{\truthy}}{\csetemp}{\env}
}
\and
\inferruleright{\cgassign\label{rule:cgassign}}{
  \jexpr{\env}{\expr}{\type}{\effect}{\predmap}{\cset}{\env'}
  \\
  \idxis{\env'}{\evar}{\entry{\type_0}{\tvar}}
}{
  \jexpr{\env}{\assign{\evar}{\expr}}{\type}
    {\effectconcat{\effect}{\evar}}
    {\logforget{\predmap}{\evar}}
    {\consconcat{\cset}{\uniflow{\type}{\tvar}}}
    {\envupd{\env'}{\evar}{\entry{\type}{\tvar}}}
}
\and
\inferruleright{\cgfun\label{rule:cgfun}}{
  %\eraseis{\env}{\env_0}
  %\\
  \is{\many{\evar_i}}{\locals{\stmt}}
  \\
  \fresh{\tvar, \many{\tvar_i}}
  \\
  \is{\env_1}{\envexts{\envext{\erase{\env}}{\evar}{\initarg{\tvar}}}{\evar_i}{\entry{\tvoid}{\tvar_i}}}
  \\
  \jbody{\env_1}{\body{\stmt}{\expr}}{\type}{\effect}{\cset}{\env_2}
}{
  \jexpr{\env}{\arrow{\evar}{\stmt}{\expr}}
  {\tarrow{\tvar}{\logforget{\effect}{\evar,\many{\evar_i}}}{\type}}
    {\effempty}
    {\predmapempty}
    {\cset}{\env}
}
\and
\inferruleright{\cgcall\label{rule:cgcall}}{
  \jexpr{\env}{\expr_1}{\type_1}{\effect_1}{\predmap_1}{\cset_1}{\env_1}
  \\
  \jexpr{\env_1}{\expr_2}{\type_2}{\effect_2}{\predmap_2}{\cset_2}{\env_2}
  \\
  \fresh{\tvar, \effvar}
  \\
  \widen{\env_2}{\env_3}{\cset_3}
  \\
  \is{\effectconcatthree{\effect_1}{\effect_2}{\effvar}}{\effect}
  \\
  \is{\consconcatfour{\cset_1}{\cset_2}{\cset_3}
    {\mkset{\concat{\flows{\effvar}{\havocuse{\env_3}}}
      {\flows{\type_1}{\calluse{\type_2}{\effvar}{\tvar}}}}
    }
  }{\cset}
}{
  \jexpr{\env}{\efuncall{\expr_1}{\expr_2}}{\tvar}{\effect}{\predmapempty}{\cset}{\env_3}
}
\end{mathpar}
\caption{Expression Constraint Generation in \flowcore (Variables and Functions)}
\label{fig:flow:cons:gen:expr}
\end{figure}

%\todo{EXPLAIN BIG UNIONS OF CONSTRAINTS}

\paragraph{Variables}
The rules for reading and assigning a local variable
(\hyperref[rule:cgvar]{\cgvar} and
\hyperref[rule:cgassign]{\cgassign}) involve looking up
and updating the current type for the variable in the
outgoing environment.
This part is what makes this system
\emph{flow-sensitive}.
A flow-insensitive system would use a single environment for
each judgment.
The assigned type would be merged to the same
type used for the variable under update in the first place,
making it less precise.
In addition, reading a variable introduces a truthy
predicate on it. This is useful under specific contexts
such as when the variable is used as the condition part of an if-branch.
Conversely, writing a variable forgets any
refinement coming from expression \expr that concerns \evar.

\paragraph{Arrow Functions}
Rule~\hyperref[rule:cgfun]{\cgfun} handles arrow
functions by approximating the environment at the beginning
with the \emph{flow-insensitive erasure} of the current
environment (since we do not know where this function will
be called).
The meta-function \erasesym computes this new environment by
mapping each \envbinding{\evar}{\entry{\type}{\tvar}} to
\envbinding{\evar}{\entry{\tvar}{\tvar}}
(\figref{fig:flow:aux:func}).
In addition, to capture the hoisting of variables defined
within the scope of the function to the beginning of the
function body, we introduce the meta-function \localssym
that takes as argument a statement \stmt and returns
all variable identifiers $\many{\evar_i}$ declared in \stmt.
Each variable $\evar_i$ is bound to the undefined type \tvoid
(and a fresh general type $\tvar_i$), since its definition is hoisted to
the top of the function body and initialized to
\inlinejs{undefined}.
The inferred arrow type carries the effect \effect of the body of
the function. Note that we are removing the formal parameter \evar
and local variables $\many{\evar_i}$
from the effect as they are only visible within the body of the defined
arrow.

%Note that due to $\alpha$-renaming we retain
%full precision even though we are including the symbol \evar of
%the function's parameter in effect \effect (in the event
%that it gets updated in the arrow body). The reason is
%that by performing this transformation, identifier \evar
%cannot appear outside the scope of the arrow function. As
%such it cannot affect the refining process.
%%
%As an optimization we could consider removing it to keep our
%effect as minimal as possible.

\begin{figure}[t!]
\judgementHeadTwo{Auxiliary Functions}{\eraseis{\env}{\env'}}{\widen{\env}{\env'}{\cset}}
\begin{mathpar}
\inferruleright{\terasee}{
}{
  \eraseis{\envempty}{\envempty}
}
\and
\inferruleright{\terasec}{
  \eraseis{\env}{\env'}
}{
  \eraseis{\envext{\env}{\evar}{\entry{\type}{\tvar}}}
     {\envext{\env'}{\evar}{\entry{\tvar}{\tvar}}}
}
\\
\inferruleright{\twidene\label{rule:twidene}}{
}{
  \widen{\envempty}{\envempty}{\csetemp}
}
\and
\inferruleright{\twidenc\label{rule:twidenc}}{
  \widen{\env}{\env_0}{\cset_0}
  \\
  \fresh{\tvarb}
  %\\
  %\is{\cset}
  %  {\consconcat{\cset_0}
  %    {\mkset{\concat{\flows{\type}{\tvarb}}
  %           {\flows{\tvarb}{\tvar}}}}}
}{
  \widen{\envext{\env}{\evar}{\entry{\type}{\tvar}}}
     {\envext{\env_0}{\evar}{\entry{\tvarb}{\tvar}}}
     {\consconcat{\cset_0}
      {\mkset{\concat{\flows{\type}{\tvarb}}
             {\flows{\tvarb}{\tvar}}}}}
}
\end{mathpar}
\vspace{0.5em}
\\
\judgementHeadTwo{Auxiliary Environment Operations}{\jenvjoin{\env_1}{\env_2}{\env}}
  {\jenvstrengthen{\env}{\predmap}{\env'}{\cset}}
\begin{mathpar}
\inferruleright{\envjoine\label{rule:envjoine}}{
}{
  \jenvjoin{\envempty}{\envempty}{\envempty}
  %\jenvjoin{\env}{\envempty}{\env}{\csetemp}
}
\and
%
%\inferruleright{\envjoina\label{rule:envjoina}}{
%  \jenvjoin{\env_1}{\env_2}{\env}{\cset}
%  \\
%  \notinset{\evar}{\envdom{\env_1}}
%}{
%  \jenvjoin
%    {\env_1}
%    {\parens{\envext{\env_2}{\evar}{\entry{\type}{\tvar}}}}
%    {\envext{\env}{\evar}{\entry{\type}{\tvar}}}
%    {\cset}
%}
%\\
%%
\inferruleright{\envjoinc\label{rule:envjoinc}}{
  \jenvjoin{\env_1}{\env_2}{\env}
  %\\
  %\fresh{\tvarb}
}{
  \jenvjoin
    {\parens{\envext{\env_1}{\evar}{\entry{\type_1}{\tvar}}}}
    {\parens{\envext{\env_2}{\evar}{\entry{\type_2}{\tvar}}}}
    {\envext{\env}{\evar}{\entry{\parens{\tjoin{\type_1}{\type_2}}}{\tvar}}}
    %{\parens{\envext{\env}{\evar}{\entry{\tvarb}{\tvar}}}}
    %{\cset}
    %{\consconcat{\cset}{\mkset{\concat{\flows{\type_1}{\tvarb}}{\flows{\type_2}{\tvarb}}}}}
}
\and
\inferruleright{\refemp\label{rule:refemp}}{
}{
  \jenvstrengthen{\env}{\predmapempty}{\env}{\csetemp}
}
\and
\inferruleright{\refsingle\label{rule:refsingle}}{
  \idxis{\env}{\evar}{\entry{\type}{\tvar}}
  \\
  \fresh{\tvarb}
}{
  \jenvstrengthen{\env}
    {\singletonpredmap{\evar}{\pred}}
    {\envupd{\env}{\evar}{\entry{\tvarb}{\tvar}}}
    {\uniflow{\type}{\preduse{\pred}{\tvarb}}}
}
\and
\inferruleright{\refand\label{rule:refand}}{
  \jenvstrengthen{\env}{\predmap_1}{\env_1}{\cset_1}
  \\
  \jenvstrengthen{\env_1}{\predmap_2}{\env_2}{\cset_2}
}{
  \jenvstrengthen{\env}
    {\parens{\predmapand{\predmap_1}{\predmap_2}}}
    {\env_2}{\consconcat{\cset_1}{\cset_2}}
}
% refine or
\and
\inferruleright{\refor\label{rule:refor}}{
  \jenvstrengthen{\env}{\predmap_1}{\env_1}{\cset_1}
  \\
  \jenvstrengthen{\env}{\predmap_2}{\env_2}{\cset_2}
  %\\
  %\jenvjoin{\env_1}{\env_2}{\env_3}% {\cset}
}{
\jenvstrengthen{\env}{\parens{\predmapor{\predmap_1}{\predmap_2}}}{\envjoin{\env_1}{\env_2}}
  {\consconcat{\cset_1}{\cset_2}} % {\cset}}
}
\and
\inferruleright{\refeffect\label{rule:refeffect}}{
  \jenvstrengthen{\env}{\predmap}{\env_1}{\cset_1}
  \\
  \widen{\env_1}{\env_2}{\cset_2}
  \\
  \is{\env_3}{\setcompr{\envbinding{\evar}{\entry{\tvarb}{\type}}}
    {\concat{\inset{\envbinding{\evar}{\entry{\type}{\tvar}}}{\env}}
      {\inset{\envbinding{\evar}{\entry{\tvarb}{\tvar}}}{\env_2}}}}
}{
  \jenvstrengthen{\env}{\logforget{\predmap}{\effect}}{\env_2}
  {\consconcatthree{\cset_1}{\cset_2}{\uniflow{\effect}{\havocuse{
    \env_3 % \inventry{\env}{\env_2}
  }}}}
}
\end{mathpar}
\caption{Auxiliary Functions for Function Logistics and Environment Operations in \flowcore}
\label{fig:flow:aux:func}
\label{fig:flow:env:aux}
\end{figure}

\paragraph{Calls}
Rule~\hyperref[rule:cgcall]{\cgcall} handles calls. We
approximate the outgoing environment with a
\emph{flow-sensitive widening} of the current environment
(instead of pessimistically erasing everything in scope).
The meta-function \endsym (\figref{fig:flow:aux:func}) computes this new
environment $\env'$ by mapping each \envbinding{\evar}{\entry{\type}{\tvar}} to
\envbinding{\evar}{\entry{\tvarb}{\tvar}} where \tvarb is a fresh type variable
such that \flows{\type}{\flows{\tvarb}{\tvar}}.  For any variable \evar that
gets assigned during the function call, we must fall back to its erasure, \ie, we
must flow \tvar to \tvarb.  For now this is achieved by flowing the effect
\effvar of the call to \havocuse{\env'}. The actual erasure happens later at
constraint propagation~(\secref{sec:cons:prop}), when the type of the receiver
function is known and the incoming effect is no longer abstract.
As we show in \secref{sec:cons:prop}, when a function
type flows to \calluse{\type_2}{\effvar}{\tvar}, the effect
\effvar is instantiated with the actual effect variables
\evar carried over by the incoming function type. These
variables trigger the erasure.

\paragraph{Environment Operations}

Before delving into the remaining typing rules, we introduce
some operations on environments
(\figref{fig:flow:env:aux}).

Ther first one is \emph{environment join} (\join),
a commutative operator that
computes the \emph{least upper bound} of a pair of
environments with the same domain.
Type entries bound to the same symbol in the
input environments need to refer to the same program
variable. This requirement allows us to assume that the general
type of a variable \evar bound in both environments will be
the same.
%
%Also, suppose this common variable \evar is bound to entries
%\entry{\type_1}{\tvar} and \entry{\type_2}{\tvar}. Then the
%outgoing environment will bind \evar to
%\entry{\parens{\tjoin{\type_1}{\type_2}}}{\tvar}.
%
%However, notice that rule
%\hyperref[rule:envjoinb]{\envjoinb} is slightly more
%involved in that it binds \evar in the outgoing environment
%to fresh variable \tvarb, constrained appropriately. The
%reason we prefer this formulation is to respect the
%\emph{monopolarity invariant} as stated in
%\propref{prop:monopolarity:inv}. This is merely a
%convenience that simplifies reasoning, without changing the
%expressiveness or soundness of our system.
%
%\todo{We might not need this at all... Tweak havoc and all
%other ops that produce constraints that might not respect
%the structural restrictions \ie uses can only be \tvar, ...
%In those cases we can use an equivalent subsumption rule.}

The next operation we define is \emph{environment
refinement} ($::$).
The semantics of a refinement \predmap is defined by how it
refines environments through the constraint-producing judgment
\jenvstrengthen{\env}{\predmap}{\env'}{\cset}, where an
environment \env is strengthened by the predicates in
\predmap and result in an environment $\env'$, potentially
including fresh variables that are constrained in \cset.
When \predmap is \singletonpredmap{\evar}{\pred}, we update the
relevant binding in the environment \env to a fresh type
\tvarb that is the result of the predicate refinement of the
initial type \type with \pred
(Rule~\hyperref[rule:refsingle]{\refsingle}).
The rules that handle the typical logical operators
(\hyperref[rule:refand]{\refand} and
\hyperref[rule:refor]{\refor}) are straightforward.
%The
%latter rule depends on environment joins that were introduced
%earlier.

Refinements can be invalidated by effects.
In Rule~\hyperref[rule:refeffect]{\refeffect}, we first
refine \env by \predmap, and then apply the effect \effect
through the ``havoc'' mechanism on the resulting environment
$\env_1$.
There is a slight discrepancy in the way this mechanism is
applied in this case compared to function calls, since we only want to revert the
effect of the refinement caused by \predmap, and not fall
back to the most general type.
If ``havoc'' is triggered, then for every variable \evar
bound in $\env_3$, that happens to reach effect \effect, we
only flow type \type (that \evar was bound to in \env before
the refinement) to \tvarb, instead of the most general type
\tvar.
It appears here that we are locally breaking our invariant on the form of
environments, by allowing entries with types \type in the place of the most
general type summary (exponent).  This is a benign violation of our restriction
on environments since the constructed environment $\env_3$ is not used as the
input environment in a type inference judgment, but rather as the argument to
the ``havoc'' use. As we will see later, this context does not produce any
flows towards \type. The use of an environment here is in fact a mere syntactic convenience.

%
%This is why ``havoc'' is perform after we transform $\env_2$
%with \inventrysym,
%\begin{align*}
%  \inventry{\envempty}{\envempty} & \equals {\envempty}
%  \\
%  \inventry{\parens{\envext{\env}{\evar}{\entry{\type}{\tvar}}}}{\parens{\envext{\env'}{\evar}{\entry{\tvarb}{\tvar}}}}
%  & \equals
%    \envext{\inventry{\env}{\env'}}{\evar}{\entry{\tvarb}{\type}}
%\end{align*}

Finally, we can have refinements with logical connectives.
The negation of \singletonpredmap{\evar}{\primpred} is simply
\singletonpredmap{\evar}{\neg\primpred}.  Otherwise, we push
negations inward as much as possible, by applying the
laws:
\begin{align*}
  %\logneg{\parens{\singletonpredmap{\evar}{\primpred}}} &\equals \singletonpredmap{\evar}{\neg\primpred}
  %&
  \neg(\predmap_1 \wedge \predmap_2) & \equals \neg \predmap_1 \vee \neg \predmap_2
                                     & \neg(\predmap \backslash \effect) & \equals \neg \predmap \backslash \effect
  \\
  %\logneg{\parens{\singletonpredmap{\evar}{\anglepair{\fieldsym}{\pred}}}} &\equals \singletonpredmap{\evar}{\anglepair{\fieldsym}{\neg\pred}}
  \neg(\predmap_1 \vee \predmap_2) & \equals \neg \predmap_1 \wedge \neg \predmap_2
                                   & \neg(\neg \predmap) & \equals \predmap
\end{align*}

%
% The remaining connectives are handled as follows:
% %
% \begin{mathpar}
% % refine or
% \inferrule{
%   \is{\refine{\env}{\predmap_1}}{\env_1}
%   \\
%   \is{\refine{\env}{\predmap_2}}{\env_2}
% }{
%   \is{\refine{\env}{\parens{\logor{\predmap_1}{\predmap_2}}}}{\ejoin{\env_1}{\env_2}}
% }
% \and
% % refine and
% \inferrule{
%   \is{\refine{\env}{\predmap_1}}{\env_1}
%   \\
%   \is{\refine{\env_1}{\predmap_2}}{\env_2}
% }{
%   \is{\refine{\env}{\parens{\logand{\predmap_1}{\predmap_2}}}}{\env_2}
% }
% \end{mathpar}
%
%

\paragraph{Logical operations}
\begin{figure}[t]
\judgementHead{Expression Constraint Generation}{\jexpr{\env}{\expr}{\type}{\effect}{\predmap}{\cset}{\env'}}
\begin{mathpar}
\inferruleright{\cgand\label{rule:cgand}}{
  \jexpr{\env}{\expr_1}{\type_1}{\effect_1}{\predmap_1}{\cset_1}{\env_1}
  \\
  \jenvstrengthen{\env_1}{\predmap_1}{\env_1'}{\cset_{2}}
  \\
  \jexpr{\env_1'}{\expr_2}{\type_2}{\effect_2}{\predmap_2}{\cset_3}{\env_2}
  \\
  \fresh{\tvar_1}
  \\
  \is{\logand{\parens{\logforget{\predmap_1}{\effect_2}}}{\predmap_2}}{\predmap}
  \\
  \jenvstrengthen{\env_1}{\logneg{\predmap_1}}{\env_1''}{\cset_4}
  \\
  \jenvjoin{\env_1''}{\env_2}{\env'} % {\cset_5}
}{
\jexpr{\env}{\binand{\expr_1}{\expr_2}}{\tjoin{\tvar_1}{\type_2}}
{\effectconcat{\effect_1}{\effect_2}}{\predmap}
{\consconcat{\medcupl{i=1}{4}\cset_i}{\mkset{\flows{\type_1}{\preduse{\falsy}{\tvar_1}}}}}
  {\env'}
}
\and
\inferruleright{\cgor\label{rule:cgor}}{
  \jexpr{\env}{\expr_1}{\type_1}{\effect_1}{\predmap_1}{\cset_1}{\env_1}
  \\
  \jenvstrengthen{\env_1}{\logneg{\predmap_1}}{\env_1'}{\cset_2}
  \\
  \jexpr{\env_1'}{\expr_2}{\type_2}{\effect_2}{\predmap_2}{\cset_3}{\env_2}
  \\
  \fresh{\tvar_1}
  \\
  \is{\logor{\parens{\logforget{\predmap_1}{\effect_2}}}{\predmap_2}}{\predmap}
  \\
  \jenvstrengthen{\env_1}{\predmap_1}{\env_1''}{\cset_4}
  \\
  \jenvjoin{\env_1''}{\env_2}{\env'} % {\cset_5}
}{
  \jexpr{\env}{\binor{\expr_1}{\expr_2}}
    {\tjoin{\tvar_1}{\type_2}}{\effectconcat{\effect_1}{\effect_2}}{\predmap}
    {\consconcat{\medcupl{i=1}{4}\cset_i}{\mkset{\flows{\type_1}{\preduse{\truthy}{\tvar_1}}}}}{\env'}
}
\and
\inferruleright{\cgnot\label{rule:cgnot}}{
  \jexpr{\env}{\expr}{\type}{\effect}{\predmap}{\cset}{\env'}
}{
  \jexpr{\env}{\unaryneg{\expr}}{\tbool}{\effect}{\logneg{\predmap}}{\cset}{\env'}
}
\and
\inferruleright{\cgpred\label{rule:cgpred}}{
}{
  \jexpr{\env}{\predof{\evar}}{\tbool}{\effempty}{\singletonpredmap{\evar}{\primpred}}{\csetemp}{\env}
}
\end{mathpar}
\caption{Expression Constraint Generation in \flowcore (Logical Operations)}
\label{fig:flow:cons:gen:expr:log}
\end{figure}
The rules of \figref{fig:flow:cons:gen:expr:log}
are interesting for their effect on predicate refinement.
%
%% AND
%
In Rule~\hyperref[rule:cgand]{\cgand}, $\expr_2$ is
analyzed under the refinement $\predmap_1$ (since otherwise
it would not be evaluated).
The type inferred for the entire expression contains
components from both $\expr_1$ and $\expr_2$. From the
former it contains type $\tvar_1$ that is a version of
$\type_1$ refined by the \falsy predicate, since it
corresponds to the case where $\expr_1$ is actually falsy.
From the latter it includes the type $\type_2$ as is.
For the output environment we follow a similar strategy. The
component that corresponds to $\expr_1$'s output environment
will be refined with $\neg\predmap_1$, since otherwise we
would be using the environment corresponding to $\expr_2$.
With respect to the output predicate mapping,
parts of $\predmap_1$ that apply
on names written in $\expr_2$ are forgotten when taking the
conjunction with $\predmap_2$.
%
%% OR
%
Rule~\hyperref[rule:cgor]{\cgor} is the dual of the
above rule, and works similarly.
Finally, rules~\hyperref[rule:cgnot]{\cgnot} and
\hyperref[rule:cgpred]{\cgpred} are straightforward.
The former just negates the refinement and the latter
introduces a refinement from a runtime test \primpred.

\paragraph{Records}
\begin{figure}[t]
\judgementHead{Expression Constraint Generation}{\jexpr{\env}{\expr}{\type}{\effect}{\predmap}{\cset}{\env'}}
\begin{mathpar}
%
%\mprset{vskip=1ex}
%
\inferruleright{\cgrecord\label{rule:cgrecord}}{
  \isequiv{\env}{\env_0}
  \\
  \forallindot{i}{[1,n]}{\jexpr{\env_{i-1}}{\expr_i}{\type_i}{\effect_i}{\predmap_i}{\cset_i}{\env_i}}
  \\
  \fresh{\tvar_i}
}{
\jexpr{\env}
  %{\objlit{\fieldsym_1}{\expr_1}{\fieldsym_n}{\expr_n}}
  {\objlitmany{\fieldsym_i}{\expr_i}}
  %{\tobj{\fieldsym_1}{\tvar_1}{\fieldsym_n}{\tvar_n}}
  {\tobjmany{\fieldsym_i}{\tvar_i}}
  {\medsqcupl{i=1}{n}\effect_i}
  {\predmapempty}{\consconcat{\medcupl{i=1}{n}\cset_i}{\medcupl{i=1}{n}\uniflow{\type_i}{\tvar_i}}}{\env_n}
}
\and
\inferruleright{\cgfldrd\label{rule:cgfldrd}}{
  \jexpr{\env}{\expr}{\type}{\effect}{\predmap}{\cset}{\env'}
  \\
  \fresh{\tvar}
  %\mbox{$
  %\predmap\equiv\begin{cases*}
  %  \binding{\evar}{\anglepair{\fieldsym}{\truthy}} & \text{if \isequiv{\expr}{\evar}} \\
  %  \predmapempty & \text{o.w.}
  %\end{cases*}
  %$}
  % \predmap\equiv\parens{\isequiv{\expr}{\evar}}\;\textsf{?}\;
  %   \parens{\binding{\evar}{\anglepair{\fieldsym}{\truthy}}}\;\textsf{:}\;
  %   \predmapempty
}{
  \jexpr{\env}{\fieldread{\expr}{\fieldsym}}{\tvar}{\effect}{\predmapempty}
  {\consconcat{\cset}{\uniflow{\type}{\getuse{\fieldsym}{\tvar}}}}
  {\env'}
}
\and
\inferruleright{\cgfldwr\label{rule:cgfldwr}}{
    \jexpr{\env}{\expr_1}{\type_1}{\effect_1}{\predmap_1}{\cset_1}{\env_1}
    \\
    \jexpr{\env_1}{\expr_2}{\type_2}{\effect_2}{\predmap_2}{\cset_2}{\env_2}
}{
  \jexpr{\env}{\fieldwrite{\expr_1}{\fieldsym}{\expr_2}}{\type_2}
    {\effectconcat{\effect_1}{\effect_2}}
    {\predmapempty}
    {\consconcatthree{\cset_1}{\cset_2}{\uniflow{\type_1}{\setuse{\fieldsym}{\type_2}}}}
    {\env_2}
}
\end{mathpar}
\caption{Expression Constraint Generation in \flowcore (Records)}
\label{fig:flow:cons:gen:expr:rec}
\end{figure}
The rules of \figref{fig:flow:cons:gen:expr:rec}
for record type inference are mostly routine.
During record creation the initializer types flow to the newly constructed record
literal type. Subsequent assignments of type \type
to a field \fieldsym widen the type of \fieldsym
by introducing flows to the use \setuse{\fieldsym}{\type}.

In practice, \flow follows a slightly stricter approach. It ``fixes'' the type
of an object at initialization and checks that all subsequent writes adhere to
this type. This essentially amounts to checking for type annotations
which is out of scope in this section of type inference.

\paragraph{Statements}
\begin{figure}[t]
\judgementHead{Statement Constraint Generation}{\jstmt{\env}{\stmt}{\effect}{\cset}{\env'}}
\begin{mathpar}
\inferruleright{\cgexp\label{rule:cgexp}}{
  \jexpr{\env}{\expr}{\type}{\effect}{\predmap}{\cset}{\env'}
}{
  \jstmt{\env}{\expr}{\effect}{\cset}{\env'}
}
\and
\inferruleright{\cgvardecl\label{rule:cgvardecl}}{
  \jexpr{\env}{\assign{\evar}{\expr}}{\type}{\effect}{\predmap}{\cset}{\env'}
  %\jexpr{\env}{\expr}{\type}{\effect}{\predmap}{\cset}{\env'}
  %\\
  %\idxis{\env'}{\evar}{\entry{\type_0}{\tvar}}
}{
  %\jstmt{\env}{\varassign{\evar}{\expr}}{\effectconcat{\effect}{\evar}}
  % {\consconcat{\cset}{\uniflow{\type}{\tvar}}}{\envupd{\env'}{\evar}{\entry{\type}{\tvar}}}
  \jstmt{\env}{\varassign{\evar}{\expr}}{\effect}{\cset}{\env'}
}
\and
\inferruleright{\cgif\label{rule:cgif}}{
  \jexpr{\env}{\expr}{\type}{\effect}{\predmap}{\cset_1}{\env'}
  \\
  \jenvstrengthen{\env'}{\predmap}{\env_1}{\cset_2}
  \\
  \jstmt{\env_1}{\stmt_1}{\effect_1}{\cset_3}{\env_1'}
  \\
  \jenvstrengthen{\env'}{\negpredmap{\predmap}}{\env_2}{\cset_4}
  \\
  \jstmt{\env_2}{\stmt_2}{\effect_2}{\cset_5}{\env_2'}
  \\
  \jenvjoin{\env_1'}{\env_2'}{\env''} % {\cset_6}
}{
  \jstmt{\env}{\ite{\expr}{\stmt_1}{\stmt_2}}{\effectconcatthree{\effect}{\effect_1}{\effect_2}}
  {\medcupl{i=1}{5}\cset_i}{\env''}
}
\and
\inferruleright{\cgseq\label{rule:cgseq}}{
  \jstmt{\env}{\stmt_1}{\effect_1}{\cset_1}{\env_1}
  \\
  \jstmt{\env_1}{\stmt_2}{\effect_2}{\cset_2}{\env_2}
}{
  \jstmt{\env}{\seq{\stmt_1}{\stmt_2}}{\effectconcat{\effect_1}{\effect_2}}{\consconcat{\cset_1}{\cset_2}}{\env_2}
}
\end{mathpar}
\caption{Statement Constraint Generation in \flowcore}
\label{fig:flow:cons:gen:stmt}
\end{figure}
The main difference compared to the respective expression rule is the
omission of the assigned type and the refinement predicate.
Rule~\hyperref[rule:cgvardecl]{\cgvardecl} reuses the rule
for assignment that we saw earlier, since due to
variable hoisting, \evar is already in scope.
Rule~\hyperref[rule:cgif]{\cgif} handles conditional statements. This rule uses
the refinement \predmap for the conditional expression \expr to refine the
environments that are used to check each branch, with the appropriate sign in
each case. The output environment is the join of the environments at the end of
each branch.

\subsection*{Example}
% \begin{example}
%\label{example:cons:gen}
%
We now examine how the rules of
\figsref{fig:flow:cons:gen:expr}{fig:flow:cons:gen:stmt}
handle
the code in \linesref{code:17}{code:41}
in \figref{fig:intro:code}.
In the following we keep the
produced type bindings on the left and constraint sets on the right.
Whenever, a general type (exponent) is not made explicit, this means
that it's not important for that particular binding.
Also, to avoid clutter, we do not define a new environment for each
program point, but rather introduce different versions for variables
that get updated or refined.

By applying Rule~\hyperref[rule:cgrecord]{\cgrecord} on
\mylineref{code:17}:
\begin{align}
  & \envbinding{\code{nil}}{\tobjsingleton{\texttt{kind}}{\tvar_1}}
  &&\issupseteq{\cset}{\uniflow{\texttt{"nil"}}{\tvar_1}}
\label{ex:cg:1}
\intertext{%%
Here, \code{"nil"} is the string literal type
denoting the exact string \inlinejs{"nil"}.
For the function \inlinejs{cons}
(\linesref{code:18}{code:20}) we get
}
&\envbinding{\code{cons}}{\tlambda{\parenspair{\tvar_2}{\tvar_3}}{O}}
&&\issupseteq{\cset}{\setofthree{\flows{\code{"cons"}}{\tvar_4}}{\flows{\tvar_2}{\tvar_5}}{\flows{\tvar_3}{\tvar_6}}}
\label{ex:cg:2}
\intertext{%
where
\isdef{O}{\tobjthree{\code{kind}}{\tvar_4}{\code{head}}{\tvar_5}{\code{tail}}{\tvar_6}}.
We also define
\isdef{\type_{\inlinejs{cons}}}{\tlambda{\parenspair{\tvar_2}{\tvar_3}}{O}}.
The function's effect is empty, so omitted here.
Moving on to function \code{sum}, before checking its body
we introduce bindings for the (recursive) function itself and its parameter:
}
& \envbinding{\code{sum}}{\tlambda{\tvar_7}{\type_r}},\;\envbinding{\code{list}}{\tvar_7} &&
\label{cg:def:sum}
\intertext{%
We define
\isdef{\type_{\inlinejs{sum}}}{\tlambda{\tvar_7}{\type_r}}.
Checking the conditional in \mylineref{code:23}, \code{list}
gets a more precise type, and is referred to as $\code{line}_1$ inside the then-branch:
}
& \envbinding{\code{list}_1}{\tvarb_7}
&& \issupseteq{\cset}{\uniflow{\tvar_7}{\preduse{\primpred_c}{\tvarb_7}}}
\label{ex:cg:4}
\intertext{%
Here, \isdef{\primpred_c}{\star\inlinejs{.kind === "cons"}} is the predicate of exact equality
of the field \inlinejs{kind} with the string \inlinejs{"cons"}.
The uses of \inlinejs{list} in \mylineref{code:24}
produce the following constraints (here we focus on the interesting uses
\ie, the two field accesses and the call):
}
&
&&\issupseteq{\cset}{\mkset{
  \begin{tabular}{l}
    \flows{\tvarb_7}{\getuse{\texttt{head}}{\tvarc_1}}, \\
    \flows{\tvarb_7}{\getuse{\texttt{tail}}{\tvarc_2}}, \\
    \flows{\type_{\code{sum}}}{\calluse{\tvarc_2}{}{\tvard_1}}
  \end{tabular}
  }}
  \label{ex:cg:5}
\intertext{%\normalsize
We omit the constraints pertinent to the return statements, since they are not
crucial in this example.
The compound calls in \mylineref{code:29} further produce the constraints (starting from
deeper nesting levels):
}
& &&\issupseteq{\cset}{
\uniflow{\type_{\texttt{cons}}}{\calluse{\parenspair{\tnum}{\tobjsingleton{\texttt{kind}}{\tvar_1}}}{}{\tvard_2}}
}
\label{cg:call:cons:a}
\\
& &&\issupseteq{\cset}{
\uniflow{\type_{\texttt{cons}}}{\calluse{\parenspair{\tnum}{\tvard_2}}{}{\tvard_3}}
}
\label{cg:call:cons:b}
\\
& &&\issupseteq{\cset}{
\uniflow{\type_{\texttt{sum}}}{\calluse{\tvard_3}{}{\tvard_4}}
\label{cg:call:sum}
}
\intertext{%
In function \code{merge}, let $\code{x}_1$ correspond to the initial value for \code{x}
and $\code{x}_2$ to the value after the update in \mylineref{code:32}.
Below, the first three constraints correspond to the use of the \code{||} operator and the last one
to the field access in line~\ref{code:33}:
}
& \envbinding{\code{x}_1}{\entry{\tvar_8}{\tvar_8}},\;
  \envbinding{\code{x}_2}{\entry{\tvar_{11}}{\tvar_8}}
&&
  \issupseteq{\cset}{\mkset{\begin{tabular}{l}
      \flows{\tvar_8}{\preduse{\truthy}{\tvarb_8}}, \\
      \flows{\tjoin{\tvarb_8}{\type_{\texttt{nil}}}}{\tvar_{11}}, \\
      \flows{\tvar_{11}}{\getuse{\texttt{kind}}{\tvar_{10}}}
    \end{tabular}
    \label{ex:cg:10}
  }}
\intertext{%
Finally, function \code{havoc} in lines~\ref{code:36}--\ref{code:41} is similar to \code{merge}
(so we won't repeat the common parts), but additionally, defines a function
\code{reset}, that assigns \inlinejs{null} to \inlinejs{x}.
Crucially, the type of \inlinejs{x} inside
\inlinejs{reset} has been erased to $\tvar_8$:
}
& \envbinding{\code{reset}}{\tarrow{\parens{}}{\code{x}}{\tvoid}}
&&
\issupseteq{\cset}{\uniflow{\tnull}{\tvar_8}}
\label{ex:cg:11}
\intertext{%
The call to \code{reset} in line~\ref{code:39}
needs to handle the function's effect, so a fresh variable \effvar is generated:
}
&
&&
\issupseteq{\cset}{\mkset{\begin{tabular}{l}
  \flows{\tarrow{\parens{}}{\code{x}}{\tvoid}}{\calluse{\parens{}}{\effvar}{\tvoid}}, \\
  \flows{\effvar}{\havocuse{\envupd{\env}{\code{x}}{\entry{\tvar_{11}}{\tvar_8}}}}
  \end{tabular}
  \label{ex:cg:12}
  }}
\end{align}

For the moment, we have merely constructed a flow network, but haven't reached
any critical conclusions. In the next section, we'll see how we can use these
facts to discover inconsistencies, and what guarantees we get if we do not find any.

%\end{example}

%%%%%%%%%%%%%%%%%%%%%%%%%%%%%%%%%%%%%%%%%%%%%%%%%%%%%%%%%%%%%%%%%%%%%%%%%%%%%%%%
%%% Static Semantics - Constraint Propagation
%%%%%%%%%%%%%%%%%%%%%%%%%%%%%%%%%%%%%%%%%%%%%%%%%%%%%%%%%%%%%%%%%%%%%%%%%%%%%%%%

\subsection{Propagation}\label{sec:cons:prop}

Thinking of our system as a dataflow analysis framework, constraint generation
amounts to setting up a flow network.
%What remains to get a solution is allow
%the system to stabilize by repeatedly applying the flow functions.
The next step is to allow the system to stabilize under a set of appropriate flow functions.
This latter
part is called \emph{constraint propagation} and corresponds to exploring
\emph{all} potential data-flow paths and finding inconsistencies in them.
Decomposing complex constraints intro simpler ones is done by the
rules shown in \figref{fig:flow:cons:prop}.
We say that a constraint set \cset is in \emph{closed form}, if it is closed with
respect to these rules.
%
%Assume that constraint generation on a program yields a constraint set \cset.
%
% As is typical in set-based analyses~\cite{Flanagan1999, Pottier1998}, finding
% bugs in the program amounts to
%
% To achieve the former part, we compute the \emph{closure} \saturated{\cset} of
% \cset under a set of constraint propagation rules shown in
% Figure~\ref{fig:cons:prop}.
%
In practice, we keep our constraint sets in closed form at all times
during constraint generation; that is, for every new
constraint that gets generated,
we apply all eligible propagation
rules until we reach a fixpoint.

\begin{figure}[t]
% \judgementHead{Constraint Propagation (Types)}{\flows{\type}{\ub}}
\begingroup
%\addtolength{\jot}{0.3em}
\begin{align}
  \issubseteq{\mkset{\concat{\flows{\type}{\tvar}}{\flows{\tvar}{\tuse}}}}{\cset} &
  \implies
  \inset{\flows{\type}{\tuse}}{\cset}
  \label{rule:cptrans}
  \tag{\cptrans}
  \\
  \issubseteq{\setoftwo{\flows{\effect}{\effvar}}{\flows{\effvar}{\euse}}}{\cset} &
  \implies
  \inset{\flows{\effect}{\euse}}{\cset}
  \label{rule:cptranseffect}
  \tag{\cptranseffect}
  \\
  \inset{\flows{\type_1\join\type_2}{\tuse}}{\cset} &
  \implies
  \issubseteq{\setoftwo{\flows{\type_1}{\tuse}}{\flows{\type_2}{\tuse}}}{\cset}
  \label{rule:cpjoin}
  \tag{\cpjoin}
  \\
  \inset{\flows{\effect_1\join\effect_2}{\euse}}{\cset} &
  \implies
  \issubseteq{\setoftwo{\flows{\effect_1}{\euse}}{\flows{\effect_2}{\euse}}}{\cset}
  \label{rule:cpjoineffect}
  \tag{\cpjoineffect}
  \\
  \inset{\flows{\tarrow{\tvar}{\effect}{\type}}{\calluse{\type'}{\effvar}{\tvarb}}}{\cset} &
  \implies
  \issubseteq{\setofthree{\flows{\type'}{\tvar}}{\flows{\type}{\tvarb}}{\flows{\effect}{\effvar}}}{\cset}
  \label{rule:cpcall}
  \tag{\cpcall}
  \\
  \inset{\flows{\evar}{\havocuse{\envext{\env}{\evar}{\entry{\tvarb}{\tvar}}}}}{\cset} &
  \implies
  \inset{\flows{\tvar}{\tvarb}}{\cset}
  \label{rule:cphavoc}
  \tag{\cphavoc}
  \\
  %\concrete{\tlit}
  %\wedge
  \inset{\flows{\tlit}{\preduse{\pred}{\tvar}}}{\cset}
  \wedge
  \checkpred{\tlit}{\pred} &
  \implies
  %\overset{\idx{\pred}{\tlit}}{\implies}
  \inset{\flows{\tlit}{\tvar}}{\cset}
  \label{rule:cppredbase}
  \tag{\cppredbase}
  \\
  \issubseteq{\setoftwo
    {\flows{\type}{\tvar}}
    {\flows{\tctxidx{\type'}{\tvar}^{+}}{\preduse{\pred}{\tvarb}}}
  }{\cset} &
  \implies
  \inset{\flows{\tctxidx{\type'}{\type}^{+}}{\preduse{\pred}{\tvarb}}}{\cset}
  \label{rule:cppredtrans}
  \tag{\cppredtrans}
  \\
  \inset{\flows{\tobjone{\fieldsym}{\tvar}}{\getuse{\fieldsym}{\tvarb}}}{\cset} &
  \implies
  \inset{\flows{\tvar}{\tvarb}}{\cset}
  \label{rule:cpget}
  \tag{\cpget}
  \\
  \inset{\flows{\tobjone{\fieldsym}{\tvar}}{\setuse{\fieldsym}{\type}}}{\cset} &
  \implies
  \inset{\flows{\type}{\tvar}}{\cset}
  \label{rule:cpset}
  \tag{\cpset}
\end{align}
\endgroup
\caption{Constraint Propagation in \flowcore}
\label{fig:flow:cons:prop}
\end{figure}

If we consider the elements of \cset as subtyping constraints, then these
rules amount to subtyping rules.
Rules~\hyperref[rule:cptrans]{\cptrans}
and~\hyperref[rule:cptranseffect]{\cptranseffect} express
transitivity for types and effect, respectively.
\hyperref[rule:cpjoin]{\cpjoin}
and~\hyperref[rule:cpjoineffect]{\cpjoineffect} decompose as usual flows from joins
of elements.

Rule~\hyperref[rule:cpcall]{\cpcall} decomposes the flow of an arrow type to
a calling context. Note that the incoming arrow type has a type variable \tvar
as the parameter type, since this is the form in which it is produced by
\hyperref[rule:cgfun]{\cgfun}. Also by the \hyperref[rule:cgcall]{\cgcall}
the effect and return type portion of the calling use are also variables.
Handling this flow propagates three new flows:
(i)~the argument's type $\type'$  flows to the parameter type variable $\tvar$,
(ii)~the return type $\type$ flows to the call-site's type $\tvarb$,
and
(iii)~the function's effect \effect flows to the call's effect variable $\effvar$.
This last byproduct often triggers the ``havoc'' mechanism,
which carries out the task of applying a function's effect on the
variables that are updated by it.

Rule~\hyperref[rule:cphavoc]{\cphavoc} recovers soundness by
restoring the conservative types for variables that are updated
through function calls (\hyperref[rule:cgcall]{\cgcall})
or are reassigned during conditional type refinement
(\hyperref[rule:refeffect]{\refeffect}).
Lets assume the havoc
operation was introduced due to the former rule.
(The latter case
works similarly.)
When \hyperref[rule:cphavoc]{\cphavoc} is triggered,
our analysis has determined that variable \evar gets
updated in the called function, and so entries bound to
\evar in the environment after the function call should be
conservatively approximated.
Of course, this rule is only meaningful if \evar is bound in that
environment. Otherwise this effect can be ignored.
Effectively, this corresponds to erasing the type of the
binding \binding{\evar}{\entry{\tvarb}{\tvar}}, by generating
a flow from the flow-insensitive type \tvar to \tvarb.
This process may happen far away from the actual
call-site, which exemplifies the global character of the
type inference.
An observant reader might notice that the environment
argument of \havocusesym has entries of the form
\entry{\tvarb}{\tvar}.
We can guarantee that this is the only possible form, by
construction of the respective flows in rules
\hyperref[rule:cgcall]{\cgcall} and
\hyperref[rule:refeffect]{\refeffect}.
In both cases this
happens after a widening operation, which guarantees that
the base of the environment entry is a type variable.

Rule~\hyperref[rule:cppredbase]{\cppredbase}
handles predicate refinement.
The intuition here is that \tlit should flow to \tvar, if it succeeds in
the check implied by \pred, \ie, if \checkpred{\tlit}{\pred} is true.
We have kept the representation of base predicates abstract, and so we will
do with the definition of \checksym.
In general, \checksym should be
able to decide if \tlit satisfies \pred by inspecting its top-level
constructor (for checks like \inlinejs{typeof}$\;\star$
\inlinejs{ === "string"})
or one of its fields.

%
% To define \checksym, we first define the \pred as:
% $$
% \pred\production\primpred\sep\anglepair{\fieldsym}{\pred}
% $$
% %
% Here \primpred is a \emph{primitive} predicate whose value is determined by
% inspecting the top-level constructor of a type literal. The pair predicate
% \anglepair{\fieldsym}{\pred} expresses a predicate check \pred on the type
% under field \fieldsym of the type in question.
% For example if \tlit is \tnum and \pred corresponds to the check
% \inlinejs{typeof x === "number"}, then the check succeeds. Similarly, when
% \tlit is \tobjsingleton{\fieldsym}{\tnum} and \pred amounts to
% \inlinejs{typeof x.f === "number"}.
%
Rule~\hyperref[rule:cppredtrans]{\cppredtrans} is a technical one.
It allows parts of types under refinement to be concretized.
In \tctxidx{\type'}{\tvar},
the form \tctxidx{\type'}{}
is a \emph{type context}, \ie, a type with a ``hole'' that is filled in with \tvar,
for example \tobjsingleton{\fieldsym}{\angles{}}.
While rule~\hyperref[rule:cptrans]{\cptrans} will fail to instantiate \tvar,
\hyperref[rule:cppredtrans]{\cppredtrans} allows type variables
appearing under a type
constructor (\eg, the object constructor) to be instantiated.
However, not all substitutions are allowed, but only
the ones where \tvar is in a positive position with respect
to type \emph{polarity}~\cite{Pottier1998, Dolan2017}%
\footnote{\secref{sec:app:polarities}
includes a
formal definition of polarity and type contexts.}.
The reason we require type variable \tvar to appear in a positive
position is to abide by our restriction that type joins cannot appear
at the right-hand side of constraints. If we allowed the replacement
of \tvar from \type in any part of $\type'$, this could potentially
break this invariant in a later propagation.
We will also see the importance of this rule in the upcoming
example.

%% We denote this with the notion of a \emph{type context} \tctx, which can be
%% positive or negative (following the earlier intuition), and is defined as:
%% %
%% \[
%% \begin{array}{rcl}
%%   \ptctx  & \production & \tbase \sep \tarrow{\ntctx_1}{\peffect}{\ptctx_2} \sep
%%                           \tobj{\fieldsym_1}{\ptctx_1}{\fieldsym_n}{\ptctx_n} \sep
%%                           \tvar \sep \tjoin{\ptctx_1}{\ptctx_2} \sep \tctxidx{}{}
%%   \\
%%   \ntctx  & \production & \tbase \sep \tarrow{\ptctx_1}{\neffect}{\ntctx_2} \sep
%%                           \tobj{\fieldsym_1}{\ntctx_1}{\fieldsym_n}{\ntctx_n} \sep
%%                           \tvar
%% \end{array}
%% \]
%% %
%% Rule~\hyperref[rule:cppredtrans]{\cppredtrans}
%% allows us to substitute an inflowing type \type for \tvar in \ptctx.

Finally, Rules~\hyperref[rule:cpget]{\cpget}
and~\hyperref[rule:cpset]{\cpset} handling record field
access and update
are standard.
%, propagating a type from and to the type of an record field.

%Finally,
%\hyperref[rule:cpbaseed]{\cpbaseed},
%\hyperref[rule:cptransed]{\cptransed} and
%\hyperref[rule:cpdiffj]{\cpdiffj} handle some cornercases
%that arise when dealing with effects ...

%\todo{Discuss polarities. Definition is very similar to \citet{Dolan2017}.
%Why we need them $\longrightarrow$ to make constraint propagation simpler
%(no joins in RHS). Handling this case should not be needed in type inference;
%there is a relevant discussion in the Annotations' section as to how they
%are handled in \flow.}

\subsection{Consistency}
\label{sec:consistency}

The goal of running constraint generation and propagation is
to eventually discover inconsistencies in the saturated
constraint set. These effectively correspond to potential
bugs in the use of the various operators, for example they
could correspond to the case of a non-function value
reaching the receiver position of a call. Below we present a
formal description of consistency.

\begin{definition}[Consistency]
A closed constraint set is \emph{consistent} if it does not contain any constraints
in one of the forms:
\begin{itemize}
  \item \flows{\tlit}{\calluse{\type}{\effvar}{\tvar}} where \tlit is \emph{not} an
    arrow type (or an arrow-like type, \eg, the type of a constructor object).
    %If a flow of this form is produced it would mean that a function call
    %could be attempted with at non-function receiver.
  \item
    \flows{\tlit}{\setuse{\fieldsym}{\type}} or
    \flows{\tlit}{\getuse{\fieldsym}{\tvar}}
    where \tlit is \emph{not} an
    record type literal (or an object-like type) containing \fieldsym.
\end{itemize}
\end{definition}

If our analysis finds an inconsistency, then this leads to an error report.
Otherwise, if no inconsistency can be found then the input program enjoys
the safety guarantees of Theorem~\ref{thm:flow:type:safety}.

% Note that even though the language of constraints allows this, none of the
% constraint generation or propagation rules will produce a constraint of the
% form \flows{\ast}{\tlit} or \flows{\ast}{\tjoin{\type_1}{\type_2}}.
% %
% This can be proven by an induction on all constraint generating rules.
% %
% Therefore, we do not
% need to handle these cases for the moment in checking for consistency.
% The reason we have kept the constraint language expressive in this way, is to
% accommodate such constraints when annotations are introduced
% % (\secref{flow:sec:annotations}).
% \todo{Am I gonna include this?}

% $\flows{\type}{\use}$ where \ub is not of the form \tvar can be
% propagated. Since these edges overapproximate flows at run time, consistency
% precludes runtime errors such as trying to call a value that is not a function.
%
%We also say that a constraint set $\cset_1$ \emph{entails} a constraint
%set $\cset_2$ if the consistency of the first implies the consistency
%of the second:
%$$
%\entails{\cset_1}{\cset_2}
%\quad\text{\textit{iff}}\quad
%\imp{\satisfiable{\saturated{\cset_1}}}
%    {\satisfiable{\saturated{\cset_2}}}
%$$

% \begin{example}
\subsection*{Example}
%\label{example:cons:prop}

We continue where we left off in the example of
\secref{sec:cons:gen},
by applying the rules from Figure~\ref{fig:flow:cons:prop} on \cset, in order to discover
inconsistencies or prove the absence thereof.

\paragraph{Use of predicates}
We start by applying \hyperref[rule:cpcall]{\cpcall} on
the calls of~\eqref{ex:cg:5},~\eqref{cg:call:cons:a},~\eqref{cg:call:cons:b},%
~\eqref{cg:call:sum}, and the respective function definitions:
\begin{align}
  & \issupseteq{\cset}{\setoftwo{\flows{\tvarc_2}{\tvar_7}}{\flows{\type_r}{\tvard_1}}}
  \label{ex:cp:13}
  \\
  & \issupseteq{\cset}{\setofthree{\flows{\tnum}{\tvar_2}}
    {\flows{\tobjsingleton{\texttt{kind}}{\tvar_1}}{\tvar_3}}{\flows{O}{\tvard_2}}}
  \label{ex:cp:14}
  \\
  & \issupseteq{\cset}{\setofthree{\flows{\tnum}{\tvar_2}}{\flows{\tvard_2}{\tvar_3}}{\flows{O}{\tvard_3}}}
  \label{ex:cp:15}
  \\
  & \issupseteq{\cset}{\setoftwo{\flows{\tvard_3}{\tvar_7}}{\flows{\type_r}{\tvard_4}}}
  \label{ex:cp:16}
\end{align}

Now lets focus on the interesting case of handling the getters of~\eqref{ex:cg:5}.
By transitivity (\hyperref[rule:cptrans]{\cptrans}) using~\eqref{ex:cp:15},~\eqref{ex:cp:16}
and~\eqref{ex:cg:4}, the record type $O$ flows to the predicate use:
\begin{align}
  &\issupseteq{\cset}{\uniflow
    {\tobjthree{\texttt{kind}}{\tvar_4}{\texttt{head}}{\tvar_5}{\texttt{tail}}{\tvar_6}}
    {\preduse{\primpred_c}{\tvarb_7}}
  }
  \label{ex:cp:17}
\intertext{\normalsize
We use Rule~\hyperref[rule:cppredtrans]{\cppredtrans} on~\eqref{ex:cg:2}
and~\eqref{ex:cp:17} to obtain:
}
&\issupseteq{\cset}{\uniflow
  {\tobjthree{\texttt{kind}}{\texttt{"cons"}}{\texttt{head}}{\tvar_5}{\texttt{tail}}{\tvar_6}}
    {\preduse{\primpred_c}{\tvarb_7}}
  }
  \label{ex:cp:18}
\intertext{
This is now a successful test since the string literal type \texttt{"cons"} of field
\inlinejs{kind} satisfies $\primpred_c$ and so:
}
& \issupseteq{\cset}{\uniflow
  {\tobjthree{\texttt{kind}}{\texttt{"cons"}}{\texttt{head}}{\tvar_5}{\texttt{tail}}{\tvar_6}}
    {\tvarb_7}
  }
  \label{ex:cp:19}
\intertext{%
\flow has thus discovered a path in which a ``cons'' object reaches the field accesses
of \mylineref{code:24}.
However, this latest constraint has enabled new flows
that could
%potentially
cause inconsistencies, \eg, the
recursive calls to \inlinejs{sum} on the \inlinejs{tail} of \inlinejs{list}.
By~\eqref{ex:cp:19} and~\eqref{ex:cg:5}, and applying
\hyperref[rule:cptrans]{\cptrans} and
\hyperref[rule:cpget]{\cpget}:
}
&\issupseteq{\cset}{\uniflow{\tvar_6}{\tvarc_2}}
  \label{ex:cp:20}
\intertext{\normalsize
Indeed, by combining~\eqref{ex:cp:14},~\eqref{ex:cg:2},~\eqref{ex:cp:20},~\eqref{ex:cp:13}
and~\eqref{ex:cg:4}
with \hyperref[rule:cptrans]{\cptrans} and the result with~\eqref{ex:cg:1}
with \hyperref[rule:cppredtrans]{\cppredtrans}:
}
& \issupseteq{\cset}{\uniflow{\tobjsingleton{\texttt{kind}}{\texttt{"nil"}}}
    {\preduse{\primpred_c}{\tvarb_7}}
  }
  \label{ex:cp:21}
\intertext{\normalsize
This test, however, will fail, as it would at runtime, and so the ``nil'' object will not reach the
getter for \inlinejs{head} or \inlinejs{tail} through $\tvar_6$.
Without the predicate refinement filtering out ``nil'' objects,
we would have introduced a false positive.
\paragraph{Refinements and Mutation} Last, we illustrate how \flow handles
functions \inlinejs{merge} and \inlinejs{havoc}. We start by processing~\eqref{ex:cg:12}
with \hyperref[rule:cpcall]{\cpcall} and then \hyperref[rule:cphavoc]{\cphavoc},
which yields:
}
&\issupseteq{\cset}{\uniflow{\tvar_8}{\tvar_{11}}}
  \label{ex:cp:22}
\intertext{\normalsize
This allows the \inlinejs{null} from the \inlinejs{reset} function to find its way
to $\tvar_{11}$ from~\eqref{ex:cg:11} and from there to the ``get'' operation through~\eqref{ex:cg:10}:
}
& \issupseteq{\cset}{\uniflow{\tnull}{\getuse{\texttt{kind}}{\tvar_{10}}}}
  \label{ex:cp:23}
\end{align}
This latter constraint signals a consistency violation, keeping \flow sound
with respect to variable updates that invalidate prior refinements.

%\end{example}

\subsection{Implementation of Type Inference}
\label{sec:inf:impl}

A set of flow constraints \cset can be thought of
as a constraint graph, where variables, literals
and uses are the nodes and the constraints among them are the edges.
In this section, we briefly discuss how we represent constraint graphs
and compute their closure efficiently.
Let us refer to type and effect variables as ``unknowns.'' Following
\citet{Pottier2001}, the constraint graph maps each unknown to a set of
lower bounds and a set of upper bounds, each of which contains the unknown
itself. The transitive propagation rules are specialized to exploit this
structure to efficiently keep the constraint graph in closed form.

However, equality constraints are quite inefficient in this system: they are
represented as a pair of subset constraints, which causes a cubic blowup in the
transitive propagation rules. On the other hand, equality constraints are quite
useful and common in \flow. They arise due to invariant typing of object
properties, array elements, and type arguments of polymorphic classes. They
directly model equations expressed by type aliasing. Finally, even though we
formalize \hyperref[rule:cphavoc]{\cphavoc} with a constraint of the form
$\tvara \leq \tvarb$, we can replace it without loss of generality with $\tvara =
\tvarb$.

To address the inefficiency, we generalize the constraint graph by considering
each unknown to be in an equivalence class containing other unknowns it is
unified with, and mapping each equivalence class to either ``unresolved'' bounds
(like~\citet{Pottier2001}), or to a ``resolved'' type or effect (as in unification). The
transitive propagation rules generalize in a straightforward way. Overall, this
simple optimization leads to $O(n)$ reduction in space and time complexity.

\section{Runtime Semantics}
\label{sec:semantics}

Before we describe our safety result
(\secref{sec:type:soundness}) we present the runtime
semantics for the formal fragment of \secref{sec:core},
%
% The semantics presented here
which is heavily based on that
used by~\citet{Rastogi2015} that cover a subset of
\js, emphasizing on features of interest, while
abstracting away non-crucial ones.

\begin{figure}[t!]
\begin{langdefsmall}
\langline{\expr}{\production}{\dots \sep \loc}{\textbf{Runtime Expressions}}
\vsep
\\
\langline{\val}{\production}{\const \sep \loc}{\textbf{Values}}
\vsep
\\
\langline{\hval}{\production}{
  \val \sep \storearrowshort{\store}{\evar}{\exprbody}
       \sep \objlit{\fieldsym_1}{\val_1}{\fieldsym_n}{\val_n}
}{\textbf{Heap Values}}
\vsep
\\
\langline{\ectx}{\production}{
  \hole \sep
  \assign{\evar}{\ectx} \sep
  \efuncall{\ectx}{\expr} \sep
  \efuncall{\loc}{\ectx} \sep
  \binand{\ectx}{\expr} \sep
  \binor{\ectx}{\expr} \sep
}{\textbf{Evaluation Contexts}}
\\
\langline{}{\sep}{\objlitthreespread{\fieldsym_1}{\val_1}{\fieldsym_k}{\ectx_k}{\fieldsym_n}{\expr_n}}{}
\\
\langline{}{\sep}{
  \unaryneg{\ectx} \sep
  \fieldread{\ectx}{\fieldsym} \sep
  \fieldwrite{\ectx}{\fieldsym}{\expr} \sep
  \fieldwrite{\val}{\fieldsym}{\ectx} \sep
  \varassign{\evar}{\ectx}
}{}
\\
\langline{}{\sep}{
  \ite{\ectx}{\stmt_1}{\stmt_2} \sep
  \return{\ectx} \sep
  \seq{\ectx}{\stmt}
}{}
\end{langdefsmall}
\vspace{0.3em}
\begin{minipage}{0.45\textwidth}
\begin{langdefsmall}
\langline{\heap}{\production}{\cdot \sep \heapext{\heap}{\loc}{\hval}}{\textbf{Heaps}}
\vsep
\\
\langline{\stack}{\production}{\cdot \sep \stackcons{\stack}{\store}{\ectx}}{\textbf{Stacks}}
\vsep
\\
\langline{\store}{\production}{\cdot \sep \storeext{\store}{\evar}{\loc}}{\textbf{Stores}}
\end{langdefsmall}
\end{minipage}
%\hfill
\begin{minipage}{0.45\textwidth}
\begin{langdefsmall}
\langline{\rtstate}{\production}{\rttriplet{\heap}{\stack}{\store}}{\textbf{States}}
\vsep
\\
\langline{\rtconf}{\production}{
  \mkconf{\rtstate}{\expr} \sep
  \mkconf{\rtstate}{\stmt} \sep
  \mkconf{\rtstate}{\exprbody}
}{\textbf{Configurations}}
\end{langdefsmall}
\end{minipage}
\begin{minipage}{0.05\textwidth}
  \quad
\end{minipage}
\caption{Runtime Definitions in \flowcore}
\label{fig:flow:runtime}
\end{figure}

\subsection{Definitions}

\figref{fig:flow:runtime} contains the definitions for
runtime configurations in \flowcore.

\paragraph{Runtime Values}
To account for heap-allocated values, we introduce
\emph{locations} \loc that index runtime heaps.
Together with constants they synthesize runtime
\emph{values}, which are normal form as far as execution is
concerned.
Locations are also added to the set of expressions in our runtime
language along with all other expression forms introduced earlier.

\paragraph{Runtime State}
There are three constituent parts that compose a \emph{runtime
state} \rtstate.
The first part is the \emph{heap} \heap, which includes
bindings from locations to \emph{heap values} \hval, which
in turn are either values, closures, or heap objects.
A \emph{closure} is a pair containing a store \store that
binds all external variables available at the point of
definition of the arrow function (capture by reference), and
the function's code, which is a statement succeeded by a
returned expression.
The second part of the runtime state is the \emph{stack}
\stack, that contains a list of stack frames.  Each stack
frame includes a store containing the variables bound in the
stack frame at the time execution left that frame, and an
evaluation context \ectx that holds the context that
execution would jump into when returning to that stack
frame. Evaluation contexts are defined in the usual way
having the same structure as expressions or statements but
with a hole \hole at the position of the term that is about
to be evaluated next.
Finally, the runtime state includes a \emph{store} \store,
that comprises bindings of variable names to locations to
allow closures to capture values by reference.

\paragraph{Runtime Configurations}
We write our \emph{runtime configurations} \rtstate (\ie,
programs under execution) as pairs that contain a runtime
state \rtstate, and a language term, which can either be an
expression \expr, a statement \stmt, or a function body
\body{\stmt}{\expr}.
We conflate the notions of expressions and function bodies
into a common notion using the symbol \exprbody, for
compactness in stating our results.

\subsection{Reduction Rules}

\begin{figure}[t!]
\judgementHeadTwo{Expression and Statement Reduction Rules}
  {\stepsconf{\rtstate}{\exprbody}{\rtstate'}{\exprbody'}}
  {\stepsconf{\rtstate}{\stmt}{\rtstate'}{\stmt'}}
\begin{mathpar}
\inferruleright{\rtectx\label{rule:rtectx}} {
  \stepsconf{\rttriplet{\heap}{\stack}{\store}}{\expr}
        {\rttriplet{\heap'}{\stack}{\store'}}{\expr'}
  %\predsteps{\rttriplet{\heap}{\stackemp}{\store}}{\expr}{\varpredctx}
  %      {\rttriplet{\heap'}{\stackemp}{\store'}}{\expr'}
}{
  \stepsconf{\rttriplet{\heap}{\stack}{\store}}{\ectxidx{\ectx}{\expr}}
        {\rttriplet{\heap'}{\stack}{\store'}}{\ectxidx{\ectx}{\expr'}}
}
\and
\inferruleright{\rtevar\label{rule:rtevar}}{
}{
  \stepsconf{\rtstate}{\evar}{\rtstate}{\idx{\confproj{\rtstate}{\heap}}{\idx{\confproj{\rtstate}{\store}}{\evar}}}
}
\and
\inferruleright{\rtasgn\label{rule:rtasgn}}{
  % \is{\rtstate}{\rtquadruplet{\heap}{\stack}{\store}{\predctx}}
  % \\
  \is{\heap'}{
    \upd{\heap}{\idx{\store}{\evar}}{\val}
  }
  % \\
  % \is{\rtstate'}{\rtquadruplet
  %   {\upd{\heap}{\idx{\store}{\evar}}{\val}}
  %   {\stack}{\store}
  %   {\envdiff{\predctx}{\evar}}}
  % \\
  % \is{\locpredctx'}{\envdiff{\locpredctx}{\idx{\store}{\evar}}}
}{
  \stepsconf
    {\rtstate}
    %{\rttriplet{\heap}{\stack}{\store}}
    {\assign{\evar}{\val}}
    {\confasgn{\rtstate}{\heap'}}
    %{\rttriplet{\heap'}{\stack}{\store}}
    {\val}
  % \stepsconf
  %   {\rtquadruplet{\heap}{\stack}{\store}{\locpredctx}}
  %   {\assign{\evar}{\val}}
  %   {\rtquadruplet{\heap'}{\stack}{\store}{\locpredctx'}}
  %   {\val}
}
\and
\inferruleright{\rtarrow\label{rule:rtarrow}}{
  \fresh{\loc}
  \\
  \is{\heap'}{\heapext{\heap}{\loc}{\storearrowshort{\confproj{\rtstate}{\store}}{\evar}{\exprbody}}}
  % \\
  % \is{\rtstate'}{\confasgn{\rtstate}{\heap'}}
}{
  \stepsconf{\rtstate}{\arrowshort{\evar}{\exprbody}}{\confasgn{\rtstate}{\heap'}}{\loc}
}
\and
\inferruleright{\rtcall\label{rule:rtcall}}{
  \is{\idx{\heap}{\loc}}{\storearrowshort{\store_0}{\evar}{\exprbody}}
  \\
  \fresh{\loc', \many{\loc_i}}
  \\
  \is{\many{\evar_i}}{\locals{\exprbody}}
  \\
  \is{\heap'}{\heapcons{\heapext{\heap}{\loc'}{\val}}{\many{\heapbinding{\loc_i}{\texttt{undefined}}}}}
  \\
  \is{\stack'}{\stackcons{\stack}{\store}{\ectx}}
  \\
  \is{\store'}{\storecons{\storeext{\store_0}{\evar}{\loc'}}{\many{\storebinding{\evar_i}{\loc_i}}}}
}{
\stepsconf{\rttriplet{\heap}{\stack}{\store}}{\ectxidx{\ectx}{\efuncall{\loc}{\val}}}
      {\rttriplet{\heap'}{\stack'}{\store'}}{\exprbody}
}
\and
\inferruleright{\rtpredvar\label{rule:rtpredvar}}{
  \isequiv{\rtstate}{\rttriplet{\heap}{\stack}{\store}}
  \\
  \is{\hval}{\idx{\heap}{\idx{\store}{\evar}}}
}{
  \stepsconf{\rtstate}{\predof{\evar}}{\rtstate}{\idx{\delta_{\primpred}}{\hval}}
}
\and
\inferruleright{\rtandtru\label{rule:rtandtru}}{
  \istruthy{\val}
}{
  \stepsconf{\rtstate}{\binand{\val}{\expr}}{\rtstate}{\expr}
}
\and
\inferruleright{\rtandfls\label{rule:rtandfls}}{
  \isfalsy{\val}
}{
  \stepsconf{\rtstate}{\binand{\val}{\expr}}{\rtstate}{\val}
}
\and
\inferruleright{\rtortru\label{rule:rtortru}}{
  \istruthy{\val}
}{
  \stepsconf{\rtstate}{\binor{\val}{\expr}}{\rtstate}{\val}
}
\and
\inferruleright{\rtorfls\label{rule:rtorfls}}{
  \isfalsy{\val}
}{
  \stepsconf{\rtstate}{\binor{\val}{\expr}}{\rtstate}{\expr}
}
\and
\inferruleright{\rtneg\label{rule:rtneg}}{
  \is{\val'}{\neg\tobool{\val}}
  % \\
  % \is{\locpredctx'}{\locpredctxempty}
  % % \is{\locpredctx'}{\neg\confproj{\rtstate}{\locpredctx}}
}{
  \stepsconf{\rtstate}{\unaryneg{\val}}{\rtstate}{\val'}
  % \stepsconf{\rtstate}{\unaryneg{\val}}{\confasgn{\rtstate}{\locpredctx'}}{\val'}
}
\and
\inferruleright{\rtrecord\label{rule:rtrecord}}{
  \fresh{\loc}
  \\
  \is{\heap'}{\heapext{\heap}{\loc}{\objlit{\fieldsym_1}{\val_1}{\fieldsym_n}{\val_n}}}
}{
  \stepsconf{\rtstate}{\objlit{\fieldsym_1}{\val_1}{\fieldsym_n}{\val_n}}{\confasgn{\rtstate}{\heap'}}{\loc}
}
\and
\inferruleright{\rtfldrd\label{rule:rtfldrd}}{
  \is{\idx{\confproj{\rtstate}{\heap}}{\loc}}{\tbraces{
      \many{\objbinding{\fieldsym_i}{\val_i}}\code{,}\,
      \objbinding{\fieldsym}{\val}\code{,}\,
      \many{\objbinding{\fieldsym_j}{\expr_j}}
  }}
}{
  \stepsconf{\rtstate}{\fieldread{\loc}{\fieldsym}}{\rtstate}{\val}
}
\and
\inferruleright{\rtfldwr\label{rule:rtfldwr}}{
  \is{\heap'}{
    \heapupd{\confproj{\rtstate}{\heap}}
            {\loc}
            {\heapupd{\idx{\confproj{\rtstate}{\heap}}{\loc}}
                     {\fieldsym}
                     {\val}
            }
  }
}{
  \stepsconf{\rtstate}{\fieldwrite{\loc}{\fieldsym}{\val}}{\confasgn{\rtstate}{\heap'}}{\val}
}
\\
%
%\end{mathpar}
%\vspace{0.3em}
%\\
%%
%\judgementHead{Statement Reduction Rules}{\stepsconf{\rtstate}{\stmt}{\rtstate'}{\stmt'}}
%\begin{mathpar}
%
\inferruleright{\rtlet\label{rule:rtlet}}{
  \is{\heap'}{\heapupd{\heap}{\loc}{\val}}
  %\\
  %\is{\store'}{\upd{\store}{\evar}{\loc}}
  %\\
  %\fresh{\loc}
}{
  \stepsconf
    {\rtstate}
    %{\rttriplet{\heap}{\stack}{\store}}
    {\varassign{\evar}{\val}}
    {\confasgn{\rtstate}{\heap'}}
    %{\rttriplet{\heap'}{\stack}{\store}}
    {\skipstmt}
}
\and
\inferruleright{\rtif\label{rule:rtif}}{
  \isequiv{\stmt}{\istruthy{\val}\;?\;\stmt_1\;:\;\stmt_2}
}{
  \stepsconf{\rtstate}{\ite{\val}{\stmt_1}{\stmt_2}}{\rtstate}{\stmt}
}
%
%\and
%%
%\inferruleright{\rtiffls\label{rule:rtiffls}}{
%  \isfalsy{\val}
%}{
%  \stepsconf{\rtstate}{\ite{\val}{\stmt_1}{\stmt_2}}{\rtstate}{\stmt_2}
%}
%%
\and
\inferruleright{\rtret\label{rule:rtret}}{
  \is{\confproj{\rtstate}{\stack}}{\stackcons{\stack'}{\store}{\ectx}}
  \\
  \is{\rtstate'}{\triplet{\confproj{\rtstate}{\heap}}{\stack'}{\store}}
}{
  \stepsconf{\rtstate}{\return{\val}}{\rtstate'}{\ectxidx{\ectx}{\val}}
}
\and
\inferruleright{\rtskip\label{rule:rtskip}}{
}{
  \stepsconf{\rtstate}{\seq{\skipstmt}{\stmt}}{\rtstate}{\stmt}
}
\end{mathpar}
\caption{Operational Semantics of \flowcore}
\label{fig:flow:opsem}
\end{figure}

Figure~\ref{fig:flow:opsem}
contain a small-step operational semantics for
programs in \flowcore.
The rules can have the forms: \stepsconf{\rtstate}{\expr}{\rtstate'}{\expr'}
and  \stepsconf{\rtstate}{\stmt}{\rtstate'}{\stmt'}.

Next we describe some of the most interesting rules.
Rule~\hyperref[rule:rtevar]{\rtevar} shows the indirection
in dereferencing variables. First the store \store is looked
up and then the resulting location is used to access the
heap \heap. Similarly variable assignments have to go through the
same process in Rule~\hyperref[rule:rtasgn]{\rtasgn}.
Here, symbol $\triangleleft$ denotes the update of state \rtstate
with the new heap $\heap'$.

When evaluating arrows, the current store
\confproj{\rtstate}{\store} is saved as part of the created
closure, along with the code \exprbody of the function
(Rule~\hyperref[rule:rtarrow]{\rtarrow}).
This store is restored when the function is
called~(Rule~\hyperref[rule:rtcall]{\rtcall}).
The new store $\store'$ that will be used in the new stack
frame also includes a binding for the function parameter
\evar and bindings from all variables \many{\evar_i} defined in
the body \exprbody, since their definition is hoisted to the
top of the function scope.
We use metavariable \localssym to extract these variables.
All new variables are bound to fresh locations $\loc_i$.
Locals have not been initialized yet, so their
locations are bound to \inlinejs{undefined} in the
heap $\heap'$. Finally, a new stack frame $\store.\ectx$ is pushed
on the existing stack \stack as we enter the new function context.
After returning from this function, execution will return to \ectx
(Rule~\hyperref[rule:rtret]{\rtret}).
The rest of the expression reduction rules are routine.

\section{Metatheory}\label{sec:type:soundness}

In order to prove type safety for our type system we first introduce
a declarative type system that closely matches the intuition of the
type inference system described in \secref{sec:cons:sys}.
Based on the declarative system we then formulate a
type safety argument for the above language fragment
via a progress and a preservation theorem~\cite{Wright1994},
that connect type checking with the
runtime semantics of \secref{sec:semantics}.
Essentially, we establish the fact that if a program has
been checked with the above algorithm and has been found
consistent, then its execution will not lead to uncaught
type errors (\eg, ``undefined is not a function'').
Introducing this intermediate step in our metatheory is not
mandatory, but it vastly reduces the complexity of reasoning
about type safety, compared to the inference
version of \secref{sec:cons:sys}.

\subsection{Declarative Type System}
\label{sec:declarative}

This system assigns concrete types, \ie, types stripped off
of type variables, to language terms of
\flowcore.
%
%In the following, the environments \cenv and \genv both map variables to
%concrete types (not type entries like before). \cenv has the
%same flow-sensitive behavior as before, while \genv is
%a flow-insensitive environment providing the most general
%type for each variable.
%
Environments \env also map program variables to
concrete type entries (where both flow-sensitive and flow-insensitive
types are concrete).
The same holds for effects \effect.
The typing judgments for expressions and statements are:
%\begin{align*}
\tcexprsingle{\env}{\expr}{\type}{\effect}{\predmap}{\env'}
and
%&&
\tcstmtsingle{\env}{\stmt}{\effect}{\env'}.
%\end{align*}
%
The respective rules for these judgments follow the main intuitions of
the inference system and
are therefore deferred to \secref{sec:app:declarative},
along with other attendant definitions.

A \emph{substitution} \gsubst maps type variables of the
inference system to concrete types of the declarative system, and can
be extended to types, effects and environments in a point-wise manner.
Constraints \cons in the inference system correspond to
subtyping relations in the concrete system for both types and effects.
We can use the same substitution \gsubst to convert a constraint \cons
to one or multiple subtyping constraints over concrete types or effects.
Since our type language has been kept simple overall, the subtyping
rules for concrete types are routine and so a discussion is deferred to
the appendix.
We say that a substitution \gsubst \emph{satisfies} a constraint set \cset if all
subtyping constraints generated by mapping \gsubst over \cset
are valid. In this case we write \satisfies{\gsubst}{\cset}.

We argue about the soundness of our type inference system with
respect to the declarative system with the following lemma.

\begin{lemma}[Soundness of Type Inference]
\label{lemma:main:flow:type:inf:soundness}
If
%\begin{enumerate}[label=(\roman*)]
\jexpr{\env}{\expr}{\type}{\effect}{\predmap}{\cset}{\env'}
and there exists substitution \gsubst \st
\satisfies{\gsubst}{\cset},
%\end{enumerate}
then
\tcexprsingle{\appsubst{\gsubst}{\env}}
    {\expr}{\appsubst{\gsubst}{\type}}
    {\appsubst{\gsubst}{\effect}}{\predmap}
    {\appsubst{\gsubst}{\env'}}.
\end{lemma}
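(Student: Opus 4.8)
The plan is to proceed by simultaneous structural induction on the constraint-generation derivations for expressions, statements, and function bodies, since rule~\hyperref[rule:cgfun]{\cgfun} invokes a body judgment and the statement rules of \figref{fig:flow:cons:gen:stmt} are mutually entangled with the expression rules. Concretely, I would strengthen the statement into a mutual claim: alongside the expression case, for every derivation $\jstmt{\env}{\stmt}{\effect}{\cset}{\env'}$ with $\satisfies{\gsubst}{\cset}$ we obtain the declarative judgment $\tcstmtsingle{\appsubst{\gsubst}{\env}}{\stmt}{\appsubst{\gsubst}{\effect}}{\appsubst{\gsubst}{\env'}}$, and symmetrically for bodies. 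The single most-used fact throughout is \emph{monotonicity of satisfaction}: each generation rule accumulates the constraint sets of its premises into the conclusion's $\cset$ (via $\cup$ and $\bigcup$), so $\cset_i \subseteq \cset$ and hence $\satisfies{\gsubst}{\cset}$ yields $\satisfies{\gsubst}{\cset_i}$, licensing every induction hypothesis.

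The cases then split into the ``structural'' rules and the ``environment-operation'' rules. For the base cases (\hyperref[rule:cgconst]{\cgconst}, \hyperref[rule:cgvar]{\cgvar}, \hyperref[rule:cgpred]{\cgpred}) the generated set is empty and the matching declarative rule applies directly after pushing $\gsubst$ through the looked-up binding. For \hyperref[rule:cgassign]{\cgassign}, \hyperref[rule:cgcall]{\cgcall}, \hyperref[rule:cgrecord]{\cgrecord}, \hyperref[rule:cgfldrd]{\cgfldrd}, and \hyperref[rule:cgfldwr]{\cgfldwr}, the extra constraints beyond the premises are exactly the flows ($\flows{\type}{\tvar}$, a \calluse{}{}{}, a \getuse{}{}, a \setuse{}{}) whose images under $\gsubst$ are precisely the subtyping side-conditions of the corresponding declarative rule; these are furnished by the conversion of constraints to subtyping obligations described just before the lemma, using the decomposition already baked into the propagation rules (\eg~\hyperref[rule:cpcall]{\cpcall} for calls). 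For \hyperref[rule:cgfun]{\cgfun} I would appeal to an \emph{erasure lemma} stating that $\appsubst{\gsubst}{\erase{\env}}$ is the flow-insensitive form of $\appsubst{\gsubst}{\env}$, apply the body induction hypothesis under the extended environment $\env_1$, and read off the arrow type; the removal of $\evar$ and the hoisted locals from the effect matches the scoping in the declarative arrow rule.

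The environment-operation rules force a handful of companion lemmas, each proved by its own induction on the operation's derivation and each of the same ``satisfaction implies declarative behaviour'' shape. I would prove: (i)~a \emph{widening lemma} --- if $\widen{\env}{\env'}{\cset}$ and $\satisfies{\gsubst}{\cset}$ then $\appsubst{\gsubst}{\env}$ is pointwise a subtype of $\appsubst{\gsubst}{\env'}$ --- from the paired flows generated by \hyperref[rule:twidenc]{\twidenc}; (ii)~a \emph{join lemma} --- if $\jenvjoin{\env_1}{\env_2}{\env}$ then $\appsubst{\gsubst}{\env}$ is the least upper bound of the two inputs --- so that \hyperref[rule:cgand]{\cgand}, \hyperref[rule:cgor]{\cgor} and \hyperref[rule:cgif]{\cgif} produce a sound join of the branch environments; and (iii)~a \emph{refinement lemma} for $\jenvstrengthen{\env}{\predmap}{\env'}{\cset}$, showing that under a satisfying $\gsubst$ the environment $\appsubst{\gsubst}{\env'}$ is the declarative refinement of $\appsubst{\gsubst}{\env}$ by $\predmap$. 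With these in hand the logical and conditional rules reduce to bookkeeping: the only genuinely new content is matching the \emph{syntactic} predicate mapping $\predmap$ (which is carried unchanged, not substituted) against the declarative refinement, using the predicate satisfaction of \hyperref[rule:cppredbase]{\cppredbase}.

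The hard part will be the refinement lemma in the presence of effects, \ie~rule~\hyperref[rule:refeffect]{\refeffect}, together with the havoc constraint emitted by \hyperref[rule:cgcall]{\cgcall}. Here the generated constraint $\flows{\effect}{\havocuse{\env'}}$ must, under $\gsubst$, correspond \emph{exactly} to the partial widening the declarative system performs when an effect invalidates a refinement: for every variable that the substituted effect can touch, the refined flow-sensitive type must be replaced by the pre-refinement type (in \hyperref[rule:refeffect]{\refeffect}) or by the general summary (in \hyperref[rule:cgcall]{\cgcall}), while for untouched variables it must be preserved. I expect the delicate point to be the ``benign violation'' flagged in \secref{sec:cons:gen}: the environment $\env_3$ built in \hyperref[rule:refeffect]{\refeffect} carries entries of the irregular form $\entry{\tvarb}{\type}$, so I must verify that $\gsubst$'s obligation for the havoc constraint is read off solely from this syntactic argument (as \hyperref[rule:cphavoc]{\cphavoc} does, generating only the flows $\flows{\tvar}{\tvarb}$) and never demands a flow \emph{into} the misplaced component $\type$. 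Establishing that this use of the environment is purely syntactic --- and hence that the declarative effect and invalidation semantics is captured precisely --- is the crux on which soundness of refinement invalidation rests.
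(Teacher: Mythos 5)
Your proposal matches the paper's own proof in structure and substance: the paper argues by induction on the constraint-generation derivation, uses exactly the monotonicity-of-satisfaction observation ($\cset_i \subseteq \cset$) to license each induction hypothesis, reads the declarative subtyping side conditions off Definition~\ref{def:flow:cons:sat}, and isolates the call/havoc interaction into a dedicated Havoc lemma (Lemma~\ref{lemma:flow:type:inf:havoc}) showing that the satisfied widening-plus-havoc constraints force the substituted output environment to equal the declarative erasure $\erasevarto{\cenv}{\genv}{\appsubst{\gsubst}{\effvar}}$ --- precisely the crux you identify. The additional companion lemmas you enumerate (join, refinement, erasure for \hyperref[rule:cgfun]{\cgfun}) are sensible elaborations of the cases the paper dismisses as ``handled similarly,'' not a departure from its method.
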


\subsection{Type Safety}

Before we state our type safety result for the declarative
system, we extend the type checking judgment to
runtime configurations:
\tcrtconfexpr{\genv}{\heapty}{\rtstate}{\expr}{\type}.
Here \genv is a flow-insensitive environment mapping variables to
their most general concrete type throughout the program.
The judgment is to be read as: under a
\emph{heap typing} \heapty, mapping
heap locations to types and a flow-insensitive environment \genv,
a configuration
\mkconf{\rtstate}{\expr} is a assigned a type \type.
We can now state our type safety result.

\begin{theorem}[Type Safety]\label{thm:flow:type:safety}
For a configuration \mkconf{\rtstate}{\expr} and heap typing \heapty,
if \tcrtconfexpr{\genv}{\heapty}{\rtstate}{\expr}{\type},
then:
\begin{itemize}[noitemsep,nolistsep,leftmargin=*]
  \item \textbf{(Preservation)} If \stepsconf{\rtstate}{\expr}{\rtstate'}{\expr'}, then
		there exists $\heapty'$, such that
    \tcrtconfexpr{\genv}{\heapty'}{\rtstate'}{\expr'}{\type'}.
  \item \textbf{(Progress)} Either \expr is a value, or there exists
    a configuration \mkconf{\rtstate'}{\expr'} such that
    \stepsconf{\rtstate}{\expr}{\rtstate'}{\expr'}.
\end{itemize}
\end{theorem}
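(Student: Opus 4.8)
The plan is to follow the standard syntactic type-safety method~\cite{Wright1994}, proving Preservation and Progress separately for the declarative runtime configuration judgment and relying on a few structural lemmas. I write the runtime state as a triple $S=(H,K,\sigma)$ of heap, stack, and store, so a configuration is $\langle S,e\rangle$ typed under a heap typing $\Sigma$ and the flow-insensitive environment $\Gamma$. The central invariant I would fold into the configuration judgment is a \emph{store/heap well-formedness} condition: for every variable $x$ in the store, the heap value at $\sigma(x)$ has a type that is a subtype of $\Gamma(x)$, and $\Sigma$ types every reachable closure and record location consistently with how it is used. This invariant is precisely what makes refinement invalidation (``havoc'', the fallback to $\Gamma(x)$) sound, since by construction $\Gamma(x)$ is an upper bound of every value ever assigned to $x$. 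Before either half I would establish: \emph{(i) canonical forms}---a value of arrow type is a location pointing to a closure, and a value of record type with field $f$ is a location pointing to a heap record actually containing $f$, where the \emph{Consistency} condition (through Lemma~\ref{lemma:main:flow:type:inf:soundness}) rules out a primitive literal in a call or field position; \emph{(ii) heap weakening}---typing is preserved under any extension $\Sigma'\supseteq\Sigma$, needed for every allocating rule; \emph{(iii) heap update}---overwriting $H(\ell)$ with a value of a subtype of $\Sigma(\ell)$ preserves typing; and \emph{(iv) context replacement}---if $E[e]$ is well typed then $e$ is well typed at some $\tau_0$, and replacing $e$ by any $e'$ of a subtype of $\tau_0$ (under a possibly larger heap typing) keeps $E[e']$ well typed, which discharges the congruence rule \textsc{Rt-Ectx} uniformly.

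For \textbf{Preservation} I would induct on the derivation of $\langle S,e\rangle\to\langle S',e'\rangle$, casing on the last reduction rule. \textsc{Rt-Ectx} is immediate from context replacement and the induction hypothesis. The allocating rules \textsc{Rt-Arrow} and \textsc{Rt-Record} extend $\Sigma$ at the fresh location with the arrow type of the closure and the record type of the literal respectively, then invoke heap weakening. The mutating rules \textsc{Rt-Asgn} and \textsc{Rt-Let} use heap update together with the store invariant: the source assignment was typed so that the assigned value flows into $\Gamma(x)$, so its type is a subtype of $\Sigma(\sigma(x))$ and the invariant is restored. Dereference \textsc{Rt-Evar}, the boolean and predicate rules, field read/write \textsc{Rt-Fldrd}/\textsc{Rt-Fldwr}, conditionals, sequencing, and return \textsc{Rt-Ret} (which pops the stack and plugs the returned value into the saved context, again via context replacement) all follow from canonical forms and the corresponding declarative typing rule.

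The delicate case, which I expect to be the main obstacle, is the call rule \textsc{Rt-Call}. The closure's body was typed at definition time under the \emph{flow-insensitive erasure} of the defining environment (cf.\ rule \textsc{Cg-Fun}); at the call the captured store is restored, the parameter is bound, and the hoisted locals are freshly allocated and initialized to \inlinejs{undefined}. I must show the body is well typed in the new state: that the restored store plus the parameter and local bindings satisfies the store invariant against the environment under which the body was checked, that the argument type is a subtype of the parameter type extracted from the arrow (the declarative analogue of \textsc{Cp-Call}), and that pushing the caller's frame leaves the stack well typed so \textsc{Rt-Ret} can later re-establish the caller's typing. The real subtlety is the interaction highlighted in the introduction: because closures capture mutable variables \emph{by reference}, a call can overwrite a variable visible to the caller, and the effect on the arrow type must conservatively name every such variable so that, on the typing side, those variables have already been havoc'd to their general types across the call. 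I would therefore prove a small \emph{effect-soundness} fact---every location a call mutates is named in the arrow's effect, and the configuration typing assigns such variables only their general type across the call---which keeps the store invariant stable through \textsc{Rt-Call}/\textsc{Rt-Ret} and makes the \inlinejs{havoc} example typecheck soundly rather than unsoundly.

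For \textbf{Progress} I would induct on the configuration typing derivation (equivalently, case on the shape of $e$). A value is done; otherwise $e$ decomposes as $E[e_0]$, and if the principal subterms of $e_0$ are values the matching reduction rule fires, while if some subterm is not yet a value the induction hypothesis steps it and \textsc{Rt-Ectx} lifts the step. Canonical forms eliminate the stuck configurations: at a call the operator value is a closure location, so \textsc{Rt-Call} applies, and at a field access or update the receiver is a record location that by \emph{Consistency} contains the field, so \textsc{Rt-Fldrd}/\textsc{Rt-Fldwr} apply; the predicate, boolean, conditional, assignment, sequencing and return rules are enabled once their subterms are values (a frame is available for \textsc{Rt-Ret} since the body was entered by a call). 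Thus the only possible failure of progress is a type error of exactly the shape excluded by \emph{Consistency}---a non-function in operator position or a missing field---so Progress follows directly from that definition.
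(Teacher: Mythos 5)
Your proposal is correct and follows essentially the same route as the paper: progress and preservation for the declarative runtime configuration judgment, an evaluation-context decomposition/replacement lemma, heap-typing weakening, canonical forms obtained by inverting the heap typing at the relevant location, and---crucially for the call case---your ``effect-soundness'' fact, which is exactly the paper's NonEffect lemma (variables outside a term's effect retain a subtype of their pre-call types) combined with erasure to the general environment $\Gamma$ for variables inside the effect. The only noteworthy divergence is organizational: the paper splits preservation into a stack-preserving lemma plus a subject-reduction theorem for the call/return cases, and states heap-typing weakening under pointwise subtyping $\Sigma'\leq\Sigma$ rather than mere extension, which is also needed when a predicate test strengthens the type at an existing location.
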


Supporting lemmas and proofs for the above results  can be found in the appendix.

\section{Type Annotations}\label{sec:annotations}

So far, we have described a system for \emph{type inference} that
ensures that values are used in ways that are consistent with their
definitions. But can we check that the inferred types of values are consistent
with types we specify?

In this section, we introduce a system for \emph{type checking}. Type
annotations \type follow a similar grammar as types except that there are no
type variables, types can appear anywhere type variables could appear, and there
are no effects. We consider a type annotation to be just another kind of type
use, that expects some type of values. In other words, like everything else we
can formulate type checking with flow constraints.

Technically, we need some new propagation rules for flow constraints involving
type annotations.

When we see a constraint of the form
\flows{\lb}{\tobj{\fieldsym_1}{\type_1}{\fieldsym_n}{\type_n}}, we propagate it
with new constraints
\flows{\lb}{\getuse{\fieldsym_1}{\tvar_1}},
\flows{\tvar_1}{\type_1},
\flows{\lb}{\setuse{\fieldsym_1}{\type_1}},
\dots,
\flows{\lb}{\getuse{\fieldsym_n}{\tvar_n}},
\flows{\tvar_n}{\type_n},
\flows{\lb}{\setuse{\fieldsym_n}{\type_n}}. (As mentioned \secref{sec:cons:gen},
when \lb is a record type, these flow constraints can be replaced by
unification constraints as an optimization.)

The remaining propagation rules rely on some new definitions.

\paragraph{Escaping effects}

Since a function type annotation has nothing to do with any particular function
expression, we cannot calculate its effect in the usual manner.
Instead, we assume that there is an effect variable $\effectescape$ that
captures ``escaping'' effects, and that all function type annotations have this
effect. When we see a constraint of the form
\flows{\lb}{\tarrow{\type_1}{}{\type_2}}, we propagate it with the new
constraints
\flows{\lb}{\calluse{\type_1}{\effectescape}{\tvar_2}} and \flows{\tvar_2}{\type_2}.

\paragraph{Conditional flow constraints}

Checking that an inferred type is consistent with a union type annotation is
tricky. Intuitively, this amounts to checking that the inferred type is
consistent with either case of the union type annotation. But since the inferred
type may contain type variables, it may not be obvious which case to pick.
Consider the code in Figure~\ref{fig:type:annotations} (left),
where the parameter \inlinejs{f} on \mylineref{src:onstring} has a
type that is the union of two function types.
The call on \mylineref{src:onstring} is safe, since both function types take
\inlinejs{string}. However, it is unclear which choice of function type to use
for \inlinejs{id} on \mylineref{src:onstring:call}. Picking \inlinejs{IDString} seems fine
``locally,'' but turns out to be the wrong choice since \inlinejs{null} is
passed on the next line and picking \inlinejs{IDNullableString} instead would
type check.
\begin{figure}
\begin{minipage}[l]{.50\textwidth}
  {\footnotesize
\begin{jscode*}{escapeinside=++, fontsize=\footnotesize}
+\kwtype+ IDString = (string) => string;
+\kwtype+ IDNullableString = (?string) => ?string;
+\kwtype+ Ambiguous = IDString | IDNullableString;

+\label{src:onstring}+function onString(f: Ambiguous) { f(""); }
var id = (x) => x;
+\label{src:onstring:call}+onString(id);
id(null);
\end{jscode*}
}
\end{minipage}
\hfill
\begin{minipage}[r]{.47\textwidth}
  {\small
\begin{jscode*}{escapeinside=++, fontsize=\footnotesize}
+\kwtype+ Correlated
  = { type: "string", val: string }
  | { type: "number", val: number };

function stringIsString(x: Correlated) {
  if (x.type === "string")
+\label{src:displaystring}+    displayString(x.val);
}
+\label{src:stringisstring}+stringIsString({ type: "string", val: 0 });
\end{jscode*}
}
\end{minipage}
\caption{Type Annotations}
\label{fig:type:annotations}
\end{figure}

Alternatively, we could propagate the choice further, effectively introducing
disjunction in the logic of flow constraints. While appealing from a theoretical
perspective, in practice this approach is complicated to implement and difficult
to scale. It also doesn't mesh well with refinements. Consider the
code in Figure~\ref{fig:type:annotations} (right).
%
%\begin{jscode*}{escapeinside=++,highlightlines={43,46}}
%+\kwtype+ Correlated =
%  | { type: "string", val: string }
%  | { type: "number", val: number };
%
%function stringIsString(x: Correlated) {
%  if (x.type === "string") {
%    displayString(x.val);
%  }
%}
%stringIsString({ type: "string", val: 0 });
%\end{jscode*}
%
Here, we propagate the choice of type for the object argument on \mylineref{src:stringisstring}
into the fields of the object types in the union \inlinejs{Correlated}; this
causes the call to type check, since the \inlinejs{type} and \inlinejs{val}
fields of the object separately typecheck against the corresponding
unions. However, this is unsound, as the call on \mylineref{src:displaystring} shows (it passes a
number at run time where a string is expected).

Instead, our approach is to try picking a case without ambiguity (\ie,
without considering type variables), or demand further type annotations to
disambiguate.
Specifically, let any constraint of the form $\flows{\tvar}{\star}$ or
$\flows{\star}{\tvar}$ be a \emph{condition}: its validity is conditional on
what type $\tvar$ is inferred to be. Suppose that we have a restricted form of
constraint propagation without the transitivity rule
\hyperref[rule:cptrans]{\cptrans}, so that conditions are not propagated. The
propagation rule for choice, described below, uses this restricted form of
constraint propagation to generate a set of constraints. If the rule signals an
ambiguity, then the developer must provide annotations for the type variables
involved in any generated conditions. Otherwise, the generated conditions are
propagated further using the unrestricted rules.

A constraint of the form \flows{\lb}{\tjoin{\type_1}{\type_2}} is propagated as
follows:
\begin{itemize}
\item Either \flows{\lb}{\type_1} generates an inconsistent set of constraints; then we continue with \flows{\lb}{\type_2}.
\item Or \flows{\lb}{\type_1} generates a consistent set of constraints $\cons_1$.
  \begin{itemize}
  \item Either \flows{\lb}{\type_2} generates an inconsistent set of constraints; then we continue with $\cons_1$.
  \item Or \flows{\lb}{\type_2} generates a consistent set of constraints $\cons_2$.
    \begin{itemize}
    \item Either $\cons_1 \subseteq \cons_2$; then we continue with $\cons_1$.
    \item Or $\cons_1 \not\subseteq \cons_2$; then we signal an ambiguity.
    \end{itemize}
  \end{itemize}
\end{itemize}

\section{Modules and Dependencies}\label{sec:modules-deps}

Until now, we have presented \flow's analysis on ``whole'' programs. However,
\js codebases can be quite large (\eg, at \facebook we have millions of lines of
\js code), and a whole-program set-based analysis is simply not fast enough at
scale.

In this section, we show how we modularize \flow's analysis. Modularization is
important for performance, in terms of both time and space. It is also a natural
fit for modern \js, where code is typically split across a (large) number of
(small) files; every file is mapped to a module, and possibly imports other
modules; definitions are local by default, unless they are exported; and
accessing global definitions other than builtins is generally discouraged.

Broadly, we follow the standard approach of analyzing each file separately, once
all files it depends on have been analyzed. This strategy allows us to
incrementally analyze the program as files change (\secref{sec:incremental}),
and parallelize the analysis across files (\secref{sec:parallel}).

The key idea is to demand a ``signature'' for every module. We ensure that types
inferred for the expressions a module exports do not contain type
variables---wherever they do, we demand type annotations. For example, the
parameter of an exported function expression must specify its type. Otherwise,
we risk having the type of the parameter depend on calls in other files that
import this module, which breaks modularization. (Alternatively, we could try
generalizing the parameter's type based on how it is used inside the function
expression, but in our experience it leads to unwieldy types.)

Requiring annotations for module interfaces is much better than requiring
per-function annotations: a typical module exports one object or function, while
having a bunch of module-internal code. The minimum annotation burden is only a
small fraction of lines of code. (Of course, annotations are permissible even
where they are not required.)

The type annotation syntax is designed to be nearly as expressive as the
internal type language, but it is not true that \emph{all} well-typed code can
be cleanly refactored into modules to preserve typing. This is by design:
modules are abstractions, and so some rewriting might be needed to prevent
leaking abstractions. For example, effects do not appear in types at all, and
local effects that leak after refactoring would need rewriting.

Independently, having a signature for every module turns out to be a desirable
choice for software engineering. It is considered good practice for
documentation (files that import the module can simply look up its signature
instead of its implementation), as well as error localization (blames for errors
do not cross module boundaries).

\paragraph{Modules, exports, and imports}

For the purposes of this paper, let us assume that every file maps to a module
by the same name. (This is sufficient to model the popular CommonJS module
system, which is the default module system in \flow. However, we also support
module systems where module names are independent of file names and not
necessarily in 1-1 correspondence.)

A file can import definitions that another file exports. However, the reference
$m$ to the exporting file inside the importing file is \emph{relative} to the
importing file.

\paragraph{Module loading and dependency tracking}

We assume a module loading judgment $\mathbb{FS} \Vdash \file_i::m \leadsto
\file_e^?$ that, given a file name $\file_i$ and a reference $m$, either
computes the name $\file_e$ of the file being referenced by $m$ in $\file_i$ or
errors, while recording any files looked up by the derivation in $\mathbb{FS}$
(which may or may not exist). Whenever $\mathbb{FS} \Vdash \file_i::m \leadsto
\file_e$, we assume $\file_e$ exists and have $\file_e$ in $\mathbb{FS}$.

Given a file system state that satisfies the constraints $\mathbb{FS}$, we say
that $\file_i$ \emph{depends} on $\file_e$ whenever $\mathbb{FS} \Vdash
\file_i::m \leadsto \file_e$.

\paragraph{Compilation and linking}

Files are ``compiled'' and ``linked'' in dependency order. In practice, there may be
cycles in the dependency graph, so this process is run on the directed acyclic
graph of strongly connected components, where each strongly connected component
is considered to have all the imports and exports of the files in it.  For each
file, compilation and linking generates exported types and \emph{signature}
constraints of the form $\flows{\tlit}{\tvar}$, where any type variables in
\tlit are in positive positions. Such signature constraints fully describe the
types of exports of the file, which can be ``substituted'' for the types of
corresponding imports in dependent files.

Suppose that compiling a file \file, with module references typed as fresh type
variables $\tvar_1, \dots, \tvar_n$, generates constraints \cons and an exported
type \type. Furthermore, suppose that the module references resolve to files
$\file_1, \dots, \file_n$ that have their signature constraints $\cons^\star_1,
\dots, \cons^\star_n$ and exported types $\type_1, \dots, \type_n$.

Then, we link \file by adding $\cons^\star_1, \dots, \cons^\star_n$ to \cons, propagating the
constraints $\type_1 \leq \tvar_1, \dots, \type_n \leq \tvar_n$, and transforming
\cons to $\cons^\star$ by the process $\sfsym{signature}(\type)$, defined
recursively as follows:
\begin{enumerate}

\item $\sfsym{signature}(\tvar)$ throws away upper bounds of \tvar, and calls
  $\sfsym{signature}(\tlit)$ for each \flows{\tlit}{\tvar}.

\item $\sfsym{signature}(\tlit)$ demands a type annotation for each type
  variable in a negative position in \tlit, and calls $\sfsym{signature}(\tvar)$
  for each \tvar in a positive position in \tlit.

\end{enumerate}

Intuitively, this process walks over constraints, with the exported types as
roots, while doing two things. One, any constraints that would be unreachable
when the exported types flow to type uses in dependent files are pruned
away. Two, the developer must provide annotations wherever constraints could
propagate back from dependent files.

Formally, a key property of signature constraints $\cons^\star$ and exported
types \type is that it is impossible for a constraint of the form
\flows{\type}{\use} to lead to, via propagation, a constraint of the form
\flows{\tlit}{\tvar} where \tvar is in $\cons^\star$. In other words, signature
constraints and exported types can be considered ``closed'' when linking
dependent files. This means that they do not need to be recomputed for
correctness---in fact, they can be memoized and reused---which is crucial for
performance. This also means that whenever ``diamonds'' occur in the dependency
graph, \ie, \file depends on another file via multiple paths, the signature
constraints of \file are the same no matter which order the paths are explored.

\section{Incrementalization}\label{sec:incremental}

In this section, we show how \flow exploits modularity and dependency management
to incrementally analyze files as they change.

\paragraph{Architecture}

\textsc{Flow}'s architecture consists of a server, a client, and a file system watcher.
The server initially analyzes the entire codebase, following the procedure in
the previous section, and stores a bunch of information in memory. The
information not only includes the status (type errors), but also the results of
separately compiling and linking every file, and the dependencies between
files. Once the server is initialized, it runs in the background.

The client queries the server for information, relaying commands issued via the
command-line or various IDEs. Typically, the client is interested in the
status. But the client could also ask for the type at a particular position, the
definitions reaching a particular reference, \etc, in which case the server
computes that information almost instantaneously from the information already
stored in memory.

Finally, the file system watcher informs the server of changes to the file
system: which files have been added, modified, or deleted. Based on this
information, the server re-analyzes a (hopefully small) fraction of the code
base, and updates the information stored in memory. (When the client queries the
server again, the response is based on this updated information.)

\paragraph{Incremental analysis}

Files that are added or modified need to be re-analyzed. In addition, a subset
of unmodified ``dependent'' files needs to be re-analyzed. This set can be partitioned into
\emph{direct} dependents and \emph{indirect} dependents.

Direct dependents are computed as follows. For any file \file that is added,
modified, or deleted, whenever $\file$ is in $\mathbb{FS}$ for any unmodified
$\file_i$ where $\mathbb{FS} \Vdash \file_i::m \leadsto \file_e^?$, we consider
$\file_i$ is a direct dependent.
The \emph{depends}
relation is modified by re-resolving module references in direct
dependents. Indirect dependents are unmodified files that recursively depend on
direct dependents.

\section{Parallelization}\label{sec:parallel}

We use a map/reduce algorithm augmented with shared memory communication for
parallelizing various stages of type checking, building on and extending \hack's
model~\cite{hack:spec}.

\subsection{Workers}

Assume we want to distribute a task on a number of files, computing a
result of type \resultmt. We describe the task with a function
\kw{job}\oftype{\filesmt \rightarrow \intermt}; a function
\kw{merge}\oftype{\resultmt \times \intermt \rightarrow \resultmt}; a value
\kw{neutral}\oftype{\resultmt}; and a function \kw{next}\oftype{\unitmt
  \rightarrow \filesmt}.

We have a master process and as many worker processes as the number of available
processors. The master initially has \resultmv as \kw{neutral}, and considers
all workers free. Then, it repeatedly does the following:
\begin{itemize}
\item If there is a free worker, call \applymv{\kw{next}}{} to obtain a list of
  files \filesmv. If \filesmv is empty and all workers are free, exit with
  result \resultmv. If \filesmv is not empty, send it to the free worker.
\item If a worker has sent back \intermv, consider the worker free, run
  \applymv{\kw{merge}}{\pair{\resultmv}{\intermv}} and update \resultmv with it.
\end{itemize}
Correspondingly, every worker repeatedly does the following. If the master has
sent \filesmv, fork a process to run \applymv{\kw{job}}{\filesmv}, wait for a
value \intermv, and send back \intermv to the master.

Usually, the \kw{next} function is simple. It just remembers an index into the
original list of files. When the index is out of bounds, it returns an empty
list; otherwise it returns a sublist from that index of some fixed ``bucket''
size, and advances the index. This models processing a static list of files in
no particular order.

What if the processed list is computed dynamically, in a particular order? For
that, we make the following changes. We maintain a ``worklist'' of files, and
have \kw{next} create a bucket from the worklist instead. Let the intermediate
result type $\intermt'$ be $\filesmt \times \intermt$. Then
\applymv{\kw{job}'}{\filesmv} returns
\parens{\pair{\filesmv}{\applymv{\kw{job}}{\filesmv}}}. Finally,
\applymv{\kw{merge}'}{\pair{\resultmv}{\parens{\pair{\filesmv}{\intermv}}}}
updates the worklist with \filesmv, before returning
\applymv{\kw{merge}}{\pair{\resultmv}{\intermv}}.

\subsection{Shared Heap}

As described above, the master and the workers communicate by serializing and
deserializing data like files and intermediate results. In practice, the results
that need to be computed are often maps from files to large values, and thus
communicating intermediate results from the workers back to the master becomes a
performance bottleneck. Furthermore, manipulating large results in the master
spikes its memory usage and causes frequent garbage collection pauses, which
affects the server's responsiveness.

Thus, we use a different mechanism for sharing results: a large hashtable mapped
to RAM, accessible to both the master and the workers, that is logically divided
into various maps. The hashtable provides fast (lock-free) concurrent access for
reads, as well as for writes as long as they \emph{add} entries for
\emph{disjoint} keys. Only the master can remove entries. These conditions turn
out to be easily satisfied in our setting: at any stage of type checking,
different workers always operate on different files, and old entries are only
ever cleared to make way for new entries when processing file system changes in
the master.

Entries are compressed on writes. (We use LZ4~\cite{LZ4} because it is extremely
fast, while providing sufficient compression.) In practice, this means we can
tolerate redundancy in the entries, trading off space for time by precomputing
information. Moreover, entries are cached on reads.

With the shared heap, the types \resultmt and \intermt can be quite small
(typically, metadata for bookkeeping). Moreover, the processes forked by workers
to run jobs are short-lived: their memory is reclaimed by killing them on every
completion.

\subsection{Parallelizing Parsing}

Files are parsed in parallel, in no particular order, using a static \kw{next}. Every \kw{job} writes the abstract syntax trees for corresponding files to shared memory.

\subsection{Parallelizing Compilation and Linking}

Next, the files are compiled and linked in parallel, following dependency order, using a dynamic \kw{next}. We maintain a count of dependencies for each file. The worklist initially contains files that do not depend on any other files. As files are processed, they decrement the dependency counts of other files that depend on them, possibly causing those files to be added to the worklist (because they no longer depend on any files that have not already been processed).

Every \kw{job} reads the abstract syntax trees for the corresponding files, and the signature flow constraints for the files they depend on, does compilation and linking, and finally writes the signature flow constraints for the corresponding files to shared memory.

Note that a file does not need to be rechecked if the signature flow constraints
for none of its dependencies change. This is a major optimization: it means that
even though a file may have a lot of recursive dependents, only a small fraction
of them may actually need to be rechecked. Indeed, in practice it
results in order-of-magnitude differences in recheck times. The implementation
is slightly tricky because new signature flow constraints often contain fresh
type variables that make them trivially different than old signature flow
constraints. We compute and compare hashes modulo such trivial differences to
detect when signature flow constraints have changed.

Compiling and linking files in dependency order limits some parallelism in theory, but in practice, the alternative approach of processing every file independently ends up doing far more work. Overall we save processing time by an order of magnitude.

\section{Experiments}\label{sec:experiments}

We ran experiments on the main internal repository at \facebook, in which (at
the time of writing this paper) around 13M LOC of \js are covered by \flow,
spanning around 122K files. We chose this repository because it contains a wide
variety of \js projects, implementing client code and frameworks for web
applications, that depend on each other but are owned by different teams and do
not necessarily conform to a uniform coding style. (A smaller repository
contains client code and frameworks for mobile applications, which is also
covered by \flow, but which we do not consider here---the general conclusions
about \flow's behavior, however, remain the same.)

\paragraph{Distribution of annotations}

\begin{figure*}[t!]
    \centering
    \begin{minipage}{0.48\textwidth}
        \centering
        \includegraphics[width=\textwidth]{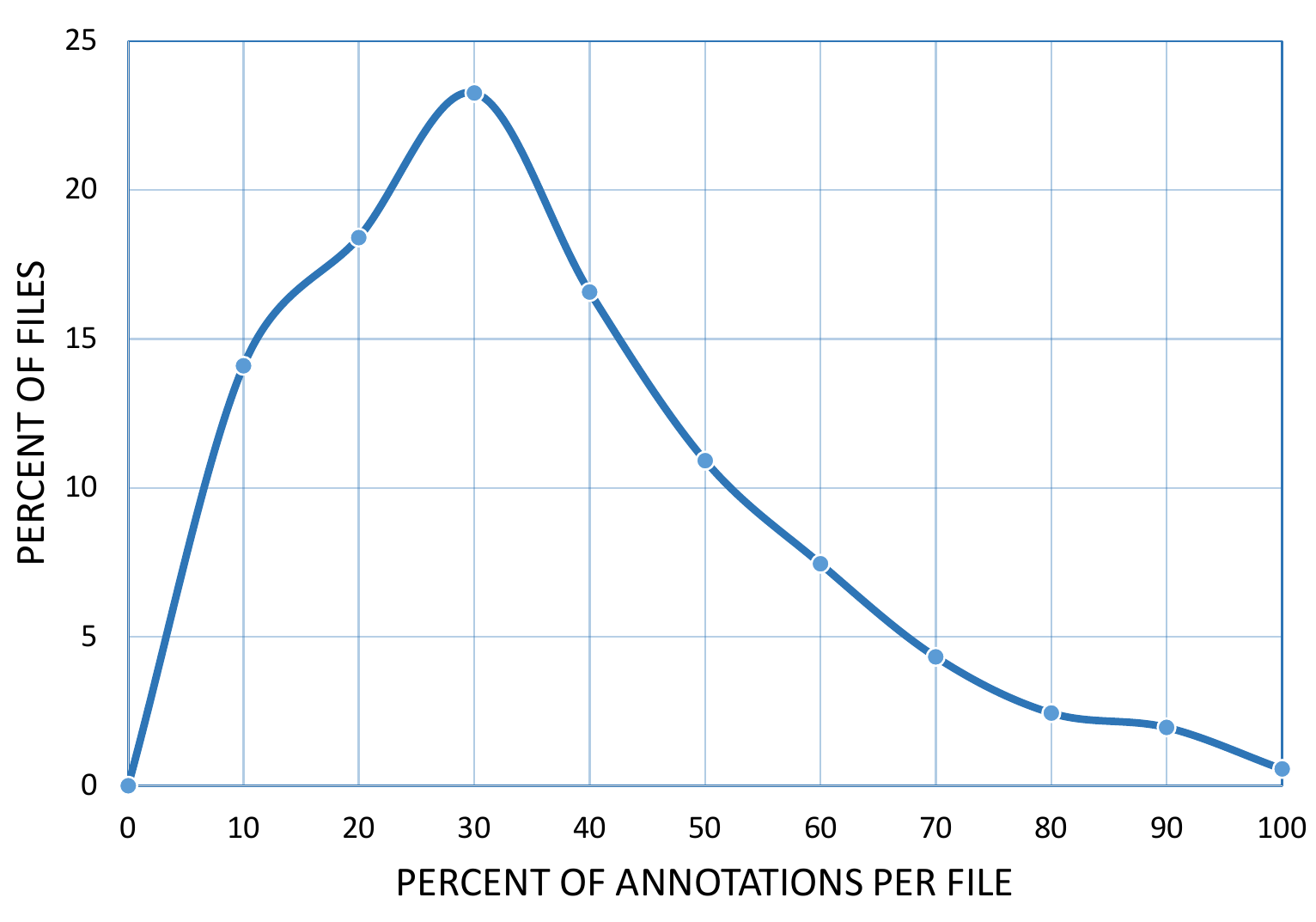}
        \caption{Distribution of annotations}
        \label{fig:distribution-of-annotations}
    \end{minipage}%
    \hfill
    \begin{minipage}{0.46\textwidth}
        \centering
        \includegraphics[width=\textwidth]{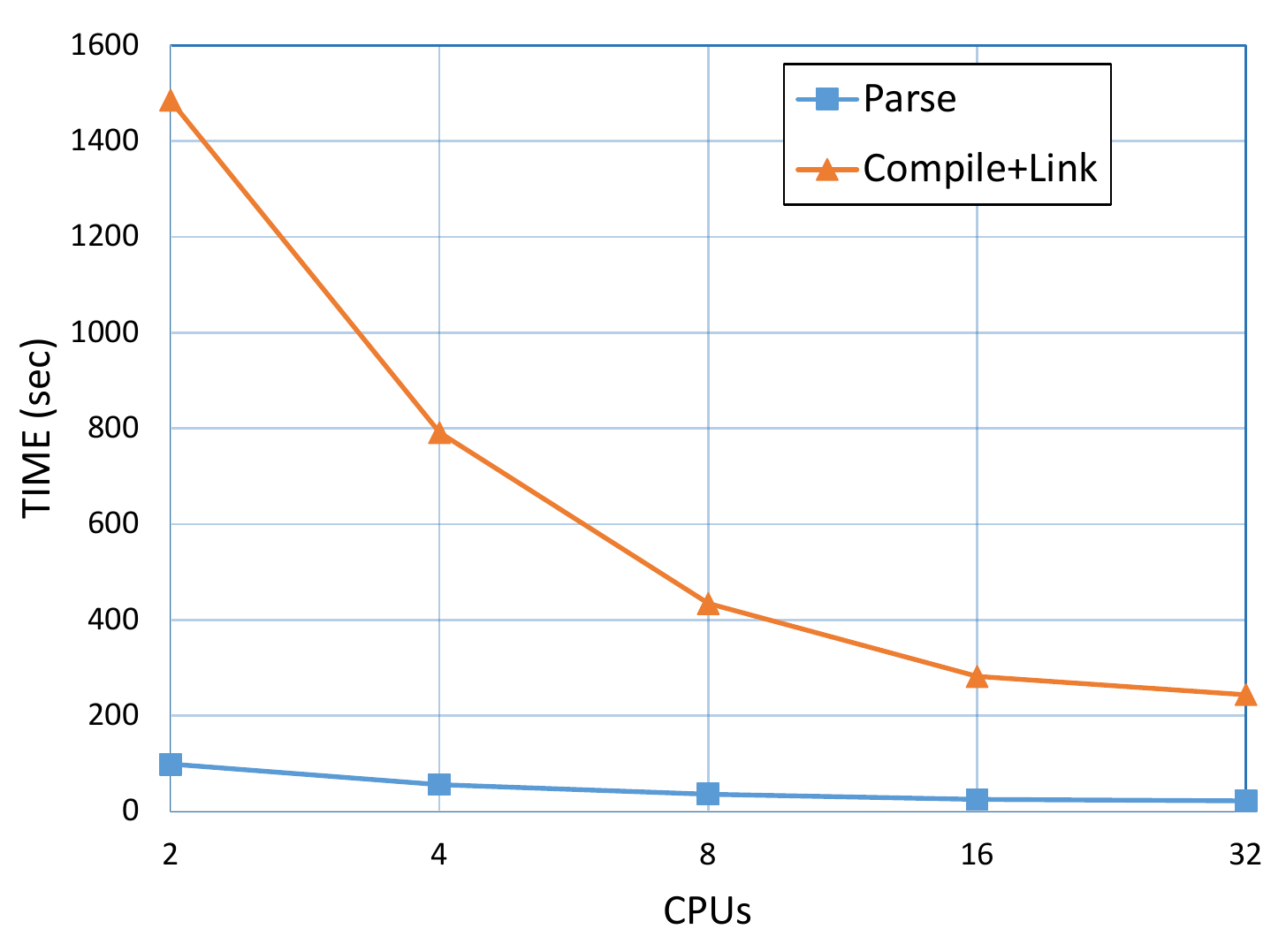}
        \caption{Effects of parallelization on performance}
        \label{fig:procs-times}
    \end{minipage}
    %\caption{Caption place holder}
\end{figure*}

%\begin{wrapfigure}{R}{0.5\textwidth}
%  \includegraphics[width=0.45\textwidth]{}
%  \caption{Distribution of annotations}
%  \label{fig:distribution-of-annotations}
%\end{wrapfigure}

For each file, we counted the number of annotations as a fraction of the total
number of locations that could potentially be
annotated. \figref{fig:distribution-of-annotations} shows the distribution of
these relative numbers across the repository. The median is 29\% annotations. In
aggregate, there are around 686K annotations vs. 1502K other locations that
could potentially be annotated but are not. Note that these numbers do not
account for type inference of every expression---only declaration sites are
considered---and thus conservatively overapproximate the annotation burden.

\paragraph{Usage of refinements}

As a quick test of the impact of supporting refinements in \flow, turning it off
led to more than 145K spurious errors across this repository.

\paragraph{Effects of parallelization on performance}

%\begin{wrapfigure}{R}{0.5\textwidth}
%  \includegraphics[width=0.45\textwidth]{}
%  \caption{Effects of parallelization on performance}
%  \label{fig:procs-times}
%\end{wrapfigure}

Total initialization time was around 225s with 32 processors (Xeon, 2.2GHz),
which means around 3 ms per file on average. We also used around 4GB of shared
memory.

\figref{fig:procs-times} shows how performance varies with the number of
processors (we only measured performance for powers-of-2 processors and
extrapolated). Parsing times become half from 2 to 4 processors, but then the
improvements slow down (since parsing is already quite fast, and communication
starts dominating---an example of Amdahl's law). Compilation and linking times
hit the limit much later, continuing to improve almost linearly until 16
processors. (This makes sense, since compiling and linking is much slower than
parsing.)

\paragraph{Effects of code size on performance}

The time taken to compile and link a file grows approximately linearly with code
size. While set-based analysis is well-known to be worst-case cubic in code
size, the lines of code per file is small (average 106), and the imports are
summarized by signature constraints, which are much smaller than the
corresponding lines of code of dependencies.

\paragraph{Effects of dependencies on performance}

The time taken to process a file is not correlated with the number of recursive
dependencies---the spread of times remains relatively flat as dependencies
grow. This is not very surprising, since signature constraints of imports are
supposed to ``compress'' the information contained in recursive dependencies. On
the other hand, the time taken to process a file grows approximately linearly
with the number of imports.

The size of the signature constraint graph, which form the results of compiling and linking, is not correlated with code size---for most files, the sizes are between 50-100KB.

\paragraph{Effects of incrementalization on performance.}

The performance of incremental type checking is tied to the number of files that
are rechecked when a file is modified, \ie, the number of files that recursively
depend on that file.

The distribution of the number of recursive dependents is highly positively
skewed---the median is less than 10, and the \nth{90} percentile is less than
100. Considering the time taken to recheck to be roughly proportional to the
number of files to link, this means that in 90\% of cases, recheck time is less
than 200ms plus some constant. Of course this is only a rough calculation: in
practice, while this closely approximates the common scenario of editing single
files, it doesn't account for occasional rebases that can cause larger numbers
of files to be rechecked.

\section{Related Work}\label{sec:related}

%We conclude this paper by comparing related work; discussing limitations and
%threats to validity; and pointing out various avenues for future work.

There has been a lot of work on type systems for \js and related languages, as
well as set-based inference techniques.
We focus here only on the most closely related work.

%\subsection{Static Analysis for \js}

\subsection{Mainstream Type Systems for Dynamic Languages}
\citet{ts:design:goals} is a widely used typed superset of \js. Like \flow, it aims to improve
developer productivity by providing tooltips through IDEs. Unlike \flow, it
focuses only on finding ``likely errors'' without caring about soundness
\cite{Bierman2014}.
Type inference in \ts is mostly local and in some cases contextual; it doesn't
perform global type inference like \flow, so in general more annotations are needed.
Whenever type annotations are
missing, they are considered to be \kw{any} (instead of being implicitly
inferred). Thus, many type errors are missed. Consider for example the program:
\begin{jscode*}{
  linenos=false,
  xleftmargin=.25in,
  xrightmargin=.25in,
}
function square(n) { return n * n; }
square("oops");
\end{jscode*}
\flow will signal an error for trying to use a string as an argument to a
multiplication. \ts, on the other hand, will infer \kw{any}
as the type for \inlinejs{n} and accept this program as valid.
To get a similar behavior from \ts we would need to add a
type annotation to the parameter \inlinejs{n} of \inlinejs{square}.

Furthermore, even with fully
annotated programs, \ts misses type errors because of unsound typing rules. For
example, ``bivariant'' subtyping means that functions and instances of
polymorphic classes can be passed to contexts that do not preserve their typing
invariants, as can be seen in the following erroneous example:
\begin{tscode}
var assertString = (x: string) => assert(typeof x === "string");
var app = (f: (x: string | number) => void, x: number) => f(x);
app(assertString, 1);
\end{tscode}
A checker that implements sound contra-variant argument subtyping, like \flow,
would signal an error at the call to \inlinejs{app}, since
\inlinejs{assertString} is incompatible in its argument with the expected type
for parameter \inlinejs{f} of \inlinejs{app}.
In practice, this means that \ts developers have to code defensively
with dynamic checks, even when types are included.
\safets~\cite{Rastogi2015} ``fixes'' soundness problems in \ts with stricter
typing rules and runtime enforcement mechanisms to restore gradual
typing.

\citet{dart:spec} is another language that shares the
same philosophy. Unsoundness is a deliberate choice in \ts and \dart, motivated
by the desire to balance convenience with bug-finding. But we have enough
anecdotal evidence from developers at \facebook that focusing on soundness is
not only useful but also desirable, and does not necessarily imply
inconvenience.
Similar to \safets, recent work recovers soundness in \dart% by further analysis
~\cite{Heinze2016}.

% In contrast,
\citet{closure:spec} is another widely used type system for \js
that focuses on transforming code for size reduction. As far as we can tell, it
is sound modulo similar assumptions as \flow, but lacks type
inference. \racket \citep{Tobin-Hochstadt:2008}, and \citet{hack:spec} (for
PHP) are also quite close in spirit: their optional typing is at the level of
modules and they use occurrence typing to perform similar kinds of
refinements. They differ in that they lack type inference and,
compared to \flow, their treatment of mutable variables is far more
simplistic---there is no distinction between mutability on the stack and on the
heap. On the other hand, \flow heavily borrows from \hack's design and
implementation for scaling to millions of lines of code.

\subsection{Research Static Analysis for \js}

%\subsubsection{Syntactic Type Systems}

\paragraph{Early work}

Several static typing systems have successfully been ported to the dynamic
setting of \js. Early work by \citet{Thiemann2005} and \citet{Anderson2005}
focus on restricted subsets of the language.
% %
% Early on, \citet{Thiemann2005} provides a type system for flagging
% suspicious type conversions, and
% %
% \citet{Anderson2005} develop an inference algorithm that tracks object evolution,
% allowing members to be added to an object after it has been created. This is
% achieved by annotating each member of an object type as either potential or
% definite.

%\gray{
%\citet{Heidegger2010} introduce the notion of \emph{recency types}. They
%propose a system that infers precise singleton object types that are handled
%flow-sensitively and change during the objects' initialization phase.  When
%these precise types need to be used to a flow-insensitive context, the system
%automatically detects this and subsumes them to summary object types.
%%
%\citet{Zhao2011} presents an even more flexible type system with support for
%parametric polymorphism and singleton type objects that can be extended after
%they are assigned to variables of summary types.
%}

\paragraph{Type Refinement}

In the area of type refinement,
\citet{Guha2011} develop \emph{flow typing}, which, like \flow, supports type
narrowing as a consequence of control flow.
Unlike \flow, their analysis is strictly intra-procedural, does not perform type
inference and does not track non-local effects (\eg, variable updates).
Building on this work, \citet{Lerner2013} present a
framework for building type systems for \js, engineered modularly to encourage
experimentation, but which also suffers from limited type inference compared
to \flow.

\paragraph{Static Objects}

\citet{Choi2015} propose a static type system for ahead-of-time compilation of
\js that guarantees fixed object layout. Its type inference is based on very
similar foundations as \flow. \sjs focuses mainly on taming legacy
object-oriented features (constructor functions, open methods, and prototype
inheritance). While \flow does support these features, its model and guarantees
are different---it models these features with \emph{extensible} objects, and
guarantees type consistency of gets and sets of properties (but necessarily
their existence). Many of these concepts are replaced by classes in ``modern''
(ES6+) \js, where \flow can provide stronger guarantees with advanced type
system features like bounded polymorphism, \kw{this} types, and read-only
properties.
\citet{Chandra2016} build on this work by adding support for abstract objects,
first-class methods, and recursive objects, and prove their extensions sound.
Their type system supports additional features such as polymorphic arrays,
operator overloading, and intersection types in manually-written interface
descriptors for library code, that they found important for building
GUI applications.
Their formalization focuses on their object model.
Unlike \flow, they do not discuss type refinement based on conditional
checks, and their formalization is flow-insensitive, so less precise in
that respect compared to \flow.

\paragraph{Abstract Interpretation}

This is a more heavyweight approach in statically analyzing \js.
%Another area of static analysis for \js lies in the realm of
%abstract interpretation.
Approaches here include
TAJS~\citep{Jensen2009, Jensen2010, Jensen2011, Andreasen2014},
JSAI~\citep{Kashyap2014}
and SAFE~\citep{Lee2012, Park2015}. These approaches vary in precision,
user customizability and flow-, context- and path-sensitivity,
but being whole-program analyses, they are out of scope at our
scale, while being much more precise and not needing annotations.

%\subsubsection{Abstract Interpretation}
%%
%Inferring type related information has also benefited from the use of
%abstract interpretation.
%%
%TAJS~\citep{Jensen2009, Jensen2010, Jensen2011, Andreasen2014}
%is a whole-program dataflow analyzer. It is fully
%automated, in that it does not expect any user type annotations and offers high
%precision by being flow-sensitive and partly context- and path-sensitive, and using
%allocation site abstraction for objects and constant propagation for primitive
%values.
%%
%JSAI~\citep{Kashyap2014} is a similar framework, but offers user-specified analysis
%sensitivity and a complete formalism for their concrete and abstract semantics
%of \js.
%%
%Finally, SAFE~\citep{Lee2012, Park2015} claims even better precision than the
%above tools thanks to its Loop-Sensitive Analysis.
%%
%Being whole-program analyses the approaches are out of scope at our
%scale, while being much more precise and not needing annotations.

%\subsubsection{Points-to Analysis}

%One of the first attempt Andersen style points-to analysis for a subset of \js
%was proposed by \citet{Jang2009},
%%
%while \citet{Guarnieri2009} use points-to analysis
%to enforce security and reliability policies.
%%
%\citet{Sridharan2012}, building on top of ~\citet{wala},
%increase their points-to analysis accuracy, and therefore scalability, by tracking correlated
%dynamic property accesses.
%%
%Again, reported experimentation for these techniques is limited to programs that
%are orders of magnitude smaller than our benchmark set.

%\subsubsection{Program Logics}

\paragraph{Program Logics}

Recent advances in SMT solver technology has spurred the interest in using program
logics to track the dynamic behavior of \js programs.
DJS~\citep{Chugh2012} combines nested refinements with alias types~\citep{Smith2000},
a restricted separation logic,
to account for aliasing and flow-sensitive heap updates to
obtain a static type system for a large portion of \js.
DJS comes with very limited type inference and hence requires complex annotations
at function and loop boundaries.
To reduce this annotation burden, \citet{Vekris2016}
offer refinement type inference for \ts, based on Liquid Type
inference~\citep{Rondon2008}. Their type system is more expressive than
\flow's as it allows logical predicates (taken from a number of decidable
logics) to be attached on a base type system (a subset of \ts's type system).
However, this precision comes at the cost of a higher annotation burden and a
penalty on scalability.
Also, while offering global refinement type inference,
unlike \flow it does not infer the base (underlying) types of programs
and requires explicit immutability annotations.

%\subsection{Set-Constraint Based Type Inference}
\subsection{Inference and Subtyping in Dynamic Languages}

\paragraph{Constrained Types} % \label{sss:sba:foundations}

Set constraints have been used for the purpose of type inference by
\citet{Aiken1992} and \citet{Aiken1994}, who adopt the set-theoretic model to infer types in
a simple functional language.
\citet{Trifonov1996} and \citet{Pottier1998} infer polymorphic recursively
constrained types, but retain a simpler interpretation of type terms. In their
work, ground types are regular terms, and subtyping is defined explicitly on
terms. This enables various simplifications to their constraint sets, like
garbage collection~\citep{Eifrig1995, Pottier1998, Pottier2001}.
\citet{Flanagan1999} use a simpler type representation and, based on simplification
algorithms that exploit the observable equivalence of constraint sets, perform
\emph{componential} set-based analysis.

\flow builds directly on work by \citet{Pottier2001}, but does not infer
polymorphic types. Instead, it exposes features less frequently
addressed in
the context of set-constraint based analyses, such as variable updates and type
refinement based on conditional checks.
In addition, \flow's analysis is not context-sensitive, due in part to
anecdotal concerns about performance in \drjs~\cite{Vardoulakis:Thesis}.  In
practice, polymorphic type annotations recover context-sensitivity where
needed.

\paragraph{Constraint Graph Simplification}

The research directions above % in \S~\ref{sss:sba:foundations},
already include several
simplification techniques~\citep{Fahndrich1996, Flanagan1997}.
To further improve performance of inclusion constraint analyses,
\citet{Fahndrich1998} propose a
technique for eliminating cycles in constraint graphs
that is based on a non-standard graph representation called \emph{inductive form},
and only traverses part of the paths during the search for cycles.
To address the problem of redundant paths in a constraint graph, \citet{Su2000}
propose \emph{projection merging}, a technique intended to be used in conjunction with
the above.
In contrast, we directly implement
unification constraints using union-find over a base representation of inclusion
constraints.

\paragraph{Semantic and Algebraic Subtyping}

Advances in semantic and algebraic foundations have spurred
renewed interest in this rich area.
Semantic subtyping has been proposed in the context of
functional languages for XML based programming~\citep{Frisch2008},
ML-like languages~\citep{Castagna2016},
and more recently for imperative object-oriented languages, where
fields can be mutable~\citep{Ancona2016}, and in a gradual
typing setting~\citep{Castagna2017}.
Even though polymorphic type inference with subtyping is known to be
undecidable~\cite{Su2002}, \citet{Dolan2017} infer compact principal types by
keeping a strict separation between the types used to describe inputs and those
used to describe outputs (polarities).
In comparison, \flow is less ambitious with
union and intersection types.

\section{Limitations and Threats to Validity} \label{sec:limitations}

We conclude this paper by discussing limitations and
threats to validity.

\flow's analysis is cubic in the worst case. Although pathological examples are
not entirely uncommon, we have so far been able to mitigate them with
low-hanging optimizations.

Its analysis is context-insensitive, and also not well-suited for libraries
with reflection. Many libraries provide annotations without checked
implementations, so we can typecheck the vast majority of code that uses these
libraries. Better techniques for checking libraries (e.g., TAJS) can complement
\flow.

Like many other type systems for dynamically typed languages, \flow has the
\kw{any} type, with which type checking can be completely bypassed. Unlike
gradual type systems, though, there is no runtime enforcement of types when they
interact with \kw{any}. For sound gradual typing, the subtyping rules can be
augmented to mark all type constructors as either trusted or untrusted.

Even without \kw{any}, some aspects of \js force us into choosing unsoundness
where it is objectively justified. We can lay down the conditions for soundness,
but not enforce them. For example, arrays in \js can have ``holes'': it is
possible to add an element out of bounds, in which case any intermediate
positions are filled with \vundef. Likewise, records in \js can also be accessed
as dictionaries, so it is possible to read and write a named property by passing
a computed string. Short of complicated numeric and string analysis, soundness
would demand that we lose type information on array dereferences and dictionary
reads, but this is too restrictive in practice. Instead we hope that developers
who care about soundness will not create arrays with holes (e.g., by always
using \kw{Array.push} to add elements), or will check for \vundef on
dereferences when needed; and the properties that are named and those that are
accessed via computed strings are disjoint.

%% Acknowledgments
\begin{acks}                            %% acks environment is optional
  %                                      %% contents suppressed with 'anonymous'
  %%% Commands \grantsponsor{<sponsorID>}{<name>}{<url>} and
  %%% \grantnum[<url>]{<sponsorID>}{<number>} should be used to
  %%% acknowledge financial support and will be used by metadata
  %%% extraction tools.
  %This material is based upon work supported by the
  %\grantsponsor{GS100000001}{National Science
  %  Foundation}{http://dx.doi.org/10.13039/100000001} under Grant
  %No.~\grantnum{GS100000001}{nnnnnnn} and Grant
  %No.~\grantnum{GS100000001}{mmmmmmm}.  Any opinions, findings, and
  %conclusions or recommendations expressed in this material are those
  %of the author and do not necessarily reflect the views of the
  %National Science Foundation.
Thanks to Basil Hosmer, Jeff Morrison, Nat Mote, Satish Chandra, Caleb Meredith, and James Kyle for their
  contributions to Flow's design and implementation, to Julien Verlaguet, Dwayne Reeves, and Yoann Padioleau
  for their work on infrastructure that Flow is built on, and to anonymous
  reviewers for their valuable feedback on previous drafts of this paper.
\end{acks}

%% Bibliography
\bibliography{main}

%%% Appendix

\clearpage

\appendix
\appendixpage

\linespread{1.3}

\setlist[itemize]{listparindent=0pt, topsep=3pt, itemsep=3pt}
\setlist[enumerate]{topsep=3pt, itemsep=3pt}

% enumerate
\counterwithin{equation}{theorem}
\counterwithin{enumii}{enumi}
\counterwithin{enumiii}{enumii}

These appendices contain material that was omitted from the main
paper. All definitions that are included in the main paper also hold here.
In addition we introduce ground types (the model
upon which our types are based), substitutions and subtyping, as well
as a brief note on polarities that were mentioned in the main paper.
We also include typing for our language's runtime, and proofs of
soundness of the inference of our system with respect to the declarative system
introduced in the main paper and the type safety of the declarative system.

\section{Types}
\label{sec:app:flow:types}

In this section we include a discussion on ground types
which is the model that the types described in
\secref{sec:core} are based on.
We then provide some
more context on the notion of polarity that was alluded to
during the discussion about constraint propagation
(\secref{sec:cons:prop}).
Finally, we define notions related to type subsumption as
they are going to be useful for the statement of lemmas and
theorems moving forward.

\subsection{Ground Types}\label{subsec:app:flow:ground}

At the basis of the type language described in
\secref{sec:types} is the notion of ground types.
The formulation of our ground type language follows the one
presented by~\citet{Pottier1998}. Here we will focus on the
changes we made to adapt that formulation to our
system's needs.
Ground types in our system are \emph{regular trees}. The
formal definition is similar to
\citet[Definition~1.1]{Pottier1998} but our ground signature
$\Sigma_g$ contains the terminals \tbase and $\rightarrow$
for types and the terminals \effempty and the set of
program variables \evarset for effects. Also $\rightarrow$ has arity 3
to also account for the function's effect, whose position is co-variant.

%In conclusion, we have introduced ground types (and
%effects), which are the logical model for the types
%introduced in \chapref{ch:flow}. 

\paragraph{Ground Substitutions}
We connect the notion of types that were introduced in
\secref{sec:types} with ground types using
the notion of ground substitutions \gsubst.

\begin{definition}[Ground Substitution]
  A \emph{ground substitution} \gsubst (we will also refer
  to it as \emph{solution}) is a total mapping from type variables to ground types.
\end{definition}

A ground substitution can be applied to types by
recursively applying the substitution to the parts of the
type, replacing type variables with their ground type
mapping.

As regular trees, ground types can be infinite structures,
whereas the types we introduced in the main part are finite,
but crucially include type variables. This means that a
finite, yet recursively defined, type may correspond
(through a substitution) to an infinite ground type.

\subsection{Ground Subtyping}
Because of their infinite nature defining a subtyping
relation on ground types requires some special treatment.
Here, we define an ordering on ground types by quantifying over
paths in the regular trees that represent types. The
symbol $\leq_k$ denotes subtyping up to level $k$. The
definition is similar to
\citet[Definition~1.5]{Pottier1998}. 
For the case of effects, $\leq_0$ is reflexive and $\bot$
is the minimum element.
The definition of the subtyping relation $\leq$ over ground
types follows \citet[Definition~1.4]{Pottier1998} and
as in \citet[Proposition~1.3]{Pottier1998},
$\issubtype{\type}{\type'}$ is equivalent to:
$$
\foralldot{k\geq0}{\type\leq_{k}\type'}
$$
Equipping our ground alphabet with $\bot$ and $\top$ for
types, and $\top$ for effects (we have omitted them from our
formulation to avoid clutter), and using the subtyping
relation, our ground types can form a
lattice. The proof follows
\citet[Proposition~1.3]{Pottier1998}.

\paragraph{Effects}
The subtyping relation is extended to ground effects as well (we also 
refer to them as concrete effects).
Concrete effects can be interpreted as sets of variables and so effect
subtyping corresponds to the subset relation.

\paragraph{Environments}

In the following we assume that applying a substitution \gsubst to an
environment $\env$ of the constraint generation system returns a pair
containing two environments:
\begin{itemize}
  \item a concrete \emph{flow-sensitive environment} \cenv 
    binding variables to ground types in a flow-sensitive manner
    (\ie types that correspond to the base of the entries in \env), and 
  \item a \emph{general environment} \genv binding variables to ground types
    corresponding to the general type of each entry in \env.
\end{itemize}
We write this as:
$$
\is{\appsubst{\gsubst}{\env}}{\dividerpair{\cenv}{\genv}}
$$
We also use indexes as subscripts to retrieve the first or second
part of the above pair:
\begin{align*}
\is{\appsubst{\gsubst}{\env}_1}{\cenv}
&&
\is{\appsubst{\gsubst}{\env}_2}{\genv}
\end{align*}

The subtyping relation is extended to \cenv and \genv in a point-wise manner.

Note that this is different that the notation we used in the
main paper where we kept type entries of the form
\entry{\type}{\type'}. We made this change with the hope
that it removes unnecessary clutter from our formalization.

\subsection{Constraint Satisfaction}

The following definitions relate ground substitutions with constraint sets.

\begin{definition}[Constraint Satisfaction]
\label{def:flow:cons:sat}
We say that a ground substitution \gsubst satisfies a constraint \cons, 
and we write \satisfies{\gsubst}{\cons},
if the corresponding subtyping relation(s)
in the right hand side of the definitions below hold(s):
\[
\centering
\begin{tabular}{>{$}l<{$} >{$}c<{$} >{$}l<{$}}
  \satisfies{\gsubst}{\flows{\type}{\tvar}}                      & \defeq & \issubtype{\appsubst{\gsubst}{\type}}{\appsubst{\gsubst}{\tvar}} \\
  \satisfies{\gsubst}{\flows{\type}{\calluse{\type_1}{\effect}{\type_2}}}
                                                                 & \defeq & \issubtype{\appsubst{\gsubst}{\type}}{\appsubst{\gsubst}{\tarrow{\type_1}{\effect}{\type_2}}} \\
  \satisfies{\gsubst}{\flows{\type}{\preduse{\pred}{\type'}}}    & \defeq & \issubtype{\appsubst{\gsubst}{\refine{\type}{\pred}}}{\appsubst{\gsubst}{\type'}} \\
  \satisfies{\gsubst}{\flows{\type}{\getuse{\fieldsym}{\type'}}} & \defeq & \issubtype{\appsubst{\gsubst}{\type}}{\appsubst{\gsubst}{\tobjsingleton{\fieldsym}{\type'}}} \\
  \satisfies{\gsubst}{\flows{\type}{\setuse{\fieldsym}{\type'}}} & \defeq & \issubtype{\appsubst{\gsubst}{\type}}{\appsubst{\gsubst}{\tobjsingleton{\fieldsym}{\type'}}} \\
  \satisfies{\gsubst}{\flows{\effect}{\effvar}}                  & \defeq & \issubtype{\appsubst{\gsubst}{\effect}}{\appsubst{\gsubst}{\effvar}} \\
  \satisfies{\gsubst}{\flows{\effect}{\havocuse{\env}}}          & \defeq & \forallindot{\evar}{\appsubst{\gsubst}{\effect}}{\issubtype{\idx{\genv}{\evar}}{\idx{\cenv}{\evar}}}  \\
                                                                 &        & \text{where } \is{\appsubst{\gsubst}{\env}}{\dividerpair{\cenv}{\genv}}
\end{tabular}
\]
\end{definition}

\begin{definition}[Constraint Set Satisfaction under Substitution]
We say that a ground substitution \gsubst satisfies a constraint set \cset, 
and we write \satisfies{\gsubst}{\cset}, if for all \cons of \cset it holds that 
\satisfies{\gsubst}{\cons}.
\end{definition}

The following proposition connects constraint set consistency that was
discussed in \secref{sec:consistency} with constraint satisfiability
under ground substitution defined above.

\begin{proposition}[Constraint Set Satisfaction]
A (saturated) constraint set \cset is \emph{satisfiable},
iff there exists ground substitution \gsubst \st \satisfies{\gsubst}{\cset}.
\end{proposition}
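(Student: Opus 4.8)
The plan is to read ``satisfiable'' as the syntactic consistency notion of \secref{sec:consistency}---a closed \cset is satisfiable exactly when it is \emph{consistent}, i.e.\ it contains none of the forbidden constraint shapes---and to prove the biconditional one direction at a time. The direction \emph{there exists \gsubst with \satisfies{\gsubst}{\cset}, therefore consistent} is routine and I would establish it by contraposition; the direction \emph{consistent, therefore there exists \gsubst with \satisfies{\gsubst}{\cset}} is the substantive one and requires constructing a canonical model from the closed constraint set. Throughout I use that \cset is saturated, so every propagation rule of \figref{fig:flow:cons:prop} has been applied to a fixpoint.

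For the easy direction, suppose \cset is inconsistent; then it contains a constraint in one of the two forbidden shapes. Take \flows{\tlit}{\calluse{\type}{\effvar}{\tvar}} with \tlit not an arrow type. By Definition~\ref{def:flow:cons:sat}, any satisfying \gsubst would force $\issubtype{\appsubst{\gsubst}{\tlit}}{\appsubst{\gsubst}{\tarrow{\type}{\effect}{\tvar}}}$; but $\appsubst{\gsubst}{\tlit}$ is a ground tree rooted at \tbase or at the record constructor, and ground subtyping---being defined level-wise, and hence preserving the top-level constructor---never places such a tree below an arrow tree. Hence no \gsubst satisfies this constraint, so none satisfies \cset. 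The record get/set shapes are handled identically, using that a non-record literal cannot be a subtype of $\tobjsingleton{\fieldsym}{\tvar}$. Contraposing yields satisfiable $\Rightarrow$ consistent.

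For the hard direction I would construct the least solution from lower bounds, following \citet{Pottier2001}. For each type variable \tvar, let $\appsubst{\gsubst}{\tvar}$ be the join of all $\appsubst{\gsubst}{\tlit}$ with \flows{\tlit}{\tvar} in \cset, and dually interpret each effect variable as the set of program variables that reach it. Because ground types are regular, possibly infinite, trees, this is not a plain recursion: I would define \gsubst as the fixpoint of the induced tree equations and verify every inequality level-by-level through the $\leq_k$ approximation of ground subtyping. Verifying \satisfies{\gsubst}{\cset} then proceeds constraint by constraint, each step appealing to saturation. \cptrans guarantees that every lower bound of \type is already a lower bound of \tvar, so \flows{\type}{\tvar} holds by construction. \cpcall, \cpget, and \cpset decompose call, get, and set uses into component flows that already lie in \cset and are satisfied by the inductive hypothesis, while consistency guarantees that the matching constructor is present so the decomposition is well-formed. \cppredbase and \cppredtrans ensure that a literal passes a predicate use under \gsubst precisely when \checksym succeeds, which matches the $\refine{\type}{\pred}$ clause of satisfaction. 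Finally, for each \flows{\evar}{\havocuse{\env}} with \env binding \evar to \entry{\tvarb}{\tvar}, rule \cphavoc supplies the flow \flows{\tvar}{\tvarb}, whose image under \gsubst is exactly the inequality $\issubtype{\idx{\genv}{\evar}}{\idx{\cenv}{\evar}}$ demanded by the havoc clause.

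The main obstacle is this forward construction, and two points within it deserve the most care. First, well-definedness of \gsubst on cyclic constraints: the solution is an infinite regular tree, so structural induction on types is unavailable and the subtyping obligations must be discharged coinductively via $\leq_k$, leaning on \citet{Pottier1998} for the lattice structure of ground types. Second, the polarity side condition on \cppredtrans is exactly what keeps ``the join of lower bounds'' a legitimate solution: because refinement only concretizes variables in positive positions, the join taken at each variable never migrates a \tjoin{}{} into a contravariant position, where it could falsify a call- or set-constraint; hence the least-solution recipe stays sound. The effect and havoc clause is the single place where satisfaction runs ``backwards'' (general below flow-sensitive), and confirming that \cphavoc-closure delivers precisely that inequality is the last piece of nontrivial bookkeeping.
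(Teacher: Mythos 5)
The paper states this proposition without proof, so there is nothing to compare your attempt against line by line; what you have written is essentially the proof the authors omitted. Your reading of ``satisfiable'' as the syntactic consistency of \secref{sec:consistency} is the intended one (the sentence introducing the proposition says it ``connects constraint set consistency \ldots with constraint satisfiability under ground substitution''), and both halves of your plan are the standard \citet{Pottier2001}-style argument: contraposition via constructor-preservation of level-wise ground subtyping for the easy direction, and the least solution (each unknown mapped to the join of its literal lower bounds, effects to the sets of program variables reaching them) verified coinductively through $\leq_k$ for the hard direction. The case analysis over the propagation rules is the right decomposition, and your two flagged obstacles --- well-definedness on cyclic constraints and the polarity condition on \cppredtrans{} protecting contravariant positions from joins --- are indeed where the real work sits.

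One point deserves to be made explicit rather than waved at: the predicate case. Satisfaction of \flows{\type}{\preduse{\pred}{\type'}} is defined through the ground-level operation \refine{\type}{\pred}, while saturation only gives you \cppredbase{} phrased through the syntactic test \checkpred{\tlit}{\pred}. The paper keeps both \checksym{} and \refine{}{} abstract and never states the bridging axiom you need, namely that \refine{}{} distributes over joins and that $\refine{\appsubst{\gsubst}{\tlit}}{\pred}$ is $\appsubst{\gsubst}{\tlit}$ when \checkpred{\tlit}{\pred} holds and $\bot$ otherwise; without postulating this, the step ``passes the check iff it survives refinement'' does not follow from anything in the development. This is a gap in the paper's setup rather than in your reasoning, but a complete write-up must state it as an assumption. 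With that caveat, your plan is sound.
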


\subsection{Polarities} \label{sec:app:polarities}
%%%%%%%%%%%%%%%%%%%%%%%%%%%%%%%%%%%%%%%%%%%%%%%%%%%%%%%%%%%%%%%%%%%%%%%%%%%%%%%%

%   In \secref{sec:types} we introduced a constraint syntax
%   that greatly restricted the form of elements that appeared
%   in the right-hand side of constraints. The reason we were
%   aggressive about that was to avoid having joins (\tjoin{}{})
%   at top-level of uses, since this is known to induce disjunctions in the
%   logical system. 
%   
%   
%   Allowing all type and use forms introduced in the main part
%   to appear indiscreetly in our constraints would result in a
%   complicated inference system, since we would have to account
%   for considerably more cases than the ones outlined.
%   %
%   In reality, the constraint generation rules we introduced
%   only produce a subset of the constraint language that is
%   syntactically allowed.
%   %
%   Crucially, 
%   %in the type inference mechanism described in this section, 
%   we wish to ban joins (\tjoin{}{}) from the top-level of
%   uses, since this is known to induce disjunctions in the
%   logical system. 
%   %
%   Following prior work~\cite{Dolan2017, Pottier1998}, we use
%   the notion of \emph{polarity} to impose structure on types,
%   effect \etc
%   %
%   %Similar approaches are followed by~\citet{Pottier1998}
%   %and~\citet{Dolan2017}, with the religious adherence to the
%   %\emph{mono-polarity} invariant. In the former, this is the
%   %basis for the garbage collection mechanism.

In \secref{sec:cons:prop}, we introduced
Rule~\hyperref[rule:cppredtrans]{\cppredtrans} that
contained the notion of a ``positive type hole''. To define
this formally we first introduce \emph{polar} types, which
can be positive or negative.
A  \emph{positive} type \ptype is a type used to to describe
outputs, whereas a \emph{negative} type \ntype describes
inputs. Similar definitions hold for effects (\peffect and
\neffect). 
Formally:
\begin{align*}
  \ptype & \production \tbase \sep \tarrow{\ntype_1}{\peffect}{\ptype_2} \sep
                          \tobj{\fieldsym_1}{\ptype_1}{\fieldsym_n}{\ptype_n} \sep 
                          \tvar \sep
                        \tjoin{\ptype_1}{\ptype_2}
  \\
  \ntype & \production \tbase \sep \tarrow{\ptype_1}{\neffect}{\ntype_2} \sep
                          \tobj{\fieldsym_1}{\ntype_1}{\fieldsym_n}{\ntype_n} \sep 
                        \tvar 
  \\
  \peffect & \production \effempty \sep \evar
  \sep \effvar \sep \ejoin{\peffect_1}{\peffect_2}
  \\
  \neffect & \production \effempty \sep \evar
  \sep \effvar 
\end{align*}

%The formal definition of these notions can be
%found in \figref{fig:app:flow:polar}.

With this in mind we now define a \emph{type context} \tctx as a
type that contains a hole \hole in one of its leafs. Type
contexts also come in two flavors:
\begin{align*}
  \ptctx & \production 
    \tbase \sep \tarrow{\ntctx_1}{\peffect}{\ptctx_2} \sep
    \tobj{\fieldsym_1}{\ptctx_1}{\fieldsym_n}{\ptctx_n} \sep 
    \tvar \sep \tjoin{\ptctx_1}{\ptctx_2} \sep \tctxidx{}{}
  \\
  \ntctx & \production 
    \tbase \sep \tarrow{\ptctx_1}{\neffect}{\ntctx_2} \sep
    \tobj{\fieldsym_1}{\ntctx_1}{\fieldsym_n}{\ntctx_n} \sep 
    \tvar
\end{align*}

The critical part in the above definition is that negative contexts
\ntctx do not contain joins at their top-levels.

\section{Declarative Type System}
\label{sec:app:declarative}

\begin{figure}[t!]
\judgementHead{Expression Typing}
{\tcexpr{\cenv}{\genv}{\expr}{\type}{\effect}{\predmap}{\cenv'}}
\begin{mathpar}
\inferruleright{\tcconst\label{rule:tcconst}}{
}{
  \tcexpr{\cenv}{\genv}{\const}{\tbase_{\const}}{\effempty}{\predmapempty}{\cenv}
}
\and
\inferruleright{\tcvar\label{rule:tcvar}}{
  \idxis{\cenv}{\evar}{\type}
}{
  \tcexpr{\cenv}{\genv}{\evar}{\type}{\effempty}{\singletonpredmap{\evar}{\truthy}}{\cenv}
}
\and
\inferruleright{\tcassign\label{rule:tcassign}}{
  \tcexpr{\cenv}{\genv}{\expr}{\type}{\effect}{\predmap}{\cenv'}
  %\\
  %\idxis{\cenv'}{\evar}{\entry{\type_0}{\type}}
}{
  \tcexpr{\cenv}{\genv}{\assign{\evar}{\expr}}{\type}
    {\effectconcat{\effect}{\evar}}
    {\logforget{\predmap}{\evar}}
    {\envupd{\cenv'}{\evar}{\type}}
}
\and
\inferruleright{\tcfun\label{rule:tcfun}}{
  \tcbody
    {\envcons{\envext{\eraseto{\genv}{\cenv}}{\evar}{\type}}{\locals{\stmt}}}{\genv}
    {\body{\stmt}{\expr}}{\type'}{\effect}{\cenv'}
}{
  \tcexpr{\cenv}{\genv}{\arrow{\evar}{\stmt}{\expr}}
    {\tarrow{\type}{\logforget{\effect}{\evar}}{\type'}}
    {\effempty}
    {\predmapempty}
    {\cenv}
}
\and
\inferruleright{\tccall\label{rule:tccall}}{
  \tcexpr{\cenv}{\genv}{\expr_1}{\type_1}{\effect_1}{\predmap_1}{\cenv_1}
  \\
  \tcexpr{\cenv_1}{\genv}{\expr_2}{\type_2}{\effect_2}{\predmap_2}{\cenv_2}
  \\
  \issubtype{\type_1}{\tarrow{\type_2}{\effect}{\type}}
  \\
  \is{\cenv'}{\erasevarto{\cenv_2}{\genv}{\effect}}
}{
  \tcexpr{\cenv}{\genv}{\efuncall{\expr_1}{\expr_2}}{\type}
    {\effectconcatthree{\effect_1}{\effect_2}{\effect}}
    {\predmapempty}{\cenv'}
}
\end{mathpar}
\caption{Expression Typing in \flowcore (Variables and Functions)}
\label{fig:flow:expr:var:fun:typing}
\end{figure}

In Figures~\ref{fig:flow:expr:var:fun:typing},~\ref{fig:flow:expr:rec:typing} 
and~\ref{fig:flow:stmt:typing} we define a declarative type
system that assigns types to expressions and statements of
\flowcore. The typing judgments for expressions and
statements are:
\begin{align*}
\tcexpr{\cenv}{\genv}{\expr}{\type}{\effect}{\predmap}{\cenv'}
&& 
\tcstmt{\cenv}{\genv}{\stmt}{\effect}{\cenv'}
\end{align*}
Here types \type are identical in structure to the types
introduced for the inference system but are concrete, \ie
there contain no type variables.
As mentioned earlier, environments \cenv bind variables \evar to types \type
(instead of type entries containing both a precise and a
general type).
The most general type for each variable is included in
environment \genv -- a flow-insensitive structure that
gathers the most general type (globally) for each variable
across the entire program. Thanks to $\alpha$-renaming each
defined variable to a unique name, there is no ambiguity
among variable identifiers.

Effects \effect are also concrete in this declarative
system. This means that they can now be directly interpreted
as sets of variables (since no effect variables are
presents).

We use the shorthand $\eraseto{\genv}{\cenv}$ to denote the
erasure of an environment \cenv with the types of \genv. This
operation effectively creates a new environment binding all
variables in \cenv to their bound types in \genv.
We also introduce the variant
$\erasevarto{\cenv}{\genv}{\effect}$,
where \effect is a concrete effect, to denote the environment
$\cenv\brackets{\mapping{\evar}{\idx{\genv}{\evar}}\mid\inset{\evar}{\effect}}$.

Environment join ($\sqcup$) and environment refinement ($\coloncolon$)
have similar definitions as in their constraint generation counterparts of
\figref{fig:flow:env:aux}, and so are omitted here.

\begin{figure}[t!]
\judgementHead{Expression Typing}
{\tcexpr{\cenv}{\genv}{\expr}{\type}{\effect}{\predmap}{\cenv'}}
\begin{mathpar}
\inferruleright{\tcand\label{rule:tcand}}{
  \tcexpr{\cenv}{\genv}{\expr_1}{\type_1}{\effect_1}{\predmap_1}{\cenv_1}
  \\
  \tcexpr{\refine{\cenv_1}{\predmap_1}}{\genv}{\expr_2}{\type_2}{\effect_2}{\predmap_2}{\cenv_2}
  \\
  \is{\type}{\tjoin{\refine{\type_1}{\falsy}}{\type_2}}
  \\
  \is{\effect}{\effectconcat{\effect_1}{\effect_2}}
  \\
  \is{\predmap}{\logand{\parens{\logforget{\predmap_1}{\effect_2}}}{\predmap_2}}
  \\
  \is{\cenv'}{\envjoin{\parens{\refine{\cenv_1}{\logneg{\predmap_1}}}}{\cenv_2}}
}{
  \tcexpr{\cenv}{\genv}{\binand{\expr_1}{\expr_2}}{\type}{\effect}{\predmap}{\cenv'}
}
\and
\inferruleright{\tcor\label{rule:tcor}}{
  \tcexpr{\cenv}{\genv}{\expr_1}{\type_1}{\effect_1}{\predmap_1}{\cenv_1}
  \\
  \tcexpr{\refine{\cenv_1}{\logneg{\predmap_1}}}{\genv}{\expr_2}{\type_2}{\effect_2}{\predmap_2}{\cenv_2}
  \\
  \is{\type}{\tjoin{\refine{\type_1}{\truthy}}{\type_2}}
  \\
  \is{\effect}{\effectconcat{\effect_1}{\effect_2}}
  \\
  \is{\predmap}{\logor{\parens{\logforget{\predmap_1}{\effect_2}}}{\predmap_2}}
  \\
  \is{\cenv'}{\envjoin{\parens{\refine{\cenv_1}{\predmap_1}}}{\cenv_2}}
}{
\tcexpr{\cenv}{\genv}{\binor{\expr_1}{\expr_2}}{\type}{\effect}{\predmap}{\cenv'}
}
\and
\inferruleright{\tcnot\label{rule:tcnot}}{
  \tcexpr{\cenv}{\genv}{\expr}{\type}{\effect}{\predmap}{\cenv'}
}{
\tcexpr{\cenv}{\genv}{\unaryneg{\expr}}{\tboolean}{\effect}{\logneg{\predmap}}{\cenv'}
}
\and
\inferruleright{\tcpred\label{rule:tcpred}}{
}{
\tcexpr{\cenv}{\genv}{\predof{\evar}}{\tboolean}{\effempty}{\singletonpredmap{\evar}{\primpred}}{\cenv}
}
\and
\inferruleright{\tcrecord\label{rule:tcrecord}}{
  \isequiv{\cenv}{\cenv_0}
  \\
  \forallindot{i}{[1,n]}{
    \tcexpr{\cenv_{i-1}}{\genv}{\expr_{i}}{\type_{i}}{\effect_{i}}{\predmap_{i}}{\cenv_{i}}
  }
  \\
  \forallindot{i}{[1,n]}{
    \issubtype{\type_{i}}{\type_{i}'}
  }
}{
\tcexpr{\cenv}{\genv}{\objlit{\fieldsym_1}{\expr_1}{\fieldsym_{n}}{\expr_{n}}}
  {\tobj{\fieldsym_1}{\type_1'}{\fieldsym_{n}}{\type_{n}'}}
  {\medsqcupl{i=1}{n}\effect_{i}}{\predmapempty}{\cenv_{n}}
}
\and
\inferruleright{\tcfldrd\label{rule:tcfldrd}}{
  \tcexpr{\cenv}{\genv}{\expr}{\type}{\effect}{\predmap}{\cenv'}
  \\
  \issubtype{\type}{\tobjsingleton{\fieldsym}{\type'}}
}{
  \tcexpr{\cenv}{\genv}{\fieldread{\expr}{\fieldsym}}{\type'}{\effect}{\predmapempty}{\cenv'}
}
\and
\inferruleright{\tcfldwr\label{rule:tcfldwr}}{
  \tcexpr{\cenv}{\genv}{\expr_1}{\type_1}{\effect_1}{\predmap_1}{\cenv_1}
  \\
  \issubtype{\type_1}{\tobjsingleton{\fieldsym}{\type_{\fieldsym}}}
  \\
  \tcexpr{\cenv_1}{\genv}{\expr_2}{\type_2}{\effect_2}{\predmap_2}{\cenv_2} 
  \\
  \issubtype{\type_2}{\type_{\fieldsym}}
}{
  \tcexpr{\cenv}{\genv}{\fieldwrite{\expr_1}{\fieldsym}{\expr_2}}{\type_2}
    {\effectconcat{\effect_1}{\effect_2}}
    {\predmap_2}
    {\cenv_2}
}
\end{mathpar}
\caption{Expression Typing in \flowcore (Logical Operators and Records)}
\label{fig:flow:expr:rec:typing}
\end{figure}

\begin{figure}[t!]
\judgementHead{Statement Typing}
{\tcstmt{\cenv}{\genv}{\stmt}{\effect}{\cenv'}}
\begin{mathpar}
\inferruleright{\tcexp\label{rule:tcexp}}{
  \tcexpr{\cenv}{\genv}{\expr}{\type}{\effect}{\predmap}{\cenv'}
}{
  \tcstmt{\cenv}{\genv}{\expr}{\effect}{\cenv'}
}
\and
\inferruleright{\tcvardecl\label{rule:tcvardecl}}{
  \tcexpr{\cenv}{\genv}{\expr}{\type}{\effect}{\predmap}{\cenv'}
}{
  \tcstmt{\cenv}{\genv}{\varassign{\evar}{\expr}}{\effectconcat{\effect}{\evar}}{\envupd{\cenv'}{\evar}{\type}}
}
\and
\inferruleright{\tcif\label{rule:tcif}}{
  \tcexpr{\cenv}{\genv}{\expr}{\type}{\effect}{\predmap}{\cenv'}
  \\
  \tcstmt{\refine{\cenv'}{\predmap}}{\genv}{\stmt_1}{\effect_1}{\cenv_1'}
  \\
  \tcstmt{\refine{\cenv'}{\negpredmap{\predmap}}}{\genv}{\stmt_2}{\effect_2}{\cenv_2'}
}{
  \tcstmt{\cenv}{\genv}{\ite{\expr}{\stmt_1}{\stmt_2}}{\effectconcatthree{\effect}{\effect_1}{\effect_2}}
  {\envjoin{\cenv_1'}{\cenv_2'}}
}
\and
\inferruleright{\tcseq\label{rule:tcseq}}{
  \tcstmt{\cenv}{\genv}{\stmt_1}{\effect_1}{\cenv_1}
  \\
  \tcstmt{\cenv_1}{\genv}{\stmt_2}{\effect_2}{\cenv_2}
}{
  \tcstmt{\cenv}{\genv}{\seq{\stmt_1}{\stmt_2}}{\effectconcat{\effect_1}{\effect_2}}{\cenv_2}
}
\end{mathpar}
\caption{Statement Typing in \flowcore}
\label{fig:flow:stmt:typing}
\end{figure}

%\clearpage

\section{Runtime Typing}

Stating a progress and preservation theorem requires us to
extend the notion of well-typed expressions and statements to
runtime configurations.

\subsection{Term Typing}

\paragraph{Expressions \& Statements}
First we extend typing to runtime expressions. The 
judgment form is similar to the one for static expressions
with the difference that we have to include locations \loc
in the set of typeable expressions. To do that we equip our
judgment with an additional argument, the \emph{heap typing}
\heapty, defined as:
$$
\heapty \production \heaptyempty \sep \envext{\heapty}{\loc}{\type}
$$
The expression typing judgment becomes:
$$
\tcrtexpr{\cenv}{\genv}{\heapty}{\expr}{\type}{\effect}{\predmap}{\cenv'}
$$
Extending the rules for expression typing in
Figures~\ref{fig:flow:expr:var:fun:typing} and~\ref{fig:flow:expr:rec:typing}
to runtime expressions is
straightforward. An important addition is the rule for
location \loc typing:
\begin{mathpar}
\inferruleright{\tcrtloc\label{rule:rt:tcloc}}{
  \is{\idx{\heapty}{\loc}}{\type}
}{
  \tcrtexpr{\cenv}{\genv}{\heapty}{\loc}{\type}{\effempty}{\predmapempty}{\cenv}
}
\end{mathpar}

Similarly the form of typing runtime statements is extended
to:
$$
\tcrtstmt{\cenv}{\genv}{\heapty}{\stmt}{\effect}{\cenv'}
$$

\paragraph{Evaluation Contexts}
A more interesting situation arises when we try to extend
the judgment to evaluation contexts \ectx. The main issue
here is that the object under judgment contains a ``hole''
where another expression is expected to appear.
To address this we include a ``hole'' in the type structure of the return
type to host the type of the term that is expected to fill
in the hole of the evaluation context. The linked effect
and predicate are handled in a similar fashion:
$$
\tcrtexpr{\cenv}{\genv}{\heapty}{\ectx}
    {\ectxidx{\type'}{\type}}
    {\ectxidx{\effect'}{\effect}}
    {\ectxidx{\predmap'}{\predmap}}{\cenv'}
$$
\figref{fig:flow:ectx:typing} contains a selection of rules
for this judgment.

%Associated with the typing of evaluation contexts is the following lemma that
%deconstructs typing of a filled evaluation context to the typing of
%its parts.

\begin{figure}[t!]
\judgementHead{Evaluation Context Typing Rules (selected)}
  {\tcrtexpr{\cenv}{\genv}{\heapty}{\ectx}
    {\ectxidx{\type'}{\type}}
    {\ectxidx{\effect'}{\effect}}
    {\ectxidx{\predmap'}{\predmap}}{\cenv'}}
\begin{mathpar}
\inferruleright{\ectxhole\label{rule:ectx:hole}}{
}{
\tcrtexpr{\cenv}{\genv}{\heapty}{\ectxemp}{\ectxemp}{\ectxemp}{\ectxemp}{\cenv}
}
\and
\inferruleright{\ectxcall\label{rule:ectx:call}}{
  \tcrtexpr{\cenv}{\genv}{\heapty}{\ectx}
    {\ectxidx{\type_1'}{\type_1}}
    {\ectxidx{\effect_1'}{\effect_1}}
    {\ectxidx{\predmap_1'}{\predmap_1}}
    {\cenv_1}
  \\
  \tcexpr{\cenv_1}{\genv}{\expr}{\type_2}{\effect_2}{\predmap_2}{\cenv_2}
  \\
  \issubtype{\type_1'}{\tarrow{\type_2}{\effect}{\type}}
  \\
  \is{\cenv'}{\erasevarto{\cenv_2}{\genv}{\effect}}
}{
  \tcrtexpr{\cenv}{\genv}{\heapty}{\efuncall{\ectx}{\expr}}
    {\ectxidx{\type}{\type_1}}
    {\effectconcatthree{\ectxidx{\effect_1'}{\effect_1}}{\effect_2}{\effect}}
    {\ectxidx{\predmapempty}{\predmap_1}}{\cenv'}
}
%     \and
%     %
%     %
%     \inferruleright{\ectxassign\label{rule:ectx:assign}}{
%       \jrtexpr{\cenv}{\heapty}{\ectx}
%         {\ectxidx{\type}{\tvar}}{\ectxidx{\effect}{\effvar}}{\ectxidx{\predmap}{\predmapvar}}{\cg}{\cenv'}
%       \\
%       \idxis{\cenv'}{\evar}{\entry{\type_{\evar}}{\tvar_{\evar}}}
%     }{
%       \jrtexpr{\cenv}{\heapty}{\assign{\evar}{\ectx}}{\ectxidx{\type}{\tvar}}
%         {\ectxidx{\effectconcat{\effect}{\evar}}{\effvar}}
%         {\ectxidx{\logforget{\predmap}{\evar}}{\predmapvar}}
%         {\consconcat{\cg}{\uniflow{\type}{\tvar_{\evar}}}}
%         {\envupd{\cenv'}{\evar}{\entry{\type}{\tvar_{\evar}}}}
%     }
\end{mathpar}
\caption{Evaluation Context Typing in \flowcore}
\label{fig:flow:ectx:typing}
\end{figure}

%A natural extension of the definition of evaluation context typing is the
%following lemma that accounts for substituting an expression in the 
%hole of an evaluation context.

%\begin{lemma}[Evaluation Context Typing]\label{lemma:ectx:ty}
%If
%\begin{enumerate}[label=(\roman*)]
%  \item \tcrtexpr{\cenv}{\genv}{\heapty}{\expr}{\type}
%      {\effect}{\predmap}{\cenv'}
%  \item 
%    \tcrtexpr{\cenv'}{\genv}{\heapty}{\ectx}
%      {\ectxidx{\type'}{\type}}
%      {\ectxidx{\effect'}{\effect}}
%      {\ectxidx{\predmap'}{\predmap}}
%      {\cenv''} 
%\end{enumerate}
%then
%\shiftedalign{
%  \tcrtexpr{\cenv}{\genv}{\heapty}{\ectxidx{\ectx}{\expr}}{\type'}{\effect'}{\predmap'}{\cenv''}
%}
%\end{lemma}
%\begin{proof}
%  \emph{By induction on the second given derivation.}
%\end{proof}

When inverting typing relations, we often need to decompose
the typing of filled evaluation contexts \ectxidx{\ectx}{\expr}.
The following lemma deconstructs
the typing of such an expression to the typing of a bare evaluation
context \ectx and a typing of the filling expression \expr.

\begin{lemma}[Decomposing Evaluation Context Typing]
\label{lemma:decomp:ectx}
If
\shiftedalign{
  \tcrtexpr{\cenv}{\genv}{\heapty}{\ectxidx{\ectx}{\expr}}{\type}{\effect}{\predmap}{\cenv''}
}
then there exist $\type'$, $\effect'$, $\predmap'$ and $\cenv'$ \st
\begin{enumerate}[label=(\alph*)]
  \item
    \tcrtexpr{\cenv}{\genv}{\heapty}{\expr}{\type'}{\effect'}{\predmap'}{\cenv'}
  \item \imp{\is{\predmap'}{\predmapempty}}{
    \tcrtexpr{\cenv'}{\genv}{\heapty}{\ectx}
      {\ectxidx{\type}{\type'}}
      {\ectxidx{\effect}{\effect'}}
      {\ectxidx{\predmap}{\predmap'}}
      {\cenv''}
    }
\end{enumerate}
\end{lemma}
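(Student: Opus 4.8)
The plan is to proceed by structural induction on the evaluation context $\ectx$. It is worth noting at the outset that the hypothesis is the runtime \emph{expression} typing of the filled context $\ectxidx{\ectx}{\expr}$ (the subject is an expression), whereas conclusion (b) is the \emph{evaluation-context} typing of the bare $\ectx$ (the subject is a context) — both judgments are written with $\tcrtexpr$ but are disambiguated by the syntactic class of the subject. Accordingly, at each inductive step I would \emph{invert} the expression-typing rule of Figures~\ref{fig:flow:expr:var:fun:typing} and~\ref{fig:flow:expr:rec:typing} (in their runtime extensions) whose conclusion matches the head constructor of $\ectxidx{\ectx}{\expr}$, and then \emph{reassemble} using the matching evaluation-context typing rule of \figref{fig:flow:ectx:typing}.

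In the base case $\ectx = \ectxemp$, the filled context $\ectxidx{\ectxemp}{\expr}$ is literally $\expr$, so the hypothesis already witnesses part (a) with $\type' = \type$, $\effect' = \effect$, $\predmap' = \predmap$, and $\cenv' = \cenv''$. For part (b), the guard $\predmap' = \predmapempty$ together with rule \hyperref[rule:ectx:hole]{\ectxhole} gives the context typing of the bare hole, whose type, effect, and predicate frames are all $\ectxemp$; since $\type' = \type$, the required context type $\ectxidx{\type}{\type'}$ is the identity frame $\ectxemp$, and the same holds for the effect and (empty) predicate, over the output environment $\cenv'' = \cenv'$.

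For the inductive cases I would treat each context form in turn; a representative one is $\ectx = \efuncall{\ectx_0}{\expr_2}$. Inverting the runtime extension of \hyperref[rule:tccall]{\tccall} on $\efuncall{\ectxidx{\ectx_0}{\expr}}{\expr_2}$ exposes the expression typing of the nested filled subcontext $\ectxidx{\ectx_0}{\expr}$ under input $\cenv$, the typing of $\expr_2$, the subtyping premise $\issubtype{\type_1'}{\tarrow{\type_2}{\effect}{\type}}$, and the erasure $\cenv' = \erasevarto{\cenv_2}{\genv}{\effect}$. Applying the induction hypothesis to $\ectxidx{\ectx_0}{\expr}$ yields part (a) — the typing of the filler $\expr$, which is invariant across the induction and is exactly what the outer statement returns — and, under $\predmap' = \predmapempty$, the evaluation-context typing of $\ectx_0$. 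I then rebuild the context typing of $\efuncall{\ectx_0}{\expr_2}$ by re-applying \hyperref[rule:ectx:call]{\ectxcall}, feeding in the context typing of $\ectx_0$ from the hypothesis and reusing the side conditions recovered by inversion. The assignment, field read/write, negation, logical, and sequencing contexts are analogous: each pairs one expression-typing rule for inversion with its evaluation-context counterpart for reassembly, with the flow-sensitive environments threaded left-to-right exactly as the rules fix their order. Throughout, the hole entries $\type'$, $\effect'$, $\predmap'$ are forwarded unchanged while only the surrounding result frames grow, which is precisely what makes the recomposed $\ectxidx{\type}{\type'}$ agree with the hypothesis.

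I expect the \textbf{main obstacle} to be the predicate component, which is exactly why part (b) is guarded by $\predmap' = \predmapempty$. In the predicate-carrying contexts (the logical-operator contexts, and to a lesser extent the conditional guard), the relevant typing rule refines the environment by the hole's predicate; reconstructing the context typing cleanly then works only when $\predmap'$ is empty, since otherwise the refinement demanded by the rule would not match the refinement already baked into the derivation obtained by inversion. When $\predmap' \neq \predmapempty$ the guarding premise of (b) fails and the implication holds vacuously, so it suffices to verify the compositional reconstruction under $\predmap' = \predmapempty$ — which I would discharge by checking, rule by rule, that setting the hole predicate to $\predmapempty$ collapses the environment refinement to the identity, after which the frames compose associatively. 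A secondary, purely bookkeeping, point is keeping the intermediate environments aligned across multi-subterm contexts, but this follows directly from the left-to-right ordering already imposed by the typing rules and the evaluation-context grammar.
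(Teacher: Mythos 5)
Your proposal is correct and takes essentially the same approach as the paper, whose proof is only a one-sentence sketch observing that every typing derivation for a filled context types the hole expression first and the surrounding context around it; your structural induction with inversion of the expression-typing rule and reassembly via the matching evaluation-context rule is the standard elaboration of exactly that observation, including the correct reading of the $\predmap' = \predmapempty$ guard on part (b).
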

\begin{proof}
\emph{
By examining all possible cases of typing 
evaluation contexts \ectx, we will always type the expression \expr in the 
``hole'' first and then the evaluation context \ectx.}
%Also,
%typing call expressions does not yield predicate mappings
%$$
%\jrtexpr{\cenv}{\heapty}
%  {\efuncall{\expr_1}{\expr_2}}{\type}
%  {\effect}{\predmapempty}{\cg}
%  {\cenv'}
%$$
%So the output environment $\cenv'$ will be used as is to type the evaluation context.
\end{proof}

\subsection{Configuration Typing}

A runtime configuration in \flowcore contains the runtime
state, that itself comprises a heap \heap, a stack \stack
and a store \store, and a program term. Typing configurations
amounts to typing their subparts.
Before we move on to that we define two auxiliary
functions.

\paragraph{Auxiliary Functions}
The first one is the \emph{environment composition}
\compose{M}{N}. This operation works in the
usual way. The range of environment $N$ needs to be
compatible with the domain of $M$, otherwise the result
is undefined:
\begin{align*}
  \idx{\parens{\compose{M}{N}}}{x} & \equals\begin{cases}
    \idx{M}{\idx{N}{x}} & \text{ if \inset{X}{\envdom{N}} and \inset{\idx{N}{x}}{\envdom{M}}}   \\
    \mathit{undefined} & \text{otherwise}
  \end{cases}
\end{align*}

The second operator is the \emph{environment override}
\overrides{M}{N}. This operator produces an
environment whose domain is the union of the domains of the
two arguments. For each one of its arguments the override 
first attempts to return a binding by looking it up in environment $M$;
if this fails it tries $N$; 
and finally returns undefined if it fails there as well.
\begin{align*}
  \idx{\parens{\overrides{M}{N}}}{x} & \equals \begin{cases}
    \idx{M}{x} & \text{ if \inset{x}{\envdom{M}}}   \\
    \idx{N}{x} & \text{ if \inset{x}{\difference{\envdom{N}}{\envdom{M}}}}  \\
    \mathit{undefined} & \text{otherwise}
  \end{cases}
\end{align*}
% %

\paragraph{Stack}
\begin{figure}[t!]
\judgementHead{Stack Typing Rules}{\tcstack{\genv}{\heapty}{\stack}{\ectxidx{\type'}{\type}}}
\begin{mathpar}
\inferruleright{\rtstackemp\label{rule:rtstackemp}}{
}{
  \tcstack{\genv}{\heapty}{\stackemp}{\ectxidx{\type}{\type}}
}
\and
\inferruleright{\rtstackcons\label{rule:rtstackcons}}{
  \jstore{\heapty}{\store}{\cenv}
  \\
  \tcectx{\cenv}{\genv}{\ectx}{\ectxidx{\type'}{\type}}{\cenv'}
  \\
  \isheaptyoverrides{\heapty'}{\cenv'}{\store}{\heapty}
  \\
  \tcstack{\genv}{\heapty'}{\stack}{\ectxidx{\type''}{\type'}}
  %\\
  %\issubtype{\type_{\ectx}'}{\type_{\stack}}
}{
  \tcstack{\genv}{\heapty}{\stackcons{\stack}{\store}{\ectx}}{\ectxidx{\type''}{\type}}
}
\end{mathpar}
\caption{Runtime Stack Typing in \flowcore}
\label{fig:app:flow:stack}
\end{figure}
The form of the stack is reminiscent of the evaluation
context, so the judgment we use here has the following form:
$$
\tcstack{\genv}{\heapty}{\stack}{\ectxidx{\type}{\type'}}
$$
\figref{fig:app:flow:stack} contains the rules for this judgment.
The interesting rule here is
Rule~\hyperref[rule:rtstackcons]{\rtstackcons}, that types a stack
\stackcons{\stack}{\store}{\ectx}.
Following the flow of
execution the rule first checks the frame \ectx that is on
the top of the stack and then proceeds with the remaining
stack \stack.
What is interesting here is the construction
of the environment used for checking \stack. Assume $\cenv'$
the output environment after checking \ectx. This
environment contains the most recent updates of all the
variables that were assigned to in \ectx.
Our goal here is to construct an accurate heap typing
$\heapty'$ that corresponds to the state of the heap at the
end of \ectx. This heap typing will subsequently be used to
check \stack.
To do that, for every variable \evar such that
\inset{\binding{\evar}{\loc}}{\store}, \ie in scope at the
beginning of \ectx, we require its type to be looked up in
$\cenv'$. This amounts to
\compose{\cenv'}{\envinvert{\store}}. The rest will just be
looked up in the incoming \heapty.

\paragraph{Heap}
\begin{figure}[t!]
\judgementHead{Heap Typing Rules}{\tcheap{\genv}{\heapty}{\heap}}
\begin{mathpar}
\inferruleright{\rtheapemp\label{rule:rtheapemp}}{
}{
  \tcheap{\genv}{\heapty}{\heapemp}
}
\and
\inferruleright{\rtheaploc\label{rule:rtheaploc}}{
  \loc' \in \envdom{\heap}
  \\
  \tcheap{\genv}{\heapty}{\heap}
  \\
  \is{\idx{\heapty}{\loc}}{\idx{\heapty}{\loc'}}
}{
  \tcheap{\genv}{\heapty}{\heapext{\heap}{\loc}{\loc'}}
}
\and
\inferruleright{\rtheapconst\label{rule:rtheapconst}}{
  \tcheap{\genv}{\heapty}{\heap}
  \\
  \idxis{\heapty}{\loc}{\tbase_{\const}}
}{
  \tcheap{\genv}{\heapty}{\heapext{\heap}{\loc}{\const}}
}
\and
\inferruleright{\rtheapfun\label{rule:rtheapfun}}{
  \tcheap{\genv}{\heapty}{\heap}
  \\
  \idxis{\heapty}{\loc}{\type}
  \\
  %\is{\cenv}{\compose{\freshsym}{\compose{\heapty}{\store}}}
  \jstore{\heapty}{\store}{\cenv}
  \\
  \tcexpr{\cenv}{\genv}{\arrow{\evar}{\stmt}{\expr}}{\type}{\effempty}{\predmapempty}{\cenv}
}{
  \tcheap{\genv}{\heapty}{\heapext{\heap}{\loc}{\storearrow{\store}{\evar}{\stmt}{\expr}}}
}
\and
\inferruleright{\rtheapobj\label{appendix:rule:rtheapobj}}{
  \tcheap{\genv}{\heapty}{\heap}
  \\
  \idxis{\heapty}{\loc}{\tobj{\fieldsym_1}{\type_1}{\fieldsym_n}{\type_n}}
  \\
  \forallindot{i}{[1,n]}
  {\tcrtexpr{\cenv}{\genv}{\heapty}{\val_i}{\type_i'}{\effempty}{\predmapempty}{\cenv}}
  \\
  \forallindot{i}{[1,n]}{\issubtype{\type_i'}{\type_i}}
}{
  \tcheap{\genv}{\heapty}{\heapext{\heap}{\loc}{\objlit{\fieldsym_1}{\val_1}{\fieldsym_n}{\val_n}}}
}
\end{mathpar}
\caption{Heap Typing in \flowcore}
\label{fig:app:flow:heap}
\end{figure}
\figref{fig:app:flow:heap} shows the rules for checking a
heap \heap against a heap typing \heapty. The most
interesting case here is that of record typing by
Rule~\hyperref[appendix:rule:rtheapobj]{\rtheapobj}. This
rule infers a type for each value $\val_i$ stored at some
field of the record and then unifies this type with the type
of each field specified in the store typing \heapty.

\paragraph{Configuration}
Finally, \figref{fig:app:flow:conf} shows the typing rules
for runtime configurations where the terms are either
expressions, function bodies or statements. These largely
follow the same principles as the typing for stacks that we
saw earlier.

\begin{figure}
\judgementHeadTwo{Runtime Configuration Typing}
  {\tcrtconfexpr{\genv}{\heapty}{\rtstate}{\exprbody}{\type}}
  {\tcrtconfstmt{\genv}{\heapty}{\rtstate}{\stmt}}
\begin{mathpar}
%
%\inferruleright{\rtconfexpr\label{rule:rtconfexpr}}{
%  \tcheap{\genv}{\heapty}{\heap}
%  \\
%  \is{\cenv}{\compose{\heapty}{\store}}
%  \\
%  \tcrtexpr{\cenv}{\genv}{\heapty}{\expr}{\type_{\expr}}{\effect}{\predmap}{\cenv'}
%  \\
%  \isheaptyoverrides{\heapty'}{\cenv'}{\store}{\heapty}
%  \\
%  \tcstack{\genv}{\heapty'}{\stack}{\ectxidx{\type'}{\type}}
%  \\
%  \issubtype{\type_{\expr}}{\type}
%}{
%  \tcrtconfexpr{\genv}{\heapty}{\rttriplet{\heap}{\stack}{\store}}{\expr}{\type'}
%}
%%
%\and
%
\inferruleright{\rtconfbody\label{rule:rtconfbody}}{
  \tcheap{\genv}{\heapty}{\heap}
  \\
  \jstore{\heapty}{\store}{\cenv}
  \\
  \tcrtbody{\cenv}{\genv}{\heapty}{\exprbody}{\type}{\effect}{\cenv'}
  \\
  \isheaptyoverrides{\heapty'}{\cenv'}{\store}{\heapty}
  \\
  \tcstack{\genv}{\heapty'}{\stack}{\ectxidx{\type'}{\type}}
  %\\
  %\issubtype{\type}{\type_{\stack}}
}{
\tcrtconfexpr{\genv}{\heapty}{\rttriplet{\heap}{\stack}{\store}}{\exprbody}{\type'}
}
\and
\inferruleright{\rtconfstmt\label{rule:rtconfstmt}}{
  \tcheap{\genv}{\heapty}{\heap}
  \\
  \jstore{\heapty}{\store}{\cenv}
  \\
  \tcrtstmt{\cenv}{\genv}{\heapty}{\stmt}{\effect}{\cenv'}
  \\
  \isheaptyoverrides{\heapty'}{\cenv'}{\store}{\heapty}
  \\
  \tcstack{\genv}{\heapty'}{\stack}{\ectxidx{\type'}{\type}}
}{
\tcrtconfstmt{\genv}{\heapty}{\rttriplet{\heap}{\stack}{\store}}{\stmt}
}
\end{mathpar}
\caption{Runtime Configuration Typing in \flowcore}
\label{fig:app:flow:conf}
\end{figure}

%\clearpage

\section{Proofs}

This section contains a statement and proof of soundness of the inference
type system of \secref{sec:cons:sys} with respect to the
declarative system of \secref{sec:declarative}, followed
by our type safety result for the declarative system and by
extension the entire type system.

\subsection{Type Inference Soundness}

The following lemma captures the intuition behind the ``havoc'' mechanism,
as the erasure of the part of the widened environment that is affected
by the reaching effect.

\begin{lemma}[Havoc]\label{lemma:flow:type:inf:havoc}
If
\begin{enumerate}[label=(\roman*)]
\item \widen{\env}{\env'}{\cset}
\item \issupseteq{\cset'}{\consconcat{\cset}{\uniflow{\effvar}{\havocuse{\env'}}}}
\item \satisfies{\gsubst}{\cset'}
  \label{hvc:iii}
\end{enumerate}
then
\shiftedalign{
  \is{\cenv'}{\erasevarto{\cenv}{\genv}{\appsubst{\gsubst}{\effvar}}}
}
where \is{\appsubst{\gsubst}{\env}}{\dividerpair{\cenv}{\genv}}
and \is{\appsubst{\gsubst}{\env'}}{\dividerpair{\cenv'}{\genv}}.
\end{lemma}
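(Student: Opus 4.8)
The plan is to induct on the derivation of the widening judgment $\widen{\env}{\env'}{\cset}$, since this is the only hypothesis that simultaneously fixes the shape of $\env'$ (and hence of $\cenv'$) and records, in $\cset$, the constraints relating the old and new type entries. The base case \hyperref[rule:twidene]{\twidene} has $\env=\env'=\envempty$, so $\cenv=\cenv'=\envempty$ and the erasure is empty as well. All the work is in the step case \hyperref[rule:twidenc]{\twidenc}, where $\env=\envext{\env_0}{\evar}{\entry{\type}{\tvar}}$ and $\env'=\envext{\env_0'}{\evar}{\entry{\tvarb}{\tvar}}$ with $\tvarb$ fresh, $\widen{\env_0}{\env_0'}{\cset_0}$, and $\cset=\consconcat{\cset_0}{\mkset{\concat{\flows{\type}{\tvarb}}{\flows{\tvarb}{\tvar}}}}$. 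Since every entry keeps its general component $\tvar$ under widening, applying $\gsubst$ yields $\appsubst{\gsubst}{\env}=\dividerpair{\cenv}{\genv}$ and $\appsubst{\gsubst}{\env'}=\dividerpair{\cenv'}{\genv}$ with a \emph{common} $\genv$, as the statement assumes. The induction hypothesis discharges the $\env_0$/$\env_0'$ part, so it remains to compute the entry for $\evar$, i.e. to show that $\idx{\cenv'}{\evar}=\appsubst{\gsubst}{\tvarb}$ equals $\idx{\genv}{\evar}$ when $\evar\in\appsubst{\gsubst}{\effvar}$ and equals $\idx{\cenv}{\evar}=\appsubst{\gsubst}{\type}$ otherwise, which is exactly how $\erasevarto{\cenv}{\genv}{\appsubst{\gsubst}{\effvar}}$ is defined.

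From $\satisfies{\gsubst}{\cset}$ and the two widening constraints I would first extract the sandwich $\issubtype{\appsubst{\gsubst}{\type}}{\appsubst{\gsubst}{\tvarb}}$ and $\issubtype{\appsubst{\gsubst}{\tvarb}}{\appsubst{\gsubst}{\tvar}}$. For a variable $\evar\in\appsubst{\gsubst}{\effvar}$, the key move is to unfold $\satisfies{\gsubst}{\flows{\effvar}{\havocuse{\env'}}}$ through Definition~\ref{def:flow:cons:sat}: by construction it states precisely that $\issubtype{\idx{\genv}{\evar}}{\idx{\cenv'}{\evar}}$, i.e. $\issubtype{\appsubst{\gsubst}{\tvar}}{\appsubst{\gsubst}{\tvarb}}$. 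Combining this lower bound with the widening upper bound $\issubtype{\appsubst{\gsubst}{\tvarb}}{\appsubst{\gsubst}{\tvar}}$ and antisymmetry of ground subtyping gives $\appsubst{\gsubst}{\tvarb}=\appsubst{\gsubst}{\tvar}=\idx{\genv}{\evar}$, matching the erasure on affected variables. This direction needs nothing about $\gsubst$ beyond $\satisfies{\gsubst}{\cset'}$, so it goes through for any solution.

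The complementary case $\evar\notin\appsubst{\gsubst}{\effvar}$ is the main obstacle, and it is where I would be most careful. Here the havoc constraint imposes no inequality on $\tvarb$, so the widening constraints alone only give the sandwich $\issubtype{\appsubst{\gsubst}{\type}}{\appsubst{\gsubst}{\tvarb}}\ \wedge\ \issubtype{\appsubst{\gsubst}{\tvarb}}{\appsubst{\gsubst}{\tvar}}$, which does \emph{not} by itself force the equality $\appsubst{\gsubst}{\tvarb}=\appsubst{\gsubst}{\type}$ demanded by the erasure. The resolution is to use that $\tvarb$ is fresh: it is introduced only by this widening step, so in the closed set $\cset'$ its single lower bound is $\type$ (the rule \hyperref[rule:cphavoc]{\cphavoc} would add $\tvar$ as a further lower bound exactly when $\evar$ reaches the effect, i.e. only in the previous case). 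I would therefore read $\gsubst$ as the canonical least solution threaded through the soundness argument, under which each fresh widening variable is assigned the join of its lower bounds: $\appsubst{\gsubst}{\type}$ when $\evar$ is unaffected, and $\appsubst{\gsubst}{\type}\sqcup\appsubst{\gsubst}{\tvar}=\appsubst{\gsubst}{\tvar}$ when it is affected. With this reading both cases collapse uniformly to the defining clauses of $\erasevarto{\cenv}{\genv}{\appsubst{\gsubst}{\effvar}}$, completing the induction; making explicit that the \emph{exact} (rather than merely up-to-subtyping) equality hinges on this minimality is the crux of the argument.
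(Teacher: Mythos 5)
Your proof is correct in substance and, for the variables that do lie in the effect, it follows the same core argument as the paper: a sandwich from the widening constraints plus a reverse inequality coming from the havoc use, closed off by antisymmetry of ground subtyping. The one real difference in that case is how you obtain the reverse inequality $\issubtype{\idx{\genv}{\evar}}{\idx{\cenv'}{\evar}}$: you unfold $\satisfies{\gsubst}{\flows{\effvar}{\havocuse{\env'}}}$ directly through Definition~\ref{def:flow:cons:sat}, whereas the paper first uses the (implicitly assumed) closure of $\cset'$ to obtain the syntactic constraints $\flows{\evar}{\havocuse{\env'}}$ and then, via \hyperref[rule:cphavoc]{\cphavoc}, $\flows{\tvar}{\tvarb}$, and only then applies satisfaction. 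Your route is slightly more economical since it does not depend on $\cset'$ being saturated, which the lemma statement never actually requires. Where you genuinely diverge is on the variables \emph{not} in $\appsubst{\gsubst}{\effvar}$: the paper's proof argues only about variables in the effect and then ``generalizes,'' silently assuming the remaining entries of $\cenv'$ coincide with those of $\cenv$. You are right that the widening constraints alone give only $\issubtype{\appsubst{\gsubst}{\type}}{\appsubst{\gsubst}{\tvarb}}$ and $\issubtype{\appsubst{\gsubst}{\tvarb}}{\appsubst{\gsubst}{\tvar}}$, not the exact equality the erasure demands, so an arbitrary satisfying $\gsubst$ can falsify the literal statement; your repair (reading $\gsubst$ as the least solution, so each fresh widening variable is the join of its lower bounds) is a legitimate way to close this, the alternative being to weaken the conclusion to mutual subtyping. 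Flagging that the exact equality hinges on minimality of $\gsubst$ is a genuine improvement over the paper's own proof, not a defect of yours.
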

\begin{proof}
Let \is{\appsubst{\gsubst}{\effvar}}{\effect}.
For every variable \inset{\evar}{\effect},
it also holds that \inset{\flows{\evar}{\havocuse{\env'}}}{\cset'},
since $\cset'$ is saturated.
Let \is{\idx{\env'}{\evar}}{\entry{\type}{\tvar}}.
By Rule~\cphavoc on the binding for \evar,
it holds that \inset{\flows{\tvar}{\type}}{\cset'}.
Due to~\ref{hvc:iii}, 
\issubtype{\appsubst{\gsubst}{\tvar}}{\appsubst{\gsubst}{\type}}.
Which is also written as
\issubtype{\idx{\genv}{\evar}}{\idx{\cenv}{\evar}}.
But by definition of \genv it holds that 
\issubtype{\idx{\cenv}{\evar}}{\idx{\genv}{\evar}},
so it must be that \is{\idx{\cenv}{\evar}}{\idx{\genv}{\evar}}.
Generalizing for all variables in \appsubst{\gsubst}{\effvar} we prove the wanted.
\end{proof}

\begin{lemma}[Type Inference Soundness]\label{lemma:flow:type:inf:soundness}
If 
\begin{enumerate}[label=(\roman*)]
  \item \jexpr{\env}{\expr}{\type}{\effect}{\predmap}{\cset}{\env'} \label{tis:i}
  \item \satisfies{\gsubst}{\cset}  \label{tis:ii}
\end{enumerate}
then
\shiftedalign{
  \tcexprsingle{\appsubst{\gsubst}{\env}}
    {\expr}{\appsubst{\gsubst}{\type}}
    {\appsubst{\gsubst}{\effect}}{\predmap}
    {\appsubst{\gsubst}{\env'}_1}
}
\end{lemma}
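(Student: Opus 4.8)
The plan is to prove this by mutual structural induction on the constraint-generation derivations, strengthening the claim so that the expression judgment \jexpr{\env}{\expr}{\type}{\effect}{\predmap}{\cset}{\env'} is handled simultaneously with the statement judgment \jstmt{\env}{\stmt}{\effect}{\cset}{\env'} (whose conclusion is $\tcstmtsingle{\appsubst{\gsubst}{\env}_1}{\stmt}{\appsubst{\gsubst}{\effect}}{\appsubst{\gsubst}{\env'}_1}$) and the body judgment invoked by \cgfun. The induction is a case analysis on the last rule applied. Since each premise's constraint set is a subset of the conclusion's \cset and satisfaction is defined constraint-by-constraint (Definition~\ref{def:flow:cons:sat}), $\satisfies{\gsubst}{\cset}$ restricts to every subderivation, so the induction hypothesis applies to each premise; for each inference rule I then exhibit the matching declarative rule and discharge its subtyping side-conditions using the appropriate clause of Definition~\ref{def:flow:cons:sat}.

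Before the case analysis I would establish a few routine ``commutation'' lemmas stating that the auxiliary environment operations agree with \gsubst on their flow-sensitive parts while fixing the general environment \genv: that $\appsubst{\gsubst}{\erase{\env}}_1$ coincides with the declarative erasure $\eraseto{\genv}{\cenv}$ (for \cgfun versus \tcfun); that $\appsubst{\gsubst}{\envjoin{\env_1}{\env_2}}_1$ equals $\envjoin{\appsubst{\gsubst}{\env_1}_1}{\appsubst{\gsubst}{\env_2}_1}$ (wherever branches merge); and that environment refinement \jenvstrengthen{\env}{\predmap}{\env'}{\cset} is sound, i.e.\ $\appsubst{\gsubst}{\env'}_1$ relates to the declarative refinement $\refine{\appsubst{\gsubst}{\env}_1}{\predmap}$ (for \cgand, \cgor and \cgif). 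The already-established Havoc lemma (Lemma~\ref{lemma:flow:type:inf:havoc}) serves as the commutation lemma for the widen-then-havoc pattern.

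With these in hand most cases are immediate: \cgconst, \cgvar, \cgpred and \cgnot match their declarative counterparts directly, while \cgassign, \cgrecord, \cgfldrd and \cgfldwr follow by applying the induction hypothesis to the subexpressions and reading the needed subtyping premises (e.g.\ $\issubtype{\appsubst{\gsubst}{\type}}{\appsubst{\gsubst}{\tobjsingleton{\fieldsym}{\type'}}}$ for a field read) off the satisfaction of the unification, get and set constraints. The interesting cases are \cgfun, which I would close by feeding the body's induction hypothesis into \tcfun after rewriting the erased, parameter- and locals-extended environment via the erasure lemma and checking that dropping \evar and \many{\evar_i} from the effect matches on both sides; and the logical operators \cgand and \cgor, where the refinement lemma supplies the environment in which the second operand is checked, the join lemma supplies the output environment, and the clause $\satisfies{\gsubst}{\flows{\type_1}{\preduse{\falsy}{\tvar_1}}}$ supplies the refined component $\refine{\appsubst{\gsubst}{\type_1}}{\falsy}$ of the result type.

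The main obstacle is the call rule \cgcall against \tccall, together with the exact reconciliation of effects. Here I would apply the induction hypothesis to $\expr_1$ and $\expr_2$ and then use $\satisfies{\gsubst}{\flows{\type_1}{\calluse{\type_2}{\effvar}{\tvar}}}$, which by definition gives $\issubtype{\appsubst{\gsubst}{\type_1}}{\tarrow{\appsubst{\gsubst}{\type_2}}{\appsubst{\gsubst}{\effvar}}{\appsubst{\gsubst}{\tvar}}}$; this is precisely the subtyping premise of \tccall once its effect is instantiated with $\appsubst{\gsubst}{\effvar}$ and its result with $\appsubst{\gsubst}{\tvar}$. The genuinely delicate point is that the inference produces its output environment $\env_3$ by widening $\env_2$ and constraining $\flows{\effvar}{\havocuse{\env_3}}$, whereas \tccall produces $\erasevarto{\cenv_2}{\genv}{\effect}$; bridging these is exactly what the Havoc lemma delivers, yielding $\appsubst{\gsubst}{\env_3}_1 = \erasevarto{\appsubst{\gsubst}{\env_2}_1}{\genv}{\appsubst{\gsubst}{\effvar}}$, so taking the declarative effect to be $\appsubst{\gsubst}{\effvar}$ makes both sides coincide. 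A secondary subtlety threading through the refinement and logical cases is that constraint satisfaction only lower-bounds the fresh variables \gsubst introduces (e.g.\ $\refine{\appsubst{\gsubst}{\type_1}}{\falsy} \le \appsubst{\gsubst}{\tvar_1}$); matching these against the \emph{exact} refinement and join operations of the declarative system is where I expect to spend the most care, and I would handle it by appealing to the subtyping flexibility already built into the declarative elimination rules, together with a monotonicity argument on the output type where needed.
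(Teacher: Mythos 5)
Your proposal is correct and follows essentially the same route as the paper's proof: induction on the constraint-generation derivation, using the fact that satisfaction restricts to the sub-derivations' constraint subsets, reading the needed subtyping premises off Definition~\ref{def:flow:cons:sat}, and bridging the widen-then-havoc output environment of \cgcall with the declarative erasure via Lemma~\ref{lemma:flow:type:inf:havoc}. The paper only spells out the \cgcall and \cgassign cases and leaves the rest as ``handled similarly,'' so your explicit commutation lemmas for erasure, join, and refinement are a reasonable elaboration of what it leaves implicit rather than a different argument.
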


\begin{proof}
By induction on the derivation of \ref{tis:i}:
\begin{itemize}
  \item \hyperref[rule:cgcall]{\cgcall}:
    \la{
    \jexpr{\env}{\efuncall{\expr_1}{\expr_2}}{\tvar}{\effect}{\predmapempty}{\cset}{\env_3}
    \label{tis:1}
    }

    By inverting Rule~\hyperref[rule:cgcall]{\cgcall} on~\eqref{tis:1}:
    \la{
      \jexpr{\env}{\expr_1}{\type_1}{\effect_1}{\predmap_1}{\cset_1}{\env_1}
      \label{tis:2}
      \\
      \jexpr{\env_1}{\expr_2}{\type_2}{\effect_2}{\predmap_2}{\cset_2}{\env_2}
      \label{tis:3}
      \\
      \widen{\env_2}{\env_3}{\cset_{\wsym}}
      \label{tis:4}
      \\
      \is{\effectconcatthree{\effect_1}{\effect_2}{\effvar}}{\effect}
      \label{tis:5}
      \\
      \is{\consconcatfour{\cset_1}{\cset_2}{\cset_{\wsym}}
        {\mkset{\concat{\flows{\effvar}{\havocuse{\env_3}}}
          {\flows{\type_1}{\calluse{\type_2}{\effvar}{\tvar}}}}
        }
      \label{tis:6}
      }{\cset}
    }
    where \fresh{\tvar, \effvar}.

    Since by \eqref{tis:6} it is 
    \issupseteq{\cset}{\cset_1} and \issupseteq{\cset}{\cset_2},
    using~\ref{tis:ii} it holds that:
    \la{
      \satisfies{\gsubst}{\cset_1}
      \label{tis:7}
      \\
      \satisfies{\gsubst}{\cset_2}
      \label{tis:8}
    }

    By induction hypothesis using \eqref{tis:2}, \eqref{tis:7}, \eqref{tis:3} and \eqref{tis:8}:
    \la{
      \tcexprsingle{\appsubst{\gsubst}{\env}}
        %{\appsubst{\gsubst}{\genv}}
        {\expr_1}{\appsubst{\gsubst}{\type_1}}
        {\appsubst{\gsubst}{\effect_1}}{\predmap_1}
        {\appsubst{\gsubst}{\env_1}_1}
      \label{tis:9}
      \\
      \tcexprsingle{\appsubst{\gsubst}{\env_1}}
        %{\appsubst{\gsubst}{\genv}}
        {\expr_2}{\appsubst{\gsubst}{\type_2}}
        {\appsubst{\gsubst}{\effect_2}}{\predmap_2}
        {\appsubst{\gsubst}{\env_2}_1}
      \label{tis:10}
    }

    By \eqref{tis:6}, using Definition~\ref{def:flow:cons:sat}:
    %\todo{add lemma: if \satisfies{\gsubst}{\cset} and \flows{\type_1}{\callu{\type_2}} 
    %  then \issubtype{\appsubst{\gsubst}{\type_1}}{\appsubst{\gsubst}{\type_2}}
    %}:
    \la{
      \isequiv{\issubtype{\appsubst{\gsubst}{\type_1}}{\appsubst{\gsubst}{\tarrow{\type_2}{\effvar}{\tvar}}}}
      {\tarrow{\appsubst{\gsubst}{\type_2}}
        {\appsubst{\gsubst}{\effvar}}
        {\appsubst{\gsubst}{\tvar}}
      }
      \label{tis:11}
    }

    By Lemma~\ref{lemma:flow:type:inf:havoc} on \eqref{tis:4}, \eqref{tis:6} and \ref{tis:ii}:
    \la{
      \is{\cenv_3}{\erasevarto{\cenv_2}{\genv}{\appsubst{\gsubst}{\effvar}}}
      \label{tis:12}
    }
    where
    \is{\appsubst{\gsubst}{\env_2}}{\dividerpair{\cenv_2}{\genv}}
    and
    \is{\appsubst{\gsubst}{\env_3}}{\dividerpair{\cenv_3}{\genv}}.

    By Rule~\hyperref[rule:tccall]{\tccall} on \eqref{tis:9}, \eqref{tis:10}, \eqref{tis:11} and \eqref{tis:12}
    \begin{align}
      && \tcexprsingle
        {\appsubst{\gsubst}{\env}}
        {\efuncall{\expr_1}{\expr_2}}{\appsubst{\gsubst}{\tvar}}
        {\effectconcatthree{\appsubst{\gsubst}{\effect_1}}{\appsubst{\gsubst}{\effect_2}}{\appsubst{\gsubst}{\effvar}}}
        {\predmapempty}{\appsubst{\gsubst}{\env_3}_1}
      \label{tis:13}
      \\
      & \therefore & 
      \tcexprsingle
        {\appsubst{\gsubst}{\env}}
        {\efuncall{\expr_1}{\expr_2}}{\appsubst{\gsubst}{\tvar}}
        {\appsubst{\gsubst}{\effectconcatthree{\effect_1}{\effect_2}{\effvar}}}
        {\predmapempty}{\appsubst{\gsubst}{\env_3}_1}
      \end{align}

  \item \hyperref[rule:cgassign]{\cgassign}:
    \la{
      \jexpr{\env}{\assign{\evar}{\expr}}{\type}
        {\effectconcat{\effect}{\evar}}
        {\logforget{\predmap}{\evar}}
        {\cset}
        {\envupd{\env'}{\evar}{\entry{\type}{\tvar}}}
      \label{tis:14}
    }
    
    By inverting Rule~\hyperref[rule:cgassign]{\cgassign} on~\eqref{tis:14}:
    \la{
      \jexpr{\env}{\expr}{\type}{\effect}{\predmap}{\cset_0}{\env'}
      \label{tis:15}
      \\
      \idxis{\env'}{\evar}{\entry{\type_0}{\tvar}}
      \label{tis:16}
      \\
      \is{\cset}{\consconcat{\cset_0}{\uniflow{\type}{\tvar}}}
      \label{tis:17}
    }
    
    Since by \eqref{tis:17} it is 
    \issupseteq{\cset}{\cset_0},
    using~\ref{tis:ii} it holds that:
    \la{
      \satisfies{\gsubst}{\cset_0}
      \label{tis:18}
    }

    By induction hypothesis using \eqref{tis:15} and \eqref{tis:18}:
    \la{
      \tcexprsingle{\appsubst{\gsubst}{\env}}
        %{\appsubst{\gsubst}{\genv}}
        {\expr}{\appsubst{\gsubst}{\type}}
        {\appsubst{\gsubst}{\effect}}{\predmap}
        {\appsubst{\gsubst}{\env'}_1}
      \label{tis:19}
    }

    By applying Rule~\hyperref[rule:tcassign]{\tcassign} on~\eqref{tis:19}:
    \begin{align}
      & & \tcexprsingle
        {\appsubst{\gsubst}{\env}}
        {\assign{\evar}{\expr}}
        {\appsubst{\gsubst}{\type}}
        {\effectconcat{\appsubst{\gsubst}{\effect}}{\evar}}
        {\logforget{\predmap}{\evar}}
        {\envupd{\appsubst{\gsubst}{\env'}}{\evar}{\appsubst{\gsubst}{\type}}}
      \label{tis:20}
      \\
      & \therefore &
      \tcexprsingle
        {\appsubst{\gsubst}{\env}}
        {\assign{\evar}{\expr}}
        {\appsubst{\gsubst}{\type}}
        {\appsubst{\gsubst}{\effectconcat{\effect}{\evar}}}
        {\logforget{\predmap}{\evar}}
        {\appsubst{\gsubst}{\envupd{\env'}{\evar}{\entry{\type}{\tvar}}}_1}
      \label{tis:21} 
    \end{align}

\end{itemize}

\noindent\emph{The rest of the cases are handled similarly.}

\end{proof}

\subsection{Type Safety}

In this section we present the proofs of our safety
result that connects the declarative type system of \secref{sec:declarative}
with the runtime semantics of \secref{sec:semantics}.
First we set up a number of
auxiliary lemmas and then proceed with a Preservation Theorem
(\ref{theorem:flow:preservation})
and a Progress Theorem (\ref{theorem:flow:progress}) 
that are later combined to produce
a Type Safety Theorem (\ref{theorem:flow:safety}).

%\subsection{Auxiliary}

%\begin{lemma}[Effect transitivity]\label{lemma:effect:trans}
%  For a closed constraint set \cset, 
%  if \issubseteq{\setoftwo{\flows{\effect}{\effect'}}{\flows{\effect'}{\effect''}}}{\cset},
%  then 
%  \inset{\flows{\effect}{\effect''}}{\cset}.
%\end{lemma}
%\begin{proof}
%  \emph{By examining all cases in the form of the input subsumption relations.}
%\end{proof}

\begin{lemma}[Erased Environment Subtyping]
\label{lemma:erase:subtype}
If
\is{\erasevarto{\cenv}{\genv}{\effect}}{\cenv'},
then
\issubtype{\cenv}{\cenv'}.
\end{lemma}
\begin{proof}
\emph{By definition of the \erasesym operator.}
\end{proof}

In the remaining we use the metavariable \exprbody to denote
a term that is either an expression \expr or a function
body \body{\stmt}{\expr}.

\begin{lemma}[Heap Typing Weakening]
\label{lemma:heapty:weakening}
Let \issubtype{\heapty'}{\heapty}. Then:
\begin{enumerate}[label=\Roman*., ref=\ref{lemma:heapty:weakening}.\Roman*]
  \item If
    \tcrtexpr{\cenv}{\genv}{\heapty}{\exprbody}{\type}{\effect}{\predmap}{\cenv_1}, 
    then
    \begin{enumerate}[label=(\alph*)]
      \item \tcrtexpr{\cenv}{\genv}{\heapty'}{\exprbody}{\type'}{\effect'}{\predmap'}{\cenv_1'}
      \item \issubtype{\type'}{\type} and \issubtype{\effect'}{\effect}
    \end{enumerate}
    \label{lemma:heapty:weakening:exprbody}

  \item If 
    \issubtype{\type_1'}{\type_1}
    and
    \tcstack{\genv}{\heapty}{\stack}{\ectxidx{\type}{\type_1}},
    then
    \begin{enumerate}[label=(\alph*)]
      \item \tcstack{\genv}{\heapty'}{\stack}{\ectxidx{\type'}{\type_1'}} 
      \item \issubtype{\type'}{\type}
    \end{enumerate}
    \label{lemma:heapty:weakening:stack}

  \item If
    \tcheap{\genv}{\heapty}{\heap},
    then
    \tcheap{\genv}{\heapty'}{\heap}.
    \label{lemma:heapty:weakening:heap}

\end{enumerate}
\end{lemma}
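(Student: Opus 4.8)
The plan is to prove all three parts by a single simultaneous induction on the typing derivations, extended to also cover evaluation-context typing (which is needed by the stack rule \hyperref[rule:rtstackcons]{\rtstackcons}) and the store-typing judgment $\jstore{\heapty}{\store}{\cenv}$, which is an easy auxiliary: since $\issubtype{\heapty'}{\heapty}$ holds pointwise, re-deriving store typing under $\heapty'$ yields an environment $\cenv'$ with $\issubtype{\cenv'}{\cenv}$. The genuine source of \emph{all} the subtyping slack is the single base case of location typing \hyperref[rule:rt:tcloc]{\tcrtloc}: there the conclusion assigns $\idx{\heapty}{\loc}$, and re-typing $\loc$ under $\heapty'$ assigns $\idx{\heapty'}{\loc}$, so $\issubtype{\heapty'}{\heapty}$ immediately gives $\issubtype{\idx{\heapty'}{\loc}}{\idx{\heapty}{\loc}}$, establishing part~(b). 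Every other expression and statement rule leaves the heap typing unchanged in its premises, so part~(a) follows by re-applying the same rule to the subderivations supplied by the induction hypothesis.

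First I would dispatch Part~\ref{lemma:heapty:weakening:exprbody}. For the introduction-style rules (records, variable reads, the hole of an evaluation context) the result type is built covariantly from the subterm types, so transitivity of $\leq$ on the relevant type constructor turns the hypotheses $\issubtype{\type_i'}{\type_i}$ into $\issubtype{\type'}{\type}$. The only subtle rules are the elimination forms \hyperref[rule:tccall]{\tccall} and \hyperref[rule:tcfldwr]{\tcfldwr}, which carry a subtyping premise against a constructed shape; here the slack on the \emph{argument} sits in a contravariant position, but this is exactly what lets the premise survive. From the hypotheses $\issubtype{\type_1'}{\type_1}$ and $\issubtype{\type_2'}{\type_2}$ together with the original premise $\issubtype{\type_1}{\tarrow{\type_2}{\effect}{\type}}$, contravariance gives $\issubtype{\tarrow{\type_2}{\effect}{\type}}{\tarrow{\type_2'}{\effect}{\type}}$, hence $\issubtype{\type_1'}{\tarrow{\type_2'}{\effect}{\type}}$ by transitivity, so \hyperref[rule:tccall]{\tccall} re-applies with result type still bounded by $\type$; the effect component is handled identically, using monotonicity of effect join. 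Part~\ref{lemma:heapty:weakening:stack} is then a routine induction on the stack via \hyperref[rule:rtstackcons]{\rtstackcons}: the top frame is an evaluation context, to which the extended Part~\ref{lemma:heapty:weakening:exprbody} applies, and the tail is handled by the inner hypothesis, threading the refined ``hole'' types $\issubtype{\type'}{\type}$ through the override that builds the heap typing used for the tail.

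The hard part is Part~\ref{lemma:heapty:weakening:heap}, because the heap-typing rules of \figref{fig:app:flow:heap} demand that a stored value be typed \emph{exactly} at $\idx{\heapty}{\loc}$ rather than at some subtype, so the $\leq$-slack introduced above cannot simply be absorbed. The observation that rescues the closure case \hyperref[rule:rtheapfun]{\rtheapfun} is that a stored closure body is a \emph{location-free} source term and that \hyperref[rule:tcfun]{\tcfun} erases the flow-sensitive capture environment to the global $\genv$; since neither the body nor $\genv$ mentions the heap typing, the closure admits the very same arrow type under $\heapty'$ as under $\heapty$, and the rule re-applies unchanged once the store typing is re-derived. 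For the record case \hyperref[appendix:rule:rtheapobj]{\rtheapobj} the stored field values may contain locations, so Part~\ref{lemma:heapty:weakening:exprbody} does assign them strictly smaller types under $\heapty'$; this is harmless precisely because object fields are typed \emph{invariantly}, so the field type recorded in $\heapty'$ coincides with that in $\heapty$ and the subtype values still meet the side-conditions $\issubtype{\type_i'}{\type_i}$, while the alias and constant cases are immediate since those values also pin down their types rigidly. I expect the closure/record interaction to be the one place where the argument genuinely relies on properties of the heap model—location-freeness of closure bodies, $\genv$-erasure, and field invariance—rather than on generic subtyping bookkeeping.
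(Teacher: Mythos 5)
The paper's own proof of this lemma is a single line (``by induction on the given derivation''), so at the top level you are taking the same route, and your elaboration of Parts I and II is sound: isolating \hyperref[rule:rt:tcloc]{\tcrtloc} as the unique source of slack, pushing that slack through covariant positions by transitivity, absorbing it in \hyperref[rule:tccall]{\tccall} and \hyperref[rule:tcfldwr]{\tcfldwr} via contravariance of the constructed upper bound, and threading the weakened override heap typing through \hyperref[rule:rtstackcons]{\rtstackcons} are exactly the steps an honest expansion needs.

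There is, however, a concrete gap in your treatment of Part III. You correctly observe that the heap-typing rules pin the stored value's type \emph{exactly} to \idx{\heapty}{\loc}, but your resolution does not actually discharge this. In \hyperref[rule:rtheapfun]{\rtheapfun} the premise under the new typing reads \idxis{\heapty'}{\loc}{\type'} together with a derivation typing the closure at that same $\type'$; your location-freeness and $\genv$-erasure argument only re-establishes the closure at the \emph{old} type \idx{\heapty}{\loc}, which is a strict supertype of \idx{\heapty'}{\loc} whenever the two differ. The same problem hits \hyperref[rule:rtheapconst]{\rtheapconst}: $\issubtype{\heapty'}{\heapty}$ only yields $\issubtype{\idx{\heapty'}{\loc}}{\tbase_{\const}}$, not equality, and with $\bot$ in the ground lattice this does not force the premise to hold. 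As stated for an arbitrary $\issubtype{\heapty'}{\heapty}$, Part III therefore does not close under your argument. The way the lemma is actually invoked in the preservation proof (e.g., the \hyperref[rule:rtpredvar]{\rtpredvar} case, where $\heapty'$ refines $\heapty$ only at the one location whose stored value has just passed the runtime check, so the stored value still inhabits the refined type) suggests the fix: either restrict the lemma to heap typings that only shrink at locations whose contents still satisfy the new type, and carry that invariant through the induction, or argue that the rules admit subsumption at heap cells. Without one of these, the closure and constant cases of Part III are the place your proof would fail.
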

\begin{proof}
  \emph{By induction on the given derivation.}
\end{proof}

\begin{lemma}[Environment Strengthening]
\label{lemma:environment:weakening}
For the following, let environments \cenv and $\cenv'$
be defined over common domains.
and \tcexpr{\cenv}{\genv}{\expr}{\type}{\effect}{\predmap}{\cenv_1}.
\begin{enumerate}[label=\Roman*., ref=\ref{lemma:environment:weakening}.\Roman*]
% Expression
\item
If \issubtype{\cenv'}{\cenv}, then 
\begin{enumerate}[label=(\alph*)]
  \item \tcexpr{\cenv'}{\genv}{\expr}{\type'}{\effect'}{\predmap'}{\cenv_1'}
  \item \issubtype{\type'}{\type} and \issubtype{\effect'}{\effect}
  \item \issubtype{\cenv_1'}{\cenv_1},
        \issubtype{\refine{\cenv_1'}{\predmap'}}{\refine{\cenv_1}{\predmap}}
    and \issubtype{\refine{\cenv_1'}{\neg\predmap'}}{\refine{\cenv_1}{\neg\predmap}}
\end{enumerate}
\label{lemma:environment:weakening:expr}
\item 
  \tcexpr{\refine{\cenv}{\logforget{\predmap}{\effect}}}{\genv}{\expr}{\type'}{\effect'}{\predmap'}
    {\refine{\cenv_1}{\logforget{\predmap}{\effect}}}
  \label{lemma:environment:weakening:expr:refine}
\end{enumerate}
\end{lemma}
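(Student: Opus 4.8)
The plan is to prove both parts by induction on the derivation of the typing judgment $\tcexpr{\cenv}{\genv}{\expr}{\type}{\effect}{\predmap}{\cenv_1}$ assumed in the preamble, case-splitting on the last rule applied. The argument rests on a few monotonicity facts about the auxiliary operations, which I would establish first: that environment refinement is monotone in its environment argument (\ie, $\issubtype{\cenv'}{\cenv}$ implies $\issubtype{\refine{\cenv'}{\predmap}}{\refine{\cenv}{\predmap}}$), which reduces to monotonicity of the predicate refinement $\refine{\type}{\pred}$ on types; that environment join $\sqcup$ is monotone in both arguments; and that the erasure operators $\eraseto{\genv}{\cenv}$ and $\erasevarto{\cenv}{\genv}{\effect}$ are monotone in $\cenv$. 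A simplifying observation is that, since $\cenv$ and $\cenv'$ share a domain, $\eraseto{\genv}{\cenv}$ and $\eraseto{\genv}{\cenv'}$ are literally equal, so in Rule~\hyperref[rule:tcfun]{\tcfun} the arrow body is typed in an identical environment under weakening and the inferred function type is unchanged.

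For Part~\ref{lemma:environment:weakening:expr}, the base cases are immediate: Rule~\hyperref[rule:tcvar]{\tcvar} uses $\issubtype{\idx{\cenv'}{\evar}}{\idx{\cenv}{\evar}}$ to obtain $\issubtype{\type'}{\type}$, while Rules~\hyperref[rule:tcconst]{\tcconst} and~\hyperref[rule:tcpred]{\tcpred} carry the environment through unchanged. The straight-line cases (Rules~\hyperref[rule:tcassign]{\tcassign}, \hyperref[rule:tcfldrd]{\tcfldrd}, \hyperref[rule:tcfldwr]{\tcfldwr}, \hyperref[rule:tcrecord]{\tcrecord}) thread the induction hypothesis through the subexpressions and use covariance of the relevant type constructors. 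Rule~\hyperref[rule:tccall]{\tccall} is where the effect genuinely shrinks: weakening the callee to a subtype arrow type yields a smaller effect by covariance of the arrow's effect position, which is precisely the source of $\issubtype{\effect'}{\effect}$; monotonicity of $\erasevarto{\cenv}{\genv}{\effect}$ then delivers $\issubtype{\cenv_1'}{\cenv_1}$. The control-flow cases (Rules~\hyperref[rule:tcand]{\tcand}, \hyperref[rule:tcor]{\tcor}) are the delicate ones for clause~(c): because the output predicate maps contain subterms of the form $\logforget{\predmap_1}{\effect_2}$ and $\effect_2$ may itself shrink, the maps $\predmap'$ and $\predmap$ can differ; I would establish $\issubtype{\refine{\cenv_1'}{\predmap'}}{\refine{\cenv_1}{\predmap}}$ by composing monotonicity of refinement in the environment with the fact that forgetting fewer variables refines more (hence yields a subtype), together with monotonicity of $\sqcup$.

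For Part~\ref{lemma:environment:weakening:expr:refine}, I would first note that $\issubtype{\refine{\cenv}{\logforget{\predmap}{\effect}}}{\cenv}$, since refinement only narrows types, so Part~\ref{lemma:environment:weakening:expr} already supplies a typing of $\expr$ under the strengthened input with a subtype output. The remaining content is the \emph{equality} of that output with $\refine{\cenv_1}{\logforget{\predmap}{\effect}}$: the predicate $\logforget{\predmap}{\effect}$ mentions only variables outside $\expr$'s effect $\effect$, and a refinement of such variables commutes with typing $\expr$ because those bindings are never reassigned along the derivation. I would prove this by a parallel induction, relying on two algebraic facts about the refinement calculus---that refinement distributes over $\sqcup$, and that refining variables disjoint from an effect commutes with the erasure step triggered by that effect---so the refinement can be pushed uniformly from the input to the output of each rule.

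The main obstacle is clause~(c) of Part~\ref{lemma:environment:weakening:expr} together with Part~\ref{lemma:environment:weakening:expr:refine}: both hinge on the precise algebraic behavior of refinement under join and under effect-exclusion, and on the subtle point that a variable absent from $\effect$ may still acquire a refined type in $\cenv_1$ (via a conditional inside $\expr$), so ``untouched by $\effect$'' must be invoked only to justify \emph{commutation} of a refinement, not constancy of types. Pinning down the distributivity and commutation lemmas for refinement, join, and effect-exclusion---and verifying that they interact correctly with the covariant shrinking of effects through calls---is where the real work lies; the per-rule bookkeeping in the two inductions is otherwise routine.
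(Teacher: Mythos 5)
Your proposal takes essentially the same route as the paper: the paper's entire proof of this lemma is ``by induction on the given derivation,'' which is precisely the induction you describe. The supporting monotonicity facts you isolate for refinement, join, and erasure, and your care in Part~II to argue commutation of the refinement with typing rather than constancy of the non-effected bindings, are consistent with (and considerably more explicit than) what the paper records.
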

\begin{proof}
\emph{By induction on the given derivation.}
\end{proof}

\begin{lemma}[NonEffect]\label{lemma:non:effect}
If
\shiftedalign{
\tcexpr{\cenv}{\genv}{\expr}{\type}{\effect}{\predmap}{\cenv'}
}
then
\shiftedalign{
  \issubtype{\restrict{\cenv'}{\ovE{\effect}}}{\restrict{\cenv}{\ovE{\effect}}}
}
where $\ovE{\effect}$ is the set of program variables
that do not belong to the concrete effect \effect.
\end{lemma}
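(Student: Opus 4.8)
The plan is to proceed by structural induction on the derivation of \tcexpr{\cenv}{\genv}{\expr}{\type}{\effect}{\predmap}{\cenv'}, proving the equivalent pointwise claim that $\issubtype{\idx{\cenv'}{\evarb}}{\idx{\cenv}{\evarb}}$ for every variable $\evarb \in \ovE{\effect}$; this is exactly $\issubtype{\restrict{\cenv'}{\ovE{\effect}}}{\restrict{\cenv}{\ovE{\effect}}}$ since environment subtyping is pointwise. The guiding intuition is that a variable not recorded in the effect is neither reassigned nor havoc'd, so its type can only be \emph{refined} as control flows through the expression, and refinement can only shrink a type.

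Before the case analysis I would isolate three facts. First, \emph{refinement narrows}: for any $\cenv$ and $\predmap$ we have $\issubtype{\refine{\cenv}{\predmap}}{\cenv}$ pointwise, with equality on variables not constrained by $\predmap$; this follows by induction on the structure of $\predmap$ from the definition of the declarative $\refine{\cdot}{\cdot}$ operator, whose single-variable case replaces a type by its filtering through a predicate. Second, \emph{join is a least upper bound}: using the lattice structure of ground types established in the appendix, if $\issubtype{\idx{\cenv_1}{\evarb}}{\idx{\cenv}{\evarb}}$ and $\issubtype{\idx{\cenv_2}{\evarb}}{\idx{\cenv}{\evarb}}$ then $\issubtype{\idx{(\envjoin{\cenv_1}{\cenv_2})}{\evarb}}{\idx{\cenv}{\evarb}}$. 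Third, \emph{erasure is the identity off its effect}: for $\evarb \notin \effect$, $\idx{(\erasevarto{\cenv}{\genv}{\effect})}{\evarb} = \idx{\cenv}{\evarb}$, immediate from the definition of $\erasevarto{\cdot}{\cdot}{\cdot}$.

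With these in hand the bulk of the cases are routine. The base cases \tcconst, \tcvar, \tcpred and \tcfun all set $\cenv' = \cenv$ and $\effect = \effempty$, so the claim holds trivially. The pass-through cases \tcnot and \tcfldrd carry the effect and output environment of their subexpression unchanged, so the induction hypothesis applies verbatim. The sequential cases \tcrecord, \tcfldwr and \tccall thread the induction hypothesis through the subderivations and compose the resulting inequalities by transitivity of subtyping; \tccall additionally invokes the erasure fact, since every variable of the callee effect $\effect$ lies inside $\effectconcatthree{\effect_1}{\effect_2}{\effect}$ and hence outside $\ovE{\effectconcatthree{\effect_1}{\effect_2}{\effect}}$, while on the remaining variables the erasure $\erasevarto{\cenv_2}{\genv}{\effect}$ is the identity. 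For \tcassign the assigned variable $\evar$ is added to the effect and so is excluded; on the other variables the update $\envupd{\cenv'}{\evar}{\type}$ agrees with $\cenv'$ and the induction hypothesis supplies the bound.

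The main obstacle is the pair \tcand and \tcor, where the output environment is a \emph{join of two refined environments}, e.g. $\cenv' = \envjoin{(\refine{\cenv_1}{\logneg{\predmap_1}})}{\cenv_2}$ for \tcand. I would fix $\evarb \notin \effectconcat{\effect_1}{\effect_2}$ and bound each join component below $\idx{\cenv}{\evarb}$. For the first component, the induction hypothesis on $\expr_1$ gives $\issubtype{\idx{\cenv_1}{\evarb}}{\idx{\cenv}{\evarb}}$, and refinement-narrows yields $\issubtype{\idx{(\refine{\cenv_1}{\logneg{\predmap_1}})}{\evarb}}{\idx{\cenv}{\evarb}}$. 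For the second, the induction hypothesis on $\expr_2$ is stated against its \emph{actual} input environment $\refine{\cenv_1}{\predmap_1}$, giving $\issubtype{\idx{\cenv_2}{\evarb}}{\idx{(\refine{\cenv_1}{\predmap_1})}{\evarb}}$; combining refinement-narrows with the induction hypothesis on $\expr_1$ and transitivity transports this down to $\idx{\cenv}{\evarb}$. The least-upper-bound fact applied to the two components then closes the case, and \tcor is symmetric. The one point demanding care is precisely this: because the second subexpression is typed relative to a refined environment, refinement-narrows must be combined with transitivity at exactly the step where the bound is carried back to the original $\cenv$.
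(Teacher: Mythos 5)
Your proof is correct and follows essentially the same strategy as the paper's: induction on the typing derivation, arguing pointwise that a variable outside the effect can only have its type preserved or narrowed, with the \tcassign{} and \tccall{} cases discharged exactly as in the paper (the latter via the observation that \erasesym{} is the identity off its effect). Your explicit handling of \tcand{}/\tcor{} --- bounding each component of the joined output environment using refinement-narrowing and the least-upper-bound property of environment join, with transitivity to carry the bound on $\cenv_2$ back from the refined input $\refine{\cenv_1}{\predmap_1}$ to $\cenv$ --- correctly fills in a case the paper only labels as ``similar to above.''
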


\begin{proof}

By induction on the given derivation:
\begin{itemize}
  
  \item  \hyperref[rule:tcvar]{\tcvar},
    \hyperref[rule:tcconst]{\tcconst},
    \hyperref[rule:tcfun]{\tcfun} and
    \hyperref[rule:tcpred]{\tcpred}:
    It holds that
    \la{
      \isequiv{\cenv'}{\cenv}
    }
    so the wanted result holds trivially.

  \item  \hyperref[rule:tcassign]{\tcassign}:
    \la{
      \tcexpr{\cenv}{\genv}{
        \underbrace{\assign{\evar}{\expr_0}}_{\expr}
      }
      {\type}
      {\ejoin{\effect_0}{\evar}}
      {\logforget{\predmap_0}{\evar}}
      {\underbrace{\envupd{\cenv_0}{\evar_0}{\type}}_{\cenv'}}
      \label{noneffect:20}
    }

    By inverting \hyperref[rule:tcassign]{\tcassign} on~\eqref{noneffect:20}:
    \la{
      \tcexpr{\cenv}{\genv}{\expr_0}{\type}{\effect_0}{\predmap_0}{\cenv_0}
      \label{noneffect:21}
    }

    By~\eqref{noneffect:20} 
    for a variable \evarb \st \notinset{\evarb}{\effect},
    it also holds that:
    \la{
      \notis{\evarb}{\evar}
      \label{noneffect:39}
      \\
      \notinset{\evarb}{\effect_0}
      \label{noneffect:40}
    }

    By induction hypothesis using~\eqref{noneffect:21} and~\eqref{noneffect:40}:
    \la{
      \issubtype{\idx{\cenv_0}{\evarb}}{\idx{\cenv}{\evarb}}
      \label{noneffect:22}
    }

    By~\eqref{noneffect:39} it holds that 
    \is{\idx{\cenv_0}{\evarb}}{\idx{\cenv'}{\evarb}},
    and so by~\eqref{noneffect:22}:
    \la{
      \issubtype{\idx{\cenv'}{\evarb}}{\idx{\cenv}{\evarb}}
      \label{noneffect:23}
    }

    %By expanding~\eqref{noneffect:22}
    %\la{
    %  \entails{\cg}
    %  {\underbrace{\setcompr{\tequiv{\type_{\insym}}{\type_0}}{\concatthree{
    %          \inset{\binding{\evar}{\entry{\type_{\insym}}{\tvar}}}{\cenv_{\insym}}}{
    %        \inset{\binding{\evar}{\entry{\type_0}{\tvar}}}{\cenv_0}}{
    %      \notinset{\flows{\evar}{\effect_0}}{\cg}}
    %  }}_{\cg_{\effectinvert{\effect_0}}}}
    %  \label{noneffect:23}
    %}
    
    %For $\cg_{\effectinvert{\effect_0}}$ it holds that
    %\la{
    %  \cg_{\effectinvert{\effect_0}} \supseteq 
    %  \setcompr{\tequiv{\type_{\insym}}{\type_0}}{\concatthree{
    %      \inset{\binding{\evar}{\entry{\type_{\insym}}{\tvar}}}{\cenv_{\insym}}}{
    %        \inset{\binding{\evar}{\entry{\type_0}{\tvar}}}{\cenv_0}}{
    %    \notinset{\flows{\evar}{\effectconcat{\effect_0}{\evar_0}}}{\cg}}}
    %  \label{noneffect:24}
    %}

    %By the constraint \flows{\evar}{\effectconcat{\effect_0}{\evar_0}}
    %it holds that the set of variables ranged over by \evar in~\eqref{noneffect:24} 
    %precludes $\evar_0$.
    %Therefore it holds that
    %\la{
    %  \cg_{\effectinvert{\effect_0}} \supseteq 
    %  \setcompr{\tequiv{\type_{\insym}}{\type_0}}{\concatthree
    %    {\inset{\binding{\evar}{\entry{\type_{\insym}}{\tvar}}}{\cenv_{\insym}}}
    %    {\inset{\binding{\evar}{\entry{\type_0}{\tvar}}}{{\underbrace{\envext{\cenv_0}{\evar}{\entry{\type}{\tvar}}}_{\cenv_{\outsym}}}}}
    %    {\notinset{\flows{\evar}{\underbrace{\effectconcat{\effect_0}{\evar_0}}_{\effect}}}{\cg}}
    %  }
    %  \label{noneffect:25}
    %}

    %By~\eqref{noneffect:23} and~\eqref{noneffect:25} it holds that
    %\la{
    %  \jlbsubundereffect{\cg}{\effectinvert{\effect}}{\cenv_{\insym}}{\cenv_{\outsym}}
    %}

  \item \hyperref[rule:tccall]{\tccall}:
    \la{
      \tcexpr{\cenv}{\genv}{\efuncall{\expr_1}{\expr_2}}{\type}
        {\underbrace{\effectconcatthree{\effect_1}{\effect_2}{\effect_c}}_{\effect}}
        {\predmapempty}{\cenv'}
      \label{noneffect:1}
      %\\
      %\is{\cg}{\consconcatfour{\cg_1}{\cg_2}{\cg_2^{\wsym}}
      %    {\mkset{\concatthree
      %        {\flows{\effvar_{\callsym}}{\havocuse{\cenv'}}}
      %        {\flows{\type_1}{\calluse{\type_2}{\effvar_{\callsym}}{\tvar_{\callsym}}}}
      %        {\flows{\effectconcatthree{\effvar_1}{\effvar_2}{\effvar_{\callsym}}}{\effvar}}
      %  }}}
      %\label{noneffect:31}
    }

    By inverting~\hyperref[rule:tccall]{\tccall} on~\eqref{noneffect:1}
    \la{
      \tcexpr{\cenv}{\genv}{\expr_1}{\type_1}{\effect_1}{\predmap_1}{\cenv_1}
      \label{noneffect:2}
      \\
      \tcexpr{\cenv_1}{\genv}{\expr_2}{\type_2}{\effect_2}{\predmap_2}{\cenv_2}
      \label{noneffect:3}
      \\
      \issubtype{\type_1}{\tarrow{\type_2}{\effect_c}{\type}}
      \\
      \is{\cenv'}{\erasevarto{\cenv_2}{\genv}{\effect_c}}
      \label{noneffect:4}
    }
    
    For a variable \inset{\evar}{\envdom{\cenv'}} \st
    \notinset{\evar}{\effect},
    it also holds that:
    \la{
      \notinset{\evar}{\effect_1}
      \label{noneffect:50}
      \\
      \notinset{\evar}{\effect_2}
      \label{noneffect:51}
      \\
      \notinset{\evar}{\effect_c}
      \label{noneffect:10}
    }
    
    By induction hypothesis on~\eqref{noneffect:2} and \eqref{noneffect:50},
    and~\eqref{noneffect:3} and~\eqref{noneffect:51}:
    \la{
      \issubtype{\idx{\cenv_1}{\evar}}{\idx{\cenv}{\evar}}
      \label{noneffect:5}
      \\
      \issubtype{\idx{\cenv_2}{\evar}}{\idx{\cenv_1}{\evar}}
      \label{noneffect:6}
    }

    By definition of the \erasesym operator
    on \eqref{noneffect:4}
    for \evar \st ~\eqref{noneffect:10}:
    \la{
      \is{\idx{\cenv'}{\evar}}{\idx{\cenv_2}{\evar}}
      \label{noneffect:14}
    }

    By \eqref{noneffect:5}, \eqref{noneffect:6} and~\eqref{noneffect:14}:
    \la{
      \issubtype{\idx{\cenv'}{\evar}}{\idx{\cenv}{\evar}}
    }

  \item  
      \hyperref[rule:tcand]{\tcand},
      \hyperref[rule:tcor]{\tcor}, 
      \hyperref[rule:tcnot]{\tcnot},
      \hyperref[rule:tcpred]{\tcpred},
      \hyperref[rule:tcrecord]{\tcrecord},
      \hyperref[rule:tcfldrd]{\tcfldrd} and
      \hyperref[rule:tcfldwr]{\tcfldwr}:
        \emph{Similar to above.}

\end{itemize}
\end{proof}

%\subsection{Preservation}

\begin{assumption}[Dead Code Checking]
\label{assumption:dead:code}
\mbox{}
\begin{enumerate}[label=\Roman*., ref=\ref{assumption:dead:code}.\Roman*]
  \item \tcstmt{\cenv}{\genv}{\ite{\vtrue}{\stmt_1}{\stmt_2}}{\effect}{\cenv'} iff
        \tcstmt{\cenv}{\genv}{\stmt_1}{\effect}{\cenv'}.
    \label{assumption:dead:code:ite:true}
  \item \tcstmt{\cenv}{\genv}{\ite{\vfalse}{\stmt_1}{\stmt_2}}{\effect}{\cenv'} iff
        \tcstmt{\cenv}{\genv}{\stmt_2}{\effect}{\cenv'}.
    \label{assumption:dead:code:ite:false}
  \item \tcexpr{\cenv}{\genv}{\binand{\vtrue}{\expr}}{\type}{\effect}{\predmap}{\cenv'} iff
        \tcexpr{\cenv}{\genv}{\expr}{\type}{\effect}{\predmap}{\cenv'}.
    \label{assumption:dead:code:and:true}
  \item \tcexpr{\cenv}{\genv}{\binand{\vfalse}{\expr}}{\type}{\effect}{\predmap}{\cenv'} iff
        \tcexpr{\cenv}{\genv}{\vfalse}{\type}{\effect}{\predmap}{\cenv'}.
    \label{assumption:dead:code:and:false}
  \item \tcexpr{\cenv}{\genv}{\binor{\vfalse}{\expr}}{\type}{\effect}{\predmap}{\cenv'} iff
        \tcexpr{\cenv}{\genv}{\expr}{\type}{\effect}{\predmap}{\cenv'}.
    \label{assumption:dead:code:or:false}
  \item \tcexpr{\cenv}{\genv}{\binor{\vtrue}{\expr}}{\type}{\effect}{\predmap}{\cenv'} iff
        \tcexpr{\cenv}{\genv}{\vtrue}{\type}{\effect}{\predmap}{\cenv'}.
    \label{assumption:dead:code:or:true}
\end{enumerate}
\end{assumption}

\begin{lemma}[Preservation of Typing by Expression Reduction]
\label{lemma:expr:preservation}
Typing is preserved over the reduction of an expression that
preserves the state of the stack. That is, for an initial
runtime state
\isdef{\rtstate}{\angletriplet{\heap}{\stack}{\store}},
a target state
\isdef{\rtstate'}{\angletriplet{\heap'}{\stack}{\store'}}
if, under a heap typing \heapty:
\begin{enumerate}[label=(\roman*)]
  \item \tcheap{\genv}{\heapty}{\heap}
    \label{er:i}
  \item \tcrtexpr{\cenv}{\genv}{\heapty}{\expr}{\type}{\effect}{\predmap}{\cenv_1} 
    \label{er:ii}
  \item \stepsconf{\rtstate}{\expr}{\rtstate'}{\expr'}
    \label{er:iii}
\end{enumerate}
where
\isdef{\cenv}{\compose{\heapty}{\store}},
then there exist $\heapty'$ \st: % and $\genv'\subseteq \genv$ \st 
\begin{enumerate}[label=(\alph*)]
  \item \tcheap{\genv}{\heapty'}{\heap'}
    \label{er:a}
  \item \tcrtexpr{\cenv'}{\genv}{\heapty'}{\expr'}{\type'}{\effect'}{\predmap'}{\cenv_1'} 
    \label{er:b}
  \item \issubtype{\type'}{\type}
    \label{er:c}
  \item \issubtype{\effect'}{\effect}
    \label{er:d}
  %\item \issubtype{\refine{\cenv_1'}{\predmap'}}{\refine{\cenv_1}{\predmap}}
  %  and \issubtype{\refine{\cenv_1'}{\neg\predmap'}}{\refine{\cenv_1}{\neg\predmap}}
  \item
    \begin{itemize}
      \item \imp{\logand{\isval{\expr'}}{\istruthy{\expr'}}}{\issubtype{\refine{\cenv_1'}{\predmap'}}{\refine{\cenv_1}{\predmap}}}
      \item \imp{\logand{\isval{\expr'}}{\isfalsy{\expr'}}}{\issubtype{\refine{\cenv_1'}{\neg\predmap'}}{\refine{\cenv_1}{\neg\predmap}}}
      \item \imp{\neg\isval{\expr'}}{\logand{\issubtype{\refine{\cenv_1'}{\predmap'}}{\refine{\cenv_1}{\predmap}}}{\issubtype{\refine{\cenv_1'}{\neg\predmap'}}{\refine{\cenv_1}{\neg\predmap}}}}
    \end{itemize}
    \label{er:e}
  %\item \issubtype{\refine{\cenv'}{\predmap'}}{\refine{\cenv}{\predmap}}
  %  \label{er:f}
\end{enumerate}
where \isdef{\cenv'}{\compose{\heapty'}{\store'}}.
\end{lemma}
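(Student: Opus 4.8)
The plan is to proceed by induction on the derivation of the reduction step~\ref{er:iii}, equivalently by case analysis on the final reduction rule, with the induction hypothesis available for the sub-reduction in the congruence case. Since $\rtstate$ and $\rtstate'$ share the same stack $\stack$, the rules \rtcall and \rtret (which push and pop frames) cannot have been applied, so the cases to treat are the ``local'' reductions \rtevar, \rtasgn, \rtarrow, \rtrecord, \rtfldrd, \rtfldwr, \rtpredvar, \rtneg, \rtandtru, \rtandfls, \rtortru, \rtorfls, together with the congruence rule \rtectx. In each case I would first invert~\ref{er:iii} to fix the shapes of $\expr$, $\expr'$, $\heap'$ and $\store'$, then invert the runtime typing~\ref{er:ii} to expose the typing of the immediate subterms, then exhibit a witness heap typing $\heapty'$, and finally re-establish~\ref{er:a}--\ref{er:e}.

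The witness $\heapty'$ is dictated by how the rule touches the heap. For the allocating rules \rtarrow and \rtrecord I would extend $\heapty$ at the fresh location $\loc$ with the arrow type (resp.\ object type) read off from the inversion of \tcfun (resp.\ \tcrecord), re-establishing~\ref{er:a} by \ref{lemma:heapty:weakening} for the unchanged cells and by \rtheapfun (resp.\ \rtheapobj) for the new one; the result $\loc$ then carries exactly the expected type, so~\ref{er:c} holds with equality. For the mutating rules \rtasgn and \rtfldwr I would update $\heapty$ at the written box/field, recovering heap well-typedness because the stored value's type is, by inversion of \tcassign (resp.\ \tcfldwr), a subtype of the slot's type. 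For the remaining non-allocating rules I would take $\heapty' = \heapty$: for \rtevar and \rtfldrd the result is a value whose type is obtained from the heap typing via $\cenv = \compose{\heapty}{\store}$ (resp.\ from \rtheapobj), and~\ref{er:c} follows from heap well-typedness, while the effect obligation~\ref{er:d} is immediate since these reductions all produce the empty effect.

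The genuinely delicate obligations are the refinement conditions~\ref{er:e}, which must relate runtime truthiness to the static refinement operator $\refine{\cdot}{\pred}$. For \rtpredvar and \rtneg (and the truthy/falsy split of \rtevar) I would appeal to a soundness property of base-predicate evaluation, namely that the runtime test $\delta_{\primpred}$ agrees with \checksym, so that whenever $\predof{\evar}$ actually reduces to a truthy boolean the type stored for $\evar$ already lies inside $\refine{\cdot}{\primpred}$, making $\refine{\cenv_1'}{\predmap'}$ a subtype of $\refine{\cenv_1}{\predmap}$. For \rtandtru, \rtandfls, \rtortru and \rtorfls I would use the Dead Code Checking assumption~\ref{assumption:dead:code} to rewrite the typing of the (now known truthy or falsy) connective into the typing of the surviving subterm, then observe that its type is a subtype of the join $\tjoin{\refine{\type_1}{\falsy}}{\type_2}$ (resp.\ its dual), discharging~\ref{er:c}, while the forgetting and conjunction bookkeeping on $\predmap$ lines up by construction.

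The main obstacle I expect is the congruence case \rtectx. Here I would decompose~\ref{er:ii} with \ref{lemma:decomp:ectx} into a typing of the inner redex and a typing of the bare context $\ectx$, apply the induction hypothesis to the inner step (legitimate because it too preserves $\stack$) to obtain $\heapty'$ together with the subtyped inner type and effect, and then re-thread these through the context. Lifting the context typing to $\heapty'$ uses Heap Typing Weakening (\ref{lemma:heapty:weakening}), and absorbing the now-smaller hole type and stronger intermediate environment uses Environment Strengthening (\ref{lemma:environment:weakening}). Getting the variances right---effects only grow while types and environments only shrink as one climbs back out of the context---and confirming that context typing is monotone in the type filling its hole is the crux, since it is exactly the point where all three auxiliary lemmas must be coordinated.
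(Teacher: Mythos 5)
Your overall skeleton matches the paper's: a case analysis driven by the reduction rule that fired, inversion of the runtime typing judgment, an explicitly constructed witness heap typing per case, and the same auxiliary lemmas (evaluation-context decomposition, heap-typing weakening, environment strengthening, and the dead-code assumption) doing the work in the congruence and Boolean-connective cases. The allocation, strong-update, and \rtandtru/\rtorfls cases are handled essentially as the paper handles them.

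The genuine gap is in the predicate-test case \rtpredvar (and likewise the truthy reading under \rtevar), which is exactly where conclusion~(e) has content. You keep $\heapty' = \heapty$ for the non-allocating, non-mutating rules and then argue that the type stored for $\evar$ ``already lies inside'' $\refine{\cdot}{\primpred}$. That has the subtyping the wrong way around: refinement shrinks types, so $\issubtype{\refine{\idx{\cenv}{\evar}}{\primpred}}{\idx{\cenv}{\evar}}$ and not conversely --- if $\evar$ statically carries a union type and the runtime value happens to satisfy $\primpred$, the static entry is still the union. With $\heapty' = \heapty$ the first bullet of~(e) reduces to $\issubtype{\cenv}{\refine{\cenv}{\singletonpredmap{\evar}{\primpred}}}$, which is false in general. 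The paper instead chooses the witness $\heapty'$ to be $\heapty$ \emph{updated at the location} $\idx{\store}{\evar}$ \emph{with the refined type} $\refine{\idx{\heapty}{\loc}}{\primpred}$ (with $\neg\primpred$ in the falsy branch), so that $\cenv'$ is precisely the refined environment and~(e) holds by construction; the agreement of the runtime test $\delta_{\primpred}$ with the static predicate semantics is then what justifies $\tcheap{\genv}{\heapty'}{\heap}$ for this smaller heap typing, via heap-typing weakening. Without this refined witness the refinement-preservation clause cannot be discharged, and it is the clause on which the conditional and connective cases of the larger soundness argument depend.
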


\begin{proof}

By induction on the derivation of~\ref{er:ii}:

\begin{itemize}

  %%
  %%    p(x) --> true
  %%

  \item  \hyperref[rule:rtpredvar]{\rtpredvar}:
    \la{
      \stepsconf
        {\angletriplet{\heap}{\stack}{\store}}{\underbrace{\predof{\evar}}_{\expr}}
        {\angletriplet{\heap}{\stack}{\store}}{\val}
      \label{er:pv:1}
    }

    By Rule~\hyperref[rule:tcpred]{\tcpred}, \ref{er:ii} is of the form:
    \la{
      %\jstore{\heapty}{\store}{\cenv}
      %\label{er:pv:0}
      %\\
      \tcrtexpr{\cenv}{\genv}{\heapty}{\predof{\evar}}{\tboolean}{\effempty}{\underbrace{\singletonpredmap{\evar}{\primpred}}_{\predmap}}{\cenv}
      \label{er:pv:4}
    }

    Let \isdef{\loc}{\idx{\store}{\evar}}.

    We examine the case where \istruthy{\val} and pick:
    \la{
      \isdef{\heapty'}{\upd{\heapty}{\loc}{\refine{\idx{\heapty}{\loc}}{\primpred}}}
      \label{er:pv:4a}
    }
    
    The case for \isfalsy{\val} is similar, replacing \primpred with $\neg\primpred$.

    By definition \eqref{er:pv:4a} it holds that:
    \la{
      \issubtype{\heapty'}{\heapty}
      \label{er:pv:4b}
    }

    Store and heap do not evolve, \ie \is{\store'}{\store} and \is{\heap'}{\heap}.

    So, by Lemma~\ref{lemma:heapty:weakening:heap} on~\ref{er:i} and~\eqref{er:pv:4b}:
    \la{
      \tcheap{\genv}{\heapty'}{\heap'}
      \label{er:pv:5}
    }
    which proves \ref{er:a}.

    By definition of $\cenv'$, it holds that:
    \la{
      \is{\cenv'}{\envupd{\cenv}{\evar}{\refine{\idx{\cenv}{\evar}}{\primpred}}}
      \label{er:pv:5a}
    }

    By applying Rule \hyperref[rule:tcconst]{\tcconst} on \val (\vtrue or \vfalse)
    \la{
      \tcrtexpr{\cenv'}{\genv}{\heapty'}{\val}{\tboolean}{\effempty}
        {\underbrace{\predmapempty}_{\predmap'}}
        {\underbrace{\cenv'}_{\cenv_1'}}
      \label{er:pv:32}
    }
    which proves~\ref{er:b},~\ref{er:c} and~\ref{er:d}.

    For~\ref{er:e} we have:
    \[
    \begin{array}{rcl}
      \refine{\cenv_1'}{\predmap'} & \equalwithlabel{\eqref{er:pv:32}} & \refine{\cenv'}{\predmapempty} \\
                                   & \equalwithlabel{\eqref{er:pv:5a}} & \upd{\cenv}{\evar}{\refine{\idx{\cenv}{\evar}}{\primpred}}  \\
                                   & \equiv & \refine{\cenv}{\parens{\singletonpredmap{\evar}{\primpred}}}  \\
                                   & \equalwithlabel{\eqref{er:pv:4}} & \refine{\cenv}{\predmap}  \\
                                   & \equalwithlabel{\eqref{er:pv:4}} & \refine{\cenv_1}{\predmap}
    \end{array}
    \]
    This proves the first case of \ref{er:e}.
    The rest are trivially true.

  %%
  %%    x = u --> u
  %%

  \item \hyperref[rule:rtasgn]{\rtasgn} with \is{\val}{\const}:
    \la{
      \stepsconf
        {\angletriplet{\heap}{\stack}{\store}}
        {\underbrace{\assign{\evar}{\const}}_{\expr}}
        {\angletriplet{\heap'}{\stack}{\store}}
        {\const}
      \label{er:a:0}
    }

    By inverting Rule~\hyperref[rule:rtasgn]{\rtasgn} on~\eqref{er:a:0}:
    \la{
      \is{\heap'}{\upd{\heap}{\idx{\store}{\evar}}{\const}}
    }

    By Rule~\hyperref[rule:tcassign]{\tcassign},~\ref{er:ii} is of the form:
    \la{
      \tcrtexpr{\cenv}{\genv}{\heapty}{\assign{\evar}{\const}}{\tbase_{\const}}
        {\effectconcat{\effect}{\evar}}
        {\logforget{\predmap}{\evar}}
        {\underbrace{\upd{\cenv}{\evar}{\tbase_{\const}}}_{\cenv_1}}
      \label{er:a:3}
    }

    By inverting Rule~\hyperref[rule:tcassign]{\tcassign} on~\eqref{er:a:3}:
    \la{
      \tcrtexpr{\cenv}{\genv}{\heapty}{\const}{\tbase_{\const}}{\effempty}{\predmapempty}{\cenv}
      \label{er:a:4}
    }
    So, \is{\effect}{\effempty} and \is{\predmap}{\predmapempty}.

    Let \isdef{\loc}{\idx{\store}{\evar}}.

    We pick \isdef{\heapty'}{\upd{\heapty}{\loc}{\tbase_{\const}}}.
    By~\hyperref[rule:tcconst]{\tcconst} on \const under $\cenv'$:
    \la{
      \tcrtexpr{\cenv'}{\genv}{\heapty'}{\const}{\tbase_{\const}}{\effempty}
      {\underbrace{\predmapempty}_{\predmap'}}
      {\underbrace{\cenv'}_{\cenv_1'}}
      \label{er:a:10}
    }

    Let $\heapty_0$ and $\heap_0$ \st
    \is{\heapty}{\heaptyext{\heapty_0}{\loc}{\type_{\loc}}}
    and
    \is{\heap}{\heapext{\heap_0}{\loc}{\const}}.
    
    It holds that:
    \la{
      \is{\heapty'}{\envext{\heapty_0}{\loc}{\tbase_{\const}}}
      \label{er:a:6}
    }

    By applying Rule~\hyperref[rule:rtheapconst]{\rtheapconst} on~\ref{er:i} (on the part of $\heap_0$)
    and~\eqref{er:a:6}:
    \la{
      \tcheap{\genv}{\heapty'}{\heapext{\heap_0}{\loc}{\const}}
      \label{er:a:7}
    }
    which proves~\ref{er:a}.

    By~\eqref{er:a:4} we prove~\ref{er:b},~\ref{er:c} and~\ref{er:d}.

    Since $\cenv'$ and \cenv agree on all variables with the exception potentially of
    \evar, we limit the scope to \evar. By definition of $\cenv'$ it holds that:
    \[
    \begin{array}{rcl}
      \idx{\parens{\refine{\cenv_1'}{\predmap'}}}{\evar}
      & \equalwithlabel{\eqref{er:a:4}} & \idx{\cenv'}{\evar} \\
      & \equals & \idx{\parens{\compose{\heapty'}{\store}}}{\evar} \\
      & \equals & \idx{\heapty'}{\idx{\store}{\evar}} \\
      & \equals & \idx{\heapty'}{\loc} \\
      & \equalwithlabel{\eqref{er:a:6}} & \tbase_{\const} \\
      & \equalwithlabel{\eqref{er:a:3}} & \idx{\cenv_1}{\evar}  \\
      & \equals & \idx{\parens{\refine{\cenv_1}{\logforget{\predmap}{\evar}}}}{\evar}
    \end{array}
    \]

    The last line above holds since \evar is excluded from the set
    of variables that are refined by predmap.
    This proves the first case of \ref{er:e}.
    The second case for $\neg\predmap$ can be proven similarly.
    The third case is trivially true.

  %%
  %%    E<e> --> E<e'>
  %%

  \item  \hyperref[rule:rtectx]{\rtectx}:
   \la{
     \stepsconf{\angletriplet{\heap}{\stack}{\store}}{\underbrace{\ectxidx{\ectx}{\expr_0}}_{\expr}}
     {\angletriplet{\heap'}{\stack}{\store'}}{\ectxidx{\ectx}{\expr_0'}}
      \label{er:ectx:0}
    }

    By inverting Rule~\hyperref[rule:rtectx]{\rtectx} on~\eqref{er:ectx:0}:
    \la{
      \stepsconf{\angletriplet{\heap}{\stack}{\store}}{\expr_0}{\angletriplet{\heap'}{\stack}{\store'}}{\expr_0'}
      \label{er:ectx:1}
    }

    %By Lemma~\ref{lemma:decomp:ectx} on~\ref{er:ii}
    %for \isequiv{\expr}{\ectxidx{\ectx}{\expr_0}}:
    %\la{
    %  \tcrtexpr{\cenv}{\genv}{\heapty}{\expr_0}{\type_0}{\effect_0}{\predmap_0}{\cenv_0}
    %  \label{er:ectx:4}
    %  \\
    %  \tcexpr{\cenv_0}{\genv}{\ectx}{\ectxidx{\type}{\type_0}}{\ectxidx{\effect}{\effect_0}}{\ectxidx{\predmap}{\predmap_0}}{\cenv_1}
    %  \label{er:ectx:5}
    %}

    By induction hypothesis using~\ref{er:i},~\eqref{er:ectx:4} and~\eqref{er:ectx:1}
    there exists $\heapty'$ \st:
    \la{
      \tcheap{\genv}{\heapty'}{\heap'}
      \label{er:ectx:7}
      \\
      \tcrtexpr{\cenv'}{\genv}{\heapty'}{\expr_0'}{\type_0'}{\effect_0'}{\predmap_0'}{\cenv_0'}
      \label{er:ectx:8}
      \\      
      \issubtype{\type_0'}{\type_0}
      \label{er:ectx:9}
      \\
      \issubtype{\effect_0'}{\effect_0}
      \label{er:ectx:10}
      \\
      %\issubtype{\cenv_0'}{\cenv_0}
      %\label{er:ectx:11}
      \imp{\logand{\isval{\expr_0'}}{\istruthy{\expr_0'}}}{\issubtype{\refine{\cenv_0'}{\predmap_0'}}{\refine{\cenv_0}{\predmap_0}}}
      \label{er:ectx:11a}
      \\
      \imp{\logand{\isval{\expr_0'}}{\isfalsy{\expr_0'}}}{\issubtype{\refine{\cenv_0'}{\neg\predmap_0'}}{\refine{\cenv_0}{\neg\predmap_0}}}
      \label{er:ectx:11b}
      \\
      \imp{\neg\isval{\expr_0'}}{\logand{
        \issubtype{\refine{\cenv_0'}{\predmap_0'}}{\refine{\cenv_0}{\predmap_0}}
      }{\issubtype{\refine{\cenv_0'}{\neg\predmap_0'}}{\refine{\cenv_0}{\neg\predmap_0}}}}
      \label{er:ectx:11c}
    }

    By~\eqref{er:ectx:7} we prove~\ref{er:a}.

    We examine cases on the form of \ectx and the value of $\expr_0$:
    \begin{itemize}

      \item \isequiv{\ectx}{\binand{\hole}{\expr_1}} and \isval{\expr_0'}.
        Let's also assume that \istruthy{\expr_0'}.
        (The case for \isfalsy{\expr_0'} is symmetrical.)

        It holds that:
        \la{
          \isequiv{\expr}{\binand{\expr_0}{\expr_1}}
          \label{er:ectx:54}
          \\
          \isequiv{\expr'}{\binand{\expr_0'}{\expr_1}}
        }

        By inverting \hyperref[rule:tcand]{\tcand} on~\ref{er:ii} using \eqref{er:ectx:54}:
        \la{
          \tcexpr{\cenv}{\genv}{\expr_0}{\type_0}{\effect_0}{\predmap_0}{\cenv_0}
          \label{er:ectx:4}
          \\
          \tcexpr{\refine{\cenv_0}{\predmap_0}}{\genv}{\expr_1}{\type_1}{\effect_1}{\predmap_1}{\cenv_{01}}
          \label{er:ectx:5}
          \\
          \is{\type}{\tjoin{\refine{\type_0}{\falsy}}{\type_1}}
          \label{er:ectx:5a}
          \\
          \is{\effect}{\effectconcat{\effect_0}{\effect_1}}
          \label{er:ectx:5b}
          \\
          \is{\predmap}{\logand{\parens{\logforget{\predmap_0}{\effect_1}}}{\predmap_1}}
          \label{er:ectx:5c}
          \\
          \is{\cenv_1}{\envjoin{\parens{\refine{\cenv_0}{\logneg{\predmap_0}}}}{\cenv_{01}}}
          \label{er:ectx:5d}
        }

        %By Rule~\hyperref[rule:tcconst]{\tcconst} on \vtrue:
        %\la{
        %  \tcexpr{\cenv'}{\genv}{\vtrue}{\tboolean}{\effempty}
        %    {\underbrace{\predmapempty}_{\predmap_0'}}
        %    {\underbrace{\cenv'}_{\cenv_0'}}
        %  \label{er:ectx:39}
        %}

        By Lemma~\ref{lemma:environment:weakening} on \eqref{er:ectx:11a} and \eqref{er:ectx:5}:
        \la{
          \tcrtexpr{\refine{\cenv_0'}{\predmap_0'}}{\genv}{\heapty'}{\expr_1}{\type_1'}{\effect_1'}{\predmap_1'}{\cenv_{01}'}
          \label{er:ectx:40}
          \\
          \issubtype{\type_1'}{\type_1}
          \label{er:ectx:41}
          \\
          \issubtype{\effect_1'}{\effect_1}
          \label{er:ectx:42}
          \\
          \issubtype{\refine{\cenv_{01}'}{\predmap_1'}}{\refine{\cenv_{01}}{\predmap_1}}
          \label{er:ectx:43}
          \\
          \issubtype{\refine{\cenv_{01}'}{\neg\predmap_1'}}{\refine{\cenv_{01}}{\neg\predmap_1}}
          \label{er:ectx:44}
        }

        %By Rule~\hyperref[rule:tcand]{\tcand} on \eqref{er:ectx:39}, \eqref{er:ectx:40} and with:
        %\begin{align*}
        %  \isdef{\type'}{\tjoin{\refine{\tboolean}{\falsy}}{\type_1'}}
        %  &&
        %  \isdef{\effect'}{\effect_1'}
        %  &&
        %  \isdef{\predmap'}{\predmap_1'}
        %  &&
        %  \isdef{\cenv_1'}{\envjoin{\parens{\refine{\cenv_0'}{\logneg{\predmap_0'}}}}{\cenv_{01}'}}
        %\end{align*}
        By assumption~\ref{assumption:dead:code:and:true} using \eqref{er:ectx:40}:
        \la{
          %\tcexpr{\cenv'}{\genv}{\binand{\vtrue}{\expr_1}}{\type'}{\effect'}{\predmap'}{\cenv_1'}
          \tcrtexpr{\underbrace{\refine{\cenv_0'}{\predmap_0'}}_{\cenv'}}
            {\genv}{\heapty'}{\binand{\vtrue}{\expr_1}}{\type_1'}{\effect_1'}
            {\underbrace{\predmap_1'}_{\predmap'}}
            {\underbrace{\cenv_{01}'}_{\cenv_1'}}
          \label{er:ectx:45}
        }

        By \eqref{er:ectx:45} we prove \ref{er:b}.

        By \eqref{er:ectx:41} and \eqref{er:ectx:5a} we prove \ref{er:c}.

        By \eqref{er:ectx:42} and \eqref{er:ectx:5b} we prove \ref{er:d}.

        By Lemma~\ref{lemma:environment:weakening:expr:refine}
        on \eqref{er:ectx:5} refining with \logforget{\predmap_0}{\effect_1}:
        \la{
          \tcexpr{\underbrace{\refine{\refine{\cenv_0}{\predmap_0}}{\logforget{\predmap_0}{\effect_1}}}_{\equiv\refine{\cenv_0}{\predmap_0}}}
            {\genv}{\expr_1}{\_}{\_}{\_}{\refine{\cenv_{01}}{\logforget{\predmap_0}{\effect_1}}}
          \label{er:ectx:46}
        }

        By Lemma~\ref{lemma:environment:weakening} on \eqref{er:ectx:11a}, \eqref{er:ectx:40} and \eqref{er:ectx:46}:
        \begin{align}
          & \quad & \issubtype{\cenv_{01}'}{\refine{\cenv_{01}}{\logforget{\predmap_0}{\effect_1}}}
          \\
          & \therefore &
          \issubtype{\refine{\cenv_{01}'}{\predmap_1'}}{\refine{\cenv_{01}}{\logforget{\predmap_0}{\effect_1}}}
          \label{er:ectx:48}
        \end{align}
        
        By \eqref{er:ectx:43} and \eqref{er:ectx:48}:
        {
        \def\arraystretch{1.3}
        \begin{longtable}{>{$}c<{$} >{$}c<{$} >{$}l<{$}}
          & \quad & \issubtype{\refine{\cenv_{01}'}{\predmap_1'}}{\refine{\cenv_{01}}{\logand{\logforget{\predmap_0}{\effect_1}}{\predmap_1}}}
          \\
          \stackrel{\eqref{er:ectx:45}}{\therefore} 
          &
          & \issubtype{\refine{\underline{\cenv_1'}}{\underline{\predmap'}}}{\refine{\cenv_{01}}{\logand{\logforget{\predmap_0}{\effect_1}}{\predmap_1}}}
          \\
          \stackrel{\eqref{er:ectx:5c}}{\therefore}
          &
          & \issubtype{\refine{\cenv_1'}{\predmap'}}{\refine{\cenv_{01}}{\underline{\predmap}}}
          \\
          \stackrel{\eqref{er:ectx:5d}}{\therefore}
          &
          & \issubtype{\refine{\cenv_1'}{\predmap'}}{\refine{\underline{\cenv_1}}{\predmap}}
        \end{longtable}
        }
        which proves the first part of \ref{er:e}.
        The case for $\neg\predmap$ is similar.

      \item \emph{The remaining cases are treated similarly.}

    \end{itemize}

  %%
  %%    u && e --> e
  %%

  \item  \hyperref[rule:rtandtru]{\rtandtru}:
    \la{
      \stepsconf{\angletriplet{\heap}{\stack}{\store}}{\underbrace{\binand{\val_1}{\expr_2}}_{\expr}}
      {\angletriplet{\heap}{\stack}{\store}}{\expr_2}
      \label{er:and:1}
    }

    By inverting Rule~\hyperref[rule:rtandtru]{\rtandtru} on \eqref{er:and:1}:
    \la{
      \istruthy{\val_1}
      \label{er:and:3}
    }

    Due to~\eqref{er:and:1} judgment~\ref{er:ii} is of the form:
    \la{
      \tcrtexpr{\cenv}{\genv}{\heapty}{\binand{\val_1}{\expr_2}}
        {\underbrace{\tjoin{\refine{\type_1}{\falsy}}{\type_2}}_{\type}}
        {\underbrace{\effectconcat{\effect_1}{\effect_2}}_{\effect}}
        {\underbrace{\logand{\parens{\logforget{\predmap_1}{\effect_1}}}{\predmap_2}}_{\predmap}}
        {\underbrace{\envjoin{\refine{\cenv_{01}}{\logneg{\predmap_1}}}{\cenv_2}}_{\cenv_1}}
      \label{er:and:4}
    }

    By inverting Rule~\hyperref[rule:tcand]{\tcand} on \ref{er:ii}
    and simplifying by using Rules~\hyperref[rule:rt:tcloc]{\rttcloc} and \hyperref[rule:tcconst]{\tcconst}:
    \la{
      \tcrtexpr{\cenv}{\genv}{\heapty}{\val_1}{\type_1}
        {\underbrace{\effempty}_{\effect_1}}
        {\underbrace{\predmapempty}_{\predmap_1}}
        {\underbrace{\cenv}_{\cenv_{01}}}
      \label{er:and:5}
      \\
      \tcrtexpr{\cenv}{\genv}{\heapty}{\expr_2}{\type_2}{\effect_2}{\predmap_2}{\cenv_2}
      \label{er:and:7}
    }

    Store and heap do not evolve, \ie \is{\store'}{\store} and \is{\heap'}{\heap}.

    We pick \isdef{\heapty'}{\heapty} and so by \eqref{er:and:7}:
    \la{
      \is{\cenv'}{\cenv}
      \\
      \is{\cenv_1'}{\cenv_2}
      \label{er:and:10}
      \\
      \is{\predmap'}{\predmap_2}
      \label{er:and:11}
    }
    
    By~\ref{er:i}:
    \la{
      \tcheap{\genv}{\heapty'}{\heap'}
    }
    which proves \ref{er:a}.

    By \eqref{er:and:7} we prove \ref{er:b}.

    It holds that 
    \issubtype{\type_2}{
      \isequivwithlabel{\parens{\tjoin{\refine{\type_1}{\falsy}}{\type_2}}}
        {\eqref{er:and:4}}{\type}}, 
    which proves~\ref{er:c}.

    It holds that 
    \issubtype{\effect_2}{
      \isequivwithlabel{\effectconcat{\effect_1}{\effect_2}}{\eqref{er:and:4}}{\effect}},
    which proves~\ref{er:d}.

    By~\eqref{er:and:4} and \eqref{er:and:5}:
    \la{
      \is{\predmap}{\predmap_2}
      \label{er:and:12}
    }

    Finally, it holds that
    \[
    \begin{array}{rcl}
      \refine{\cenv_1'}{\predmap'} 
      & \equalwithlabel{\eqref{er:and:10}} & \refine{\underline{\cenv_2}}{\predmap'}
      \\
      & \equalwithlabel{\eqref{er:and:11}} & \refine{\cenv_2}{\underline{\predmap_2}}
      \\
      & \subtypesym & \refine{\parens{\envjoin{\underline{\refine{\cenv_{01}}{\logneg{\predmap_1}}}}{\cenv_2}}}{\predmap_2}
      \\
      & \equalwithlabel{\eqref{er:and:4}} & \refine{\underline{\cenv_1}}{\predmap_2}
      \\
      & \equalwithlabel{\eqref{er:and:12}} & \refine{\cenv_1}{\underline{\predmap}}
    \end{array}
    \]

    which proves the first case of~\ref{er:e}. The second case is proven similarly.

\end{itemize}

\noindent\emph{The rest of the cases are handled similarly.}
\end{proof}

\begin{lemma}[Preservation of Typing by Statement Reduction]
\label{lemma:stmt:preservation}
Typing is preserved over the reduction of a statement that
preserves the state of the stack. That is, for an initial
runtime state
\isdef{\rtstate}{\angletriplet{\heap}{\stack}{\store}},
a target state
\isdef{\rtstate'}{\angletriplet{\heap'}{\stack}{\store'}}
if, under a heap typing \heapty:
\begin{enumerate}[label=(\roman*)]
  \item \tcheap{\genv}{\heapty}{\heap}
    \label{st:r:i}
  \item \tcrtstmt{\cenv}{\genv}{\heapty}{\stmt}{\effect}{\cenv_1} 
    \label{st:r:ii}
  \item \stepsconf{\rtstate}{\stmt}{\rtstate'}{\stmt'}
    \label{st:r:iii}
\end{enumerate}
where
\isdef{\cenv}{\compose{\heapty}{\store}},
then there exist $\heapty'$ \st:
\begin{enumerate}[label=(\alph*)]
  \item \tcheap{\genv}{\heapty'}{\heap'}
    \label{st:r:a}
  \item \tcrtstmt{\cenv'}{\genv}{\heapty'}{\stmt'}{\effect'}{\cenv_1'} 
    \label{st:r:b}
  \item \issubtype{\effect'}{\effect}
    \label{st:r:c}
  \item \issubtype{\cenv_1'}{\cenv_1}
  %\item
  %  \begin{itemize}
  %    \item \imp{\logand{\isval{\expr'}}{\istruthy{\expr'}}}{\issubtype{\refine{\cenv_1'}{\predmap'}}{\refine{\cenv_1}{\predmap}}}
  %    \item \imp{\logand{\isval{\expr'}}{\isfalsy{\expr'}}}{\issubtype{\refine{\cenv_1'}{\neg\predmap'}}{\refine{\cenv_1}{\neg\predmap}}}
  %    \item \imp{\neg\isval{\expr'}}{\logand{\issubtype{\refine{\cenv_1'}{\predmap'}}{\refine{\cenv_1}{\predmap}}}{\issubtype{\refine{\cenv_1'}{\neg\predmap'}}{\refine{\cenv_1}{\neg\predmap}}}}
  %  \end{itemize}
    \label{st:r:d}
\end{enumerate}
where \isdef{\cenv'}{\compose{\heapty'}{\store'}}.
\end{lemma}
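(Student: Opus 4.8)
The plan is to prove this by case analysis on the reduction step~\ref{st:r:iii}, closely paralleling the structure of Lemma~\ref{lemma:expr:preservation} for expressions. Since the lemma explicitly restricts attention to reductions that preserve the stack, the only reduction rules that can apply are \hyperref[rule:rtectx]{\rtectx} (congruence), \hyperref[rule:rtlet]{\rtlet} (variable assignment), \hyperref[rule:rtif]{\rtif} (branch selection), and \hyperref[rule:rtskip]{\rtskip} (sequencing); the rule \hyperref[rule:rtret]{\rtret}, which pops a stack frame, is excluded by hypothesis and handled separately in the overall preservation argument. In each case I would invert the matching statement typing rule from \figref{fig:flow:stmt:typing} and exhibit a witness heap typing $\heapty'$.

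First, the simple structural cases. For \hyperref[rule:rtskip]{\rtskip}, where $\seq{\skipstmt}{\stmt} \to \stmt$, I would invert \hyperref[rule:tcseq]{\tcseq}; since $\skipstmt$ has the empty effect $\effempty$ and leaves the environment unchanged, the typing of $\stmt$ under $\cenv$ is immediate with $\heapty' = \heapty$, and the effect and environment conditions~\ref{st:r:c}--\ref{st:r:d} hold by reflexivity, modulo $\issubtype{\effect_2}{\effectconcat{\effect_1}{\effect_2}}$. For \hyperref[rule:rtlet]{\rtlet}, where $\varassign{\evar}{\val} \to \skipstmt$ with the heap updated at $\loc = \idx{\store}{\evar}$, I would mirror the \hyperref[rule:rtasgn]{\rtasgn} case of Lemma~\ref{lemma:expr:preservation}: pick $\heapty'$ to be $\heapty$ updated at $\loc$ with the type of $\val$, re-establish heap well-typedness via the appropriate heap rule (\eg, \hyperref[rule:rtheapconst]{\rtheapconst}), and observe that $\cenv' = \compose{\heapty'}{\store}$ agrees with $\cenv_1 = \envupd{\cenv}{\evar}{\type}$ at $\evar$ and coincides with $\cenv$ elsewhere, so that $\skipstmt$ (empty effect, identity environment) satisfies~\ref{st:r:b}--\ref{st:r:d}.

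The congruence case \hyperref[rule:rtectx]{\rtectx} is where the statement has the form $\ectxidx{\ectx}{\expr_0}$ and the inner expression $\expr_0$ steps. Here I would decompose the typing using Lemma~\ref{lemma:decomp:ectx}, apply Lemma~\ref{lemma:expr:preservation} to the inner reduction to obtain a refined typing for $\expr_0'$ together with the effect- and refinement-subtyping facts, and then reassemble the statement typing for each statement-level context shape ($\varassign{\evar}{\ectx}$, $\seq{\ectx}{\stmt}$, and $\ite{\ectx}{\stmt_1}{\stmt_2}$). The conditional context is the delicate subcase: after the condition steps, its predicate mapping may change from $\predmap$ to $\predmap'$, so to re-type the two branches under the strengthened environments $\refine{\cenv'}{\predmap'}$ and $\refine{\cenv'}{\logneg{\predmap'}}$ I would invoke part~\ref{er:e} of Lemma~\ref{lemma:expr:preservation} (which relates $\refine{\cenv_1'}{\predmap'}$ to $\refine{\cenv_1}{\predmap}$) together with the environment strengthening Lemma~\ref{lemma:environment:weakening}, which transports the branch typings to the smaller environments while tracking the induced subtyping on their outputs.

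Finally, for \hyperref[rule:rtif]{\rtif}, where the condition is already a value and $\ite{\val}{\stmt_1}{\stmt_2}$ reduces to the selected branch, I would use the Dead Code Checking Assumption~\ref{assumption:dead:code} to equate the typing of the conditional with that of the branch actually taken, keeping $\heapty' = \heapty$. I expect the main obstacle to lie here and in the conditional subcase of \hyperref[rule:rtectx]{\rtectx}: reconciling the runtime notion of truthiness used in \hyperref[rule:rtif]{\rtif} with the literal $\vtrue/\vfalse$ cases of Assumption~\ref{assumption:dead:code}, and checking that the refinement applied to the selected branch in \hyperref[rule:tcif]{\tcif} (using $\predmap$ or $\logneg{\predmap}$ for a value condition) does not strengthen the environment beyond $\cenv$. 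Since a value condition carries the empty predicate mapping, that refinement is the identity and the branch types directly under $\cenv$; all remaining cases then follow the expression lemma almost verbatim.
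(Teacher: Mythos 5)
Your proposal is correct and follows essentially the approach the paper itself prescribes: the paper's proof of Lemma~\ref{lemma:stmt:preservation} is literally just ``similar to the proof of Lemma~\ref{lemma:expr:preservation}'' with an unfinished note for the if-then-else case, and your case analysis (congruence via Lemma~\ref{lemma:decomp:ectx} plus the expression lemma, the assignment case mirroring \hyperref[rule:rtasgn]{\rtasgn}, and branch selection via Assumption~\ref{assumption:dead:code}) fills in exactly the cases the paper intends. Your observation that a value condition carries the empty predicate mapping, so the branch refinement is the identity, correctly resolves the if-then-else subcase the paper leaves open.
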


\begin{proof}
  \emph{Similar to proof of Lemma~\ref{lemma:expr:preservation}.}
  \todo{Do case of if-then-else.}

      %\item \isequiv{\ectx}{\ite{\hole}{\stmt_1}{\stmt_2}} and \isval{\expr_0}. 
      %  Let's also assume that \istruthy{\expr_0}. (The case
      %  for \isfalsy{\expr_0} is exactly symmetrical.)

      %  It holds that:
      %  \la{
      %    \isequiv{\expr'}{\ite{\vtrue}{\stmt_1}{\stmt_2}}
      %  }

      %  By Assumption~\ref{assumption:dead:code:ite:true} checking the above amounts to
      %  checking $\stmt_1$, since $\stmt_2$ is guaranteed to be dead code.

      %  By inverting \hyperref[rule:tcif]{\tcif} on
      %  \ref{er:ii} for \isequiv{\expr}{\ite{\expr_0}{\stmt_1}{\stmt_2}}:
      %  \la{
      %    \tcstmt{\refine{\cenv_0}{\predmap_0}}{\genv}{\stmt_1}{\effect_1}{\cenv_1}
      %    \label{er:ectx:50}
      %  }

      %  By Lemma~\todo{weaken -- add extra case} using \eqref{er:ectx:11a}:
      %  \la{
      %    \tcstmt{\refine{\cenv_0'}{\predmap_0'}}{\genv}{\stmt_1}{\effect_1'}{\cenv_1'}
      %    \label{er:ectx:51}
      %    \\
      %    \issubtype{\effect_1'}{\effect_1}
      %    \\
      %    \issubtype{\cenv_1'}{\cenv_1}
      %  }

\end{proof}

\begin{theorem}[Subject Reduction]
\label{theorem:flow:preservation}
Typing is preserved over expression reduction. 
Formally, if
\begin{enumerate}[label=(\roman*)]
  \item \tcrtconfexpr{\genv}{\heapty}{\rtstate}{\expr}{\type}
    \label{sr:i}
  \item \stepsconf{\rtstate}{\expr}{\rtstate'}{\exprbody'}
    \label{sr:ii}
\end{enumerate}
then there exists $\heapty'$ \st
\begin{enumerate}[label=(\alph*)]
  \item \tcrtconfexpr{\genv}{\heapty'}{\rtstate'}{\exprbody'}{\type'}
    \label{sr:a}
  \item \issubtype{\type'}{\type}
    \label{sr:b}
\end{enumerate}
\begin{proof}
Let
\la{
  \rtstate \equiv \rttriplet{\heap}{\stack}{\store}
  \label{sr:22}
  \\
  \rtstate' \equiv \rttriplet{\heap'}{\stack'}{\store'}
  \label{sr:23}
}

By inverting Rule~\hyperref[rule:rtconfbody]{\rtconfbody} on~\ref{sr:i}:
\la{
  \tcheap{\genv}{\heapty}{\heap}
  \label{sr:24}
  \\
  % \is{\mkenv{\heapty}{\store}}{\cenv}
  \jstore{\heapty}{\store}{\cenv}
  \label{sr:7}
  \\
  \tcrtbody{\cenv}{\genv}{\heapty}{\expr}{\type_{\expr}}{\effect_{\expr}}{\cenv_{\expr}}
  \label{sr:8}
  \\ 
  \isheaptyoverrides{\heapty_{\stack}}{\cenv_{\expr}}{\store}{\heapty}
  \label{sr:213}
  \\
  \tcstack{\genv}{\heapty_{\stack}}{\stack}{\ectxidx{\type}{\type_{\expr}}}
  \label{sr:6}
}

By induction on the derivation of~\ref{sr:ii}:

\begin{itemize}
  \item 
    \hyperref[rule:rtasgn]{\rtasgn},
    \hyperref[rule:rtarrow]{\rtarrow},
    \hyperref[rule:rtpredvar]{\rtpredvar},
    \hyperref[rule:rtandtru]{\rtandtru},
    \hyperref[rule:rtandfls]{\rtandfls},
    \hyperref[rule:rtortru]{\rtortru},
    \hyperref[rule:rtorfls]{\rtorfls},
    \hyperref[rule:rtneg]{\rtneg},
    \hyperref[rule:rtlet]{\rtlet},
    \hyperref[rule:rtiftru]{\rtiftru},
    \hyperref[rule:rtiffls]{\rtiffls},
    and \hyperref[rule:rtskip]{\rtskip}
    do not evolve the stack, so can be proven by use of
    Lemma~\ref{lemma:expr:preservation}.

  \item  \hyperref[rule:rtcall]{\rtcall}:
    \la{
      \stepsconf{\rttriplet{\heap}{\stack}{\store}}
      {\underbrace{\ectxidx{\ectx}{\efuncall{\loc}{\val}}}_{\expr}}
        {\rttriplet
          %{\underbrace{\heapext{\heap}{\loc'}{\val}}_{\heap'}}
          {\heap'}
          %{\underbrace{\stackcons{\stack}{\store}{\ectx}}_{\stack'}}
          {\stack'}
          %{\underbrace{\storeext{\store_0}{\evar}{\loc'}}_{\store'}}
          {\store'}
        }
        {\underbrace{\body{\stmt_0}{\expr_0}}_{\exprbody'}}
      \label{sr:1}
    }
    where \fresh{\loc'}. By inverting Rule~\hyperref[rule:rtcall]{\rtcall} on \eqref{sr:1}:
    \la{
      \is{\idx{\heap}{\loc}}{\storearrowshort{\store_0}{\evar}{\exprbody'}}
      \label{sr:2}
      \\
      \is{\many{\evar_i}}{\locals{\exprbody'}}
      \label{sr:2a}
      \\
      \is{\heap'}{\heapcons{\heapext{\heap}{\loc'}{\val}}{\many{\heapbinding{\loc_i}{\vundef}}}}
      \label{sr:2b}
      \\
      \is{\stack'}{\stackcons{\stack}{\store}{\ectx}}
      \label{sr:2c}
      \\
      \is{\store'}{\storecons{\storeext{\store_0}{\evar}{\loc'}}{\many{\storebinding{\evar_i}{\loc_i}}}}
      \label{sr:2d}
    }

    Due to \eqref{sr:1}, judgment \eqref{sr:8} is of the form:
    \la{
      \tcrtbody{\cenv}{\genv}{\heapty}{\ectxidx{\ectx}{\efuncall{\loc}{\val}}}{\type_{\expr}}{\effect_{\expr}}{\cenv_{\expr}}
      \label{sr:12}
    }

    By Lemma~\ref{lemma:decomp:ectx} on \eqref{sr:12} and given that the
    predicate mapping associated with a function call is empty:
    \la{
      \tcrtexpr{\cenv}{\genv}{\heapty}
        {\efuncall{\loc}{\val}}{\type_c}
        {\effect_c}{\predmap_c}{\cenv_3}
      \label{sr:13}
      \\
      \tcrtexpr{\cenv_3}{\genv}{\heapty}{\ectx}
        {\ectxidx{\type_{\expr}}{\type_c}}
        {\ectxidx{\effect_{\expr}}{\effect_c}}
        {\ectxidx{\predmap}{\predmap_c}}
        {\cenv_{\expr}}
      \label{sr:65}
    }

    By inverting Rule~\hyperref[rule:tccall]{\tccall} on~\eqref{sr:13}:
    \la{
      \tcrtexpr{\cenv}{\genv}{\heapty}{\loc}{\type_{\loc}}{\effempty}{\predmapempty}{\cenv_1}
      \label{sr:16}
      \\
      \tcrtexpr{\cenv_1}{\genv}{\heapty}{\val}{\type_{\val}}{\effempty}{\predmapempty}{\cenv_2}
      \label{sr:17}
      \\
      \issubtype{\type_{\loc}}{\tarrow{\type_{\val}}{\effect_c}{\type_c}}
      \label{sr:19}
      \\
      \is{\cenv_3}{\erasevarto{\cenv_2}{\genv}{\effect_c}}
      \label{sr:20}
    }

    By Rules~\hyperref[rule:rt:tcloc]{\rttcloc} and \hyperref[rule:tcconst]{\tcconst}
    on \eqref{sr:16} and on \eqref{sr:17}:
    \la{
      \is{\cenv}{\is{\cenv_1}{\cenv_2}}
      \label{sr:66}
    }

    So \eqref{sr:17} becomes:
    \la{
      \tcrtexpr{\cenv}{\genv}{\heapty}{\val}{\type_{\val}}{\effempty}{\predmapempty}{\cenv}
      \label{sr:35}
    }

    By inverting Rule~\hyperref[rule:rtheapfun]{\rtheapfun} on \eqref{sr:24}
    using \eqref{sr:2}:
    \la{
      \tcheap{\genv}{\heapty}{\heap_0}
      \label{sr:61}
      \\
      \idxis{\heapty}{\loc}{\type_{\loc}}
      \\
      \jstore{\heapty}{\store_0}{\cenv_0}
      \label{sr:26}
      \\
      \tcexpr{\cenv_0}{\genv}{\arrow{\evar}{\stmt_0}{\expr_0}}{\type_{\loc}}{\effempty}{\predmapempty}{\cenv_0'}
      \label{sr:36}
    }
    where
    \la{
      \isequiv{\heap}{\heapext{\heap_0}{\loc}{\storearrowshort{\store_0}{\evar}{\exprbody'}}}
    }

    By inverting Rule~\hyperref[rule:tcfun]{\tcfun} on \eqref{sr:36}:
    \la{
      \tcrtbody
      {\underbrace{\envcons{\envext{\eraseto{\genv}{\cenv_0}}{\evar}
      {\type_{\evar}}}{\many{\binding{\evar_i}{\tvoid}}}}_{\cenv_{0.1}}}{\genv}{\heapty}
      {\body{\stmt_0}{\expr_0}}{\type_0}{\effect_0}{\cenv_{0.2}}
      \label{sr:44}
      %\\
      %\eraseis{\cenv_0}{\cenv_{0.1}}
      %\label{sr:52}
      %\\
      %\is{\cenv_{0.2}}{\envext{\cenv_{0.1}}{\evar}{\initarg{\tvar_{\evar}}}}
      %\label{sr:53}
      \\
      \isequiv{\type_{\loc}}{\tarrow{\type_{\evar}}{\logforget{\effect_0}{\evar,\many{\evar_i}}}{\type_0}}
      \label{sr:38}
    }
    where \many{\evar_i} are the local variables defined in $\stmt_0$.

    %By Rules~\hyperref[appendix:rule:cptrans]{\cptrans} 
    %and~\hyperref[appendix:rule:cptranseffect]{\cptranseffect}
    %on~\eqref{sr:27},
    By~\eqref{sr:19} and~\eqref{sr:38}:
    \la{
    \issubtype{\tarrow{\type_{\evar}}{\logforget{\effect_0}{\evar,\many{\evar_i}}}{\type_0}}
        {\tarrow{\type_{\val}}{\effect_c}{\type_c}}
      \label{sr:183}
    }

    %By Rule~\hyperref[appendix:rule:cpcall]{\cpcall} on~\eqref{sr:183}
    By subtyping decomposition on \eqref{sr:183}:
    \la{
      \issubtype{\type_{\val}}{\type_{\evar}}
      \label{sr:184a}
      \\
      \issubtype{\type_0}{\type_c}
      \\
      \issubtype{\logforget{\effect_0}{\evar,\many{\evar_i}}}{\effect_c}
      \label{sr:184}
    }

%%%%%%%%%%%%%%%%%%%%%%%%%%%%%%%%%%%%%%%%%%%%%%%%%%%%%%%%%%%%%%%%%%%%%%%%%%%%%%%
   \spara{After the reduction step},
%%%%%%%%%%%%%%%%%%%%%%%%%%%%%%%%%%%%%%%%%%%%%%%%%%%%%%%%%%%%%%%%%%%%%%%%%%%%%%%
    we pick:
    \la{
      \is{\heapty'}{\heaptyexts{\heaptyext{\heapty}{\loc'}{\type_{\val}}}{\loc_i}{\tvoid}}
      \label{sr:32}
    }

    The body \is{\exprbody'}{\body{\stmt_0}{\expr_0}} is checked under the environment
    produced by store $\store_0$ and heap typing $\heapty'$.
    $\heapty'$ coincides with $\heapty$ on their common domain $\envdom{\store}$,
    so:
    \la{
      \cenv_0 \equals \compose{\heapty'}{\store_0} \equals \compose{\heapty}{\store_0}
      %\is{\mkenv{\store_0}{\heapty'}}{\cenv_0}
      \label{sr:141}
    }

    We extend the store $\store_0$ with a binding from $\evar$ to $\loc'$, and
    from every variable declared in the body $\stmt_0$ to \tvoid, resulting 
    in the following environment:
    \la{
      \cenv' \equals \envexts{\envext{\cenv_0}{\evar}{\type_{\val}}}{\evar_i}{\tvoid}
      \equals \compose{\heapty'}{\underbrace{\envexts{\envext{\store_0}{\evar}{\loc'}}{\evar_i}{\loc_i}}_{\store'}}
      %\\
      %\is{\mkenv{\envext{\store_0}{\evar}{\loc'}}{\heapty'}}
      %  {\envext{\cenv_0}{\evar}{\entry{\type_{\val}}{\tvar_{\val}}}}
      %\defeq
      %\cenv'
      \label{sr:33}
    }

    By Lemma~\ref{lemma:heapty:weakening:exprbody} on \eqref{sr:35}:
    \la{
      \tcrtexpr{\cenv}{\genv}{\heapty'}{\val}{\type_{\val}}{\effempty}{\predmapempty}{\cenv}
      \label{sr:67}
    }

    By applying Rules~\hyperref[rule:rtheaploc]{\rtheaploc} and \hyperref[rule:rtheapconst]{\rtheapconst} using
    \eqref{sr:24}, \eqref{sr:32} and \eqref{sr:67}:
    \la{
      \tcheap{\genv}{\heapty'}{\heap'}
      \label{sr:39}
      %\\
      %\cg_{\heap'} \defeq \consconcat{\cg_{\heap}}{\cg_{\val}}
      %\label{sr:47}
    }

    By definition of \erasesym and~\eqref{sr:184a}:
    %, and because \notinset{\many{\evar_i}}{\envdom{\cenv_0}}:
    \la{
      \issubtype
        {\underbrace{\envexts{\envext{\cenv_0}{\evar}{\type_{\val}}}{\evar_i}{\tvoid}}_{\cenv'}}
        {\underbrace{\envexts{\envext{\eraseto{\genv}{\cenv_0}}{\evar}{\type_{\evar}}}{\evar_i}{\tvoid}}_{\cenv_{0.1}}}
      \label{sr:121}
        %{\setcompr
        %  {\flows{\type}{\tvar}}
        %  {\binding{\evar}{\inset{\entry{\type}{\tvar}}{\cenv_{0}}}}}
    }
    %Here, the subsumption of a type bound in an environment by the most general
    %type is implicit.

    By Lemma~\ref{lemma:environment:weakening:expr} on \eqref{sr:44} and \eqref{sr:121}, 
    and using the extended heap typing $\heapty'$:
    \la{
      \tcrtbody{\cenv'}{\genv}{\heapty'}{\body{\stmt_0}{\expr_0}}{\type_0'}{\effect_0'}{\cenv_2'}
      \label{sr:124}
      \\
      \issubtype{\type_0'}{\type_0}
      \label{sr:125a}
      \\
      \issubtype{\cenv_2'}{\cenv_{0.2}}
      \label{sr:125b}
      \\
      \issubtype{\effect_0'}{\effect_0}
      \label{sr:125c}
    }

    %By definition of environment subtyping
    %\la{
    %  \entailsconstraint
    %    {\uniflow{\type_{\val}}{\tvar_{\evar}}}
    %    {\underbrace{\envext{\cenv_0}{\evar}{\entry{\type_{\val}}{\type_{\val}}}}_{\cenv'}}
    %    {\envext{\cenv_0}{\evar}{\entry{\tvar_{\evar}}{\tvar_{\evar}}}}
    %  \label{sr:128}
    %}

    %By~\eqref{sr:184} and~\eqref{sr:128}
    %\la{
    %  \entailsconstraint{\cg}{\cenv'}{\bparens{\envext{\cenv_0}{\evar}{\entry{\tvar_{\evar}}{\tvar_{\evar}}}}}
    %  \label{sr:233}
    %}

    %By Lemma~\ref{lemma:environment:weakening:expr} and Lemma~\ref{lemma:heapty:str}
    %on \eqref{sr:124} and~\eqref{sr:128}
    %\la{
    %  \jrtbody{\cenv'}{\heapty'}{\body{\stmt_0}{\expr_0}}{\type_0''}{\effect_0''}{\cg_0''}{\cenv_{0.3}''}
    %  \label{sr:129}
    %  \\
    %  \entailsconstraint{\cset}{\type_0''}{\type_0'}
    %  \label{sr:148a}
    %  \\
    %  \entailsconstraint{\cset}{\cenv_0''}{\cenv_0'}
    %  \label{sr:148b}
    %  \\
    %  \entailsconstraint{\cset}{\effect_0''}{\effect_0'}
    %  \label{sr:148c}
    %}

    %By Lemma~\ref{lemma:effect:trans} on~\eqref{sr:125a},~\eqref{sr:125c},~\eqref{sr:148a}
    %and~\eqref{sr:148c}:
    %\la{
    %  \entailsconstraint{\cset}{\type_0''}{\type_0}
    %  \label{sr:132}
    %  \\
    %  \entailsconstraint{\cset}{\effect_0''}{\effect_0}
    %  \label{sr:133}
    %}

%%%%%%%%%%%%%%%%%%%%%%%%%%%%%%%%%%%%%%%%%%%%%%%%%%%%%%%%%%%%%%%%%%%%%%%%%%%%%%%
%%%%%%%%%%%%%%%%%%%%%%%%%%%%%%%%%%%%%%%%%%%%%%%%%%%%%%%%%%%%%%%%%%%%%%%%%%%%%%%

    \spara{Stack \is{\stack'}{\stackcons{\stack}{\store}{\ectx}}} 
    is checked under a heap typing:% $\heapty_{\stack'}'$ 
    \la{
      \isdef{\heapty_{\stackcons{\stack}{\store}{\ectx}}'}
      {\heaptyoverrides{\cenv_2'}{\store_0}{\heapty'}}
      \label{sr:161}
    }
    
    \spara{Evaluation context \ectx} is checked under an environment:
    \la{
      \jstore{\heapty_{\stackcons{\stack}{\store}{\ectx}}'}{\store}{\cenv_3'}
      \label{sr:162}
    }

    Let \evarset and $\evarset_0$ be the domains of \store and $\store_0$:
    \la{
      \evarset   \defeq \envdom{\store}
      \label{sr:163}
      \\
      \evarset_0 \defeq \envdom{\store_0}
      \label{sr:164}
    }

    Since \is{\envdom{\cenv_3}}{\evarset}, we can examine $\cenv_3$ in 
    two parts based on whether an element $\evar$ in $\evarset$, also 
    belongs to $\evarset_0$ or not:
    % one whose domain is restricted to \intersection{\evarset}{\evarset_0},
    % and another one restricted to \difference{\evarset}{\evarset_0}:
    \la{
      \isequiv{\cenv_3}{\envcons{\restrict{\cenv_3}{\intersection{\evarset_0}{\evarset}}}}
                          {\restrict{\cenv_3}{\difference{\evarset}{\evarset}_0}}
      \label{sr:165}
    }

    We similarly examine $\cenv_3'$ into two parts:
    (i)~the closure environment $\cenv_2'$ at the end of the function body, and
    (ii)~the part of the environment at the call-site that is 
    not part of the closure environment and so retains the typing from before the function call:
    \la{
      \isequiv{\cenv_3'}{\envcons{\restrict{\cenv_2'}{\intersection{\evarset_0}{\evarset}}}}
                           {\restrict{\cenv}{\difference{\evarset}{\evarset}_0}}
      \label{sr:166}
    }

    We examine the two non-overlapping domains separately:
    \begin{itemize}

      \item  \intersection{\evarset_0}{\evarset}.
        By restricting~\eqref{sr:33} to \intersection{\evarset_0}{\evarset}:
        \begin{align}
          & 
          & \restrict{\cenv'}{\intersection{\evarset_0}{\evarset}}
          & \equals \restrict{\envexts{\envext{\cenv_0}{\evar}{\type_{\val}}}{\evar_i}{\tvoid}}{\intersection{\evarset_0}{\evarset}}
          && \equals
          \compose{\heapty'}
          {\bparens{\restrict{\envexts{\envext{\store_0}{\evar}{\loc'}}{\evar_i}{\loc_i}}{\intersection{\evarset_0}{\evarset}}}}
          \label{sr:200}
        \\
          & \therefore 
          & \restrict{\cenv'}{\intersection{\evarset_0}{\evarset}}
          & \equals
            \restrict{\cenv_0}{\intersection{\evarset_0}{\evarset}}
          && \equals
            \compose{\heapty'}{\restrict{\store_0}{\intersection{\evarset_0}{\evarset}}}
          \label{sr:201}
        \end{align}

        %By simplifying~\eqref{sr:200}:
        %\la{
        %  \jstore{\heapty'}
        %    {\restrict{\store_0}{\intersection{\evarset_0}{\evarset}}}
        %    {\restrict{\cenv'}{\intersection{\evarset_0}{\evarset}}}
        %  \label{sr:201}
        %}

        Note that due to $\alpha$-renaming every variable is uniquely defined. 
        Therefore, each variable is bound to the same location in a store that
        contains it. In particular, for $\store_0$ and \store it holds that:
        \la{
          \is{\restrict{\store_0}{\intersection{\evarset_0}{\evarset}}}
             {\restrict{\store}{\intersection{\evarset_0}{\evarset}}}
          \label{sr:202}
        }

        By restricting~\eqref{sr:7} to \intersection{\evarset_0}{\evarset}:
        \la{
          \jstore{\heapty}
            {\restrict{\store}{\intersection{\evarset_0}{\evarset}}}
            {\restrict{\cenv}{\intersection{\evarset_0}{\evarset}}}
          \label{sr:203}
        }

        By combining~\eqref{sr:66},~\eqref{sr:201},~\eqref{sr:202} and~\eqref{sr:203}:
        \la{
          \is{\restrict{\cenv'}{\intersection{\evarset_0}{\evarset}}}
             {\restrict{\cenv_2}{\intersection{\evarset_0}{\evarset}}}
          \label{sr:167}
        }

        %Let $p$ be a predicate function (over variables), defined as
        %\la{
        %  %\is{p\parens{\evar}}{\inset{\flows{\evar}{\effect_0''}}{\saturated{\cg}}}
        %  \isdef{p\parens{\evar}}{\bparens{\inset{\evar}{\lbof{\effvar_0''}{\cg}}}}
        %}

        Effect $\effect_c$ is concrete so it can be interpreted as a set of variables.
        We split the set \intersection{\evarset_0}{\evarset} in the following:
        \la{
          \is{\intersection{\evarset_0}{\evarset}}{\concat
            {\underbrace{\intersection{\intersection{\evarset_0}{\evarset}}{\effect_c}}_{\evarset_{\effect}}}
            {\underbrace{\difference{\bparens{\intersection{\evarset_0}{\evarset}}}{\effect_c}}_{\evarset_{\overline{\effect}}}}
          }
        }
        %where
        %\la{
        %  \is{\evarset_p}{\setcompr{\evar}{\logand{\inset{\evar}{\intersection{\evarset_0}{\evarset}}}{p\parens{\evar}}}}
        %  \\
        %  \is{\evarset_{\logneg{p}}}{\setcompr{\evar}{\logand{\inset{\evar}{\intersection{\evarset_0}{\evarset}}}{\logneg{p\parens{\evar}}}}}
        %}

        We examine each part separately.

        \begin{itemize}

          \item  $\evarset_{\effect}$. 
            We first restrict~\eqref{sr:20} to domain $\effect_c$
            (a concrete effect interpreted as a set): %\intersection{\evarset_0}{\evarset}
            \la{
              \is{\restrict{\cenv_3}{\effect_c}}{\restrict{\erasevarto{\cenv_2}{\genv}{\effect_c}}{\effect_c}}
              \label{sr:174}
            }
            
            By definition of \erasesym,~\eqref{sr:174} can be written as:
            \la{
              \is{\restrict{\cenv_3}{\effect_c}}{\restrict{\genv}{\effect_c}}
              \label{sr:175}
            }

            By definition of \genv it holds that:
            \la{
              \issubtype{\cenv_2'}{\genv}
              \label{sr:177a}
            }

            By~\eqref{sr:175} and~\eqref{sr:177a}:
            \la{
              \issubtype{\restrict{\cenv_2'}{\evarset_{\effect}}}{\restrict{\cenv_3}{\evarset_{\effect}}}
              \label{sr:177b}
            }

          \item  $\evarset_{\overline{\effect}}$.
            By definition of \erasesym:
            \la{
              \is{\restrict{\erasevarto{\cenv_2}{\genv}{\effect_c}}{\difference{\bparens{\intersection{\evarset_0}{\evarset}}}{\effect_c}}}
                {\restrict{\cenv_2}{\difference{\bparens{\intersection{\evarset_0}{\evarset}}}{\effect_c}}}
              \label{sr:179a}
            }
            since the binding for variables not in $\effect_c$ will not be affected by the erasure.

            By~\eqref{sr:20} and~\eqref{sr:179a}:
            \la{
              \is
                {\restrict{\cenv_2}{\difference{\bparens{\intersection{\evarset_0}{\evarset}}}{\effect_c}}}
                {\restrict{\cenv_3}{\difference{\bparens{\intersection{\evarset_0}{\evarset}}}{\effect_c}}}
              \label{sr:179b}
            }

            By~\eqref{sr:184} and~\eqref{sr:125c} (interpreting concrete effects as sets):
            \la{
              \issubseteq{\effect_0'}{\effect_c}
              %\isdef{\effect_o''}{\issubseteq{\logforget{\effect_0'}{\evar,\many{\evar_i}}}{\effect_c}}
              \label{sr:176}
            }

            %By~\eqref{sr:124} and \eqref{sr:176} it is safe to assume that:
            %\la{
            %  \tcrtbody{\cenv'}{\genv}{\heapty'}{\body{\stmt_0}{\expr_0}}{\type_0'}{\effect_c}{\cenv_2'}
            %  \label{sr:124a}
            %}

            %
            By Lemma~\ref{lemma:non:effect} on~\eqref{sr:124}:
            \la{
              \issubtype{\restrict{\cenv_2'}{\difference{\bparens{\intersection{\evarset_0}{\evarset}}}{\effect_0'}}}
                        {\restrict{\cenv'}{\difference{\bparens{\intersection{\evarset_0}{\evarset}}}{\effect_0'}}}
              \label{sr:179c}
            }

            By \eqref{sr:176} and \eqref{sr:179c}:
            \la{
              \issubtype{\restrict{\cenv_2'}{\difference{\bparens{\intersection{\evarset_0}{\evarset}}}{\effect_c}}}
                        {\restrict{\cenv'}{\difference{\bparens{\intersection{\evarset_0}{\evarset}}}{\effect_c}}}
              \label{sr:179d}
            }

            %We restrict~\eqref{sr:179} to the domain \intersection{\evarset_0}{\evarset}
            %\la{
            %  \jlbsubundereffect{\cg}{\effectinvert{\effect_0''}}{
            %    \bparens{\restrict{\cenv'}{\intersection{\evarset_0}{\evarset}}}
            %  }{
            %    \bparens{\restrict{\cenv_{0.3}''}{\intersection{\evarset_0}{\evarset}}}
            %  }
            %  \label{sr:180}
            %}

            By~\eqref{sr:179d} and~\eqref{sr:167}:
            \la{
              \issubtype{\restrict{\cenv_2'}{\difference{\bparens{\intersection{\evarset_0}{\evarset}}}{\effect_c}}}
                        {\restrict{\cenv_2}{\difference{\bparens{\intersection{\evarset_0}{\evarset}}}{\effect_c}}}
              \label{sr:181}
            }

            By definition of $\evarset_{\overline{\effect}}$,~\eqref{sr:181} can be written as
            \la{
              \issubtype{\restrict{\cenv_2'}{\evarset_{\overline{\effect}}}}
                        {\restrict{\cenv_2}{\evarset_{\overline{\effect}}}}
              \label{sr:182}
            }
        \end{itemize}

      \item  \difference{\evarset}{\evarset_0}. We follow a similar reasoning to above
        restricting the difference \difference{\evarset}{\evarset_0} to variables
        contained in $\effect_c$ or not. We examine the cases:
        \begin{itemize} 
          \item Restrict to $\effect_c$. By~\eqref{sr:20}:
            \la{
              \is{\restrict{\cenv_3}{\intersection{\parens{\difference{\evarset}{\evarset_0}}}{\effect_c}}}
                {\restrict{\erasevarto{\cenv_2}{\genv}{\effect_c}}{\intersection{\parens{\difference{\evarset}{\evarset_0}}}{\effect_c}}}
              \label{sr:300}
            }

            By definition of \erasesym the above becomes:
            \la{
              \is{\restrict{\cenv_3}{\intersection{\parens{\difference{\evarset}{\evarset_0}}}{\effect_c}}}
                {\restrict{\cenv_2}{\intersection{\parens{\difference{\evarset}{\evarset_0}}}{\effect_c}}}
              \label{sr:301}
            }

          \item Restrict to $\overline{\effect_c}$. By~\eqref{sr:20}:
            \la{
              \is{\restrict{\cenv_3}{\difference{\parens{\difference{\evarset}{\evarset_0}}}{\effect_c}}}
                {\restrict{\erasevarto{\cenv_2}{\genv}{\effect_c}}{\difference{\parens{\difference{\evarset}{\evarset_0}}}{\effect_c}}}
              \label{sr:302}
            }

            By definition of \erasesym the above becomes:
            \la{
              \is{\restrict{\cenv_3}{\difference{\parens{\difference{\evarset}{\evarset_0}}}{\effect_c}}}
                {\restrict{\genv}{\difference{\parens{\difference{\evarset}{\evarset_0}}}{\effect_c}}}
              \label{sr:303}
            }

        \end{itemize}

        In either case it holds that:
        \la{
          \issubtype
            {\restrict{\cenv}{\difference{\evarset}{\evarset_0}}}
            {\restrict{\cenv_3}{\difference{\evarset}{\evarset_0}}}
          \label{sr:210}
        }

    \end{itemize}

    By~\eqref{sr:165} it holds that:
    \la{
      \isequiv{\bparens{\envconsthree
        {\restrict{\cenv_3}{\evarset_{\effect}}}
        {\restrict{\cenv_3}{\evarset_{\overline{\effect}}}}
        {\restrict{\cenv_3}{\difference{\evarset}{\evarset}_0}}
      }}{\cenv_3}
      \label{sr:250}
    }

    By~\eqref{sr:177b} and~\eqref{sr:250}:
    \la{
      \issubtype{\bparens{\envconsthree
        {{\color{darkblue}{\restrict{\cenv_2'}{\evarset_{\effect}}}}}
        {\restrict{\cenv_3}{\evarset_{\overline{\effect}}}}
        {\restrict{\cenv_3}{\difference{\evarset}{\evarset}_0}}
      }}{\cenv_3}
      \label{sr:251}
    }

    By Lemma~\ref{lemma:erase:subtype} on~\eqref{sr:251} and~\eqref{sr:19}:
    \la{
      \issubtype{\bparens{\envconsthree
        {\restrict{\cenv_2'}{\evarset_{\effect}}}
        {{\color{darkblue}{\restrict{\cenv_2}{\evarset_{\overline{\effect}}}}}}
        {\restrict{\cenv_3}{\difference{\evarset}{\evarset}_0}}
      }}{\cenv_3}
      \label{sr:252}
    }

    By~\eqref{sr:252} and~\eqref{sr:182}:
    \la{
      \issubtype{\bparens{\envconsthree
        {\restrict{\cenv_2'}{\evarset_{\effect}}}
        {{\color{darkblue}{\restrict{\cenv_2'}{\evarset_{\overline{\effect}}}}}}
        {\restrict{\cenv_3}{\difference{\evarset}{\evarset}_0}}
      }}{\cenv_3}
      \label{sr:253}
    }

    By~\eqref{sr:210} and~\eqref{sr:253}:
    \la{
      \issubtype{\underbrace{\bparens{\envconsthree
        {\restrict{\cenv_2'}{\evarset_{\effect}}}
        {\restrict{\cenv_2'}{\evarset_{\overline{\effect}}}}
        {{\color{darkblue}{\restrict{\cenv}{\difference{\evarset}{\evarset}_0}}}}
      }}_{\cenv_3'}}{\cenv_3}
      \label{sr:254}
      %\\
      %\entailsconstraint{\cg}{\underbrace{\bparens{\envconsthree
      %  {\restrict{\cenv_{0.3}''}{\evarset_p}}
      %  {\restrict{\cenv_{0.3}''}{\evarset_{\logneg{p}}}}
      %  {\restrict{\cenv}{\difference{\evarset}{\evarset}_0}}
      %}}_{\cenv_3'}}{\cenv_3}
    }

    By Lemma~\ref{lemma:environment:weakening:expr}
    (extended for evaluation contexts)
    \todo{add the exact lemma?}
    on~\eqref{sr:254} and~\eqref{sr:65}:
    %(notice that no heap typing is needed to check \ectx)
    \la{
      \tcrtexpr{\cenv_3'}{\genv}{\heapty'}{\ectx}
        {\ectxidx{\type_{\expr}'}{\type_c}}
        {\ectxidx{\effect_{\expr}'}{\effect_c}}
        {\ectxidx{\predmap'}{\predmap_c}}
        {\cenv_{\expr}'}
      %\jexpr{\cenv_3'}{\ectx}
      %  {\ectxidx{\type_{\expr}'}{\tvar_{\ectx}}}
      %  {\ectxidx{\effect_{\expr}'}{\effvar_{\ectx}}}
      %  {\ectxidx{\predmap}{\predmap_{\callsym}}}
      %  {\cg_{\ectx}'}{\cenv_{\expr}'}
      \label{sr:190}
      \\
      \issubtype{\type_{\expr}'}{\type_{\expr}}
      \label{sr:191a}
      \\
      \issubtype{\effect_{\expr}'}{\effect_{\expr}}
      \label{sr:191b}
      \\
      \issubtype{\cenv_{\expr}'}{\cenv_{\expr}}
      \label{sr:193}
    }

    We check the remaining stack \stack under a heap typing (see also~\eqref{sr:161}):
    % $\heapty_{\stack}'$ computed as
    \la{
      \isdef 
      {\heapty_{\stack}'}
      {\heaptyoverrides{\cenv_{\expr}'}{\store}
      {\underbrace{\heaptyoverrides{\cenv_2'}{\store_0}{\heapty'}}_{\heapty_{\stackcons{\stack}{\store}{\ectx}}'}}}
      \label{sr:194}
    }

    Let \locset and $\locset_0$ the ranges of \store and $\store_0$:
    \la{
      \isdef{\locset}{\envrng{\store}}
      \label{sr:260}
      \\
      \isdef{\locset_0}{\envrng{\store_0}}
      \label{sr:261}
    }

    We examine $\heapty_{\stack}'$ in the following subdomains that correspond to the
    three parts of the definition above:
    \begin{itemize}
      \item  \locset.
        By restricting~\eqref{sr:213} and~\eqref{sr:194}, respectively, to \locset
        (the first part of the override will always be selected):
        \la{
          \is{\restrict{\heapty_{\stack}}{\locset}}{\compose{\cenv_{\expr}}{\envinvert{\store}}}
          \label{sr:216}
          \\
          \is{\restrict{\heapty_{\stack}'}{\locset}}{\compose{\cenv_{\expr}'}{\envinvert{\store}}}
          \label{sr:217}
        }

        By substituting~\eqref{sr:216} and~\eqref{sr:217} in~\eqref{sr:193}:
        \la{
          \issubtype{\restrict{\heapty_{\stack}'}{\locset}}
            {\restrict{\heapty_{\stack}}{\locset}}
          \label{sr:224}
        }

      \item  \difference{\locset_0}{\locset}. By~\eqref{sr:213} and~\eqref{sr:26}:
        \la{
          \isoverset
            {\eqref{sr:213}}
            {\restrict{\heapty_{\stack}}{\difference{\locset_0}{\locset}}}
            {\isoverset
              {\eqref{sr:26}}
              {\restrict{\heapty}{\difference{\locset_0}{\locset}}}
              {\compose{\cenv_0}{\envinvert{\parens{\difference{\store_0}{\store}}}}}
            }
          \label{sr:221}
        }

        By restricting~\eqref{sr:194} to \envdiff{\locset_0}{\locset} (the second part of the 
        override will always be selected):
        %By repeatedly inverting 
        %Rule~\hyperref[rule:rtstoreupdconstwo]{\rtstoreupdconstwo} 
        %on~\eqref{sr:194} and
        %Rule~\hyperref[rule:rtstoreupdconsone]{\rtstoreupdconsone} 
        %on~\eqref{sr:161} and
        \la{
          \is{\restrict{\heapty_{\stack}'}{\difference{\locset_0}{\locset}}}
          {\compose{\cenv_2'}{\envinvert{\parens{\difference{\store_0}{\store}}}}}
          \label{sr:222}
        }

        By~\eqref{sr:181},~\eqref{sr:221} and~\eqref{sr:222}:
        \la{
          \issubtype
            {\restrict{\heapty_{\stack}'}{\difference{\locset_0}{\locset}}}
            {\restrict{\heapty_{\stack}}{\difference{\locset_0}{\locset}}}
          \label{sr:225}
        }

      \item  \setcomplement{\bparens{\setunion{\locset_0}{\locset}}}.
        %Rule~\hyperref[rule:rtstoreupdconstwo]{\rtstoreupdconstwo} on
        By~\eqref{sr:213} and~\eqref{sr:161}:
        \la{
          \is{\restrict{\heapty_{\stack}}{\setcomplement{\bparens{\setunion{\locset_0}{\locset}}}}}
            {\restrict{\heapty}{\setcomplement{\bparens{\setunion{\locset_0}{\locset}}}}}
          \label{sr:218}
          \\
          \is{\restrict{\heapty_{\stack}}{\setcomplement{\bparens{\setunion{\locset_0}{\locset}}}}}
          {\restrict{\heapty'}{\setcomplement{\bparens{\setunion{\locset_0}{\locset}}}}}
          \label{sr:219}
        }
        
        By~\eqref{sr:218},~\eqref{sr:219} and~\eqref{sr:32}:
        \la{
          \issubtype
          {\restrict{\heapty_{\stack}'}{\setcomplement{\bparens{\setunion{\locset_0}{\locset}}}}}
          {\restrict{\heapty_{\stack}}{\setcomplement{\bparens{\setunion{\locset_0}{\locset}}}}}
          \label{sr:226}
        }

    \end{itemize}

    By composing~\eqref{sr:224},~\eqref{sr:225} and~\eqref{sr:226}:
    \la{
      \issubtype{\heapty_{\stack}'}{\heapty_{\stack}}
      \label{sr:227}
    }

    By Lemma~\ref{lemma:heapty:weakening} on~\eqref{sr:227},~\eqref{sr:191a} and~\eqref{sr:6}:
    \la{
      \tcstack{\genv}{\heapty_{\stack}'}{\stack}{\ectxidx{\type'}{\type_{\expr}'}}
      \label{sr:144}
      \\
      \issubtype{\type'}{\type}
      \label{sr:232}
    }

    By Rule~\hyperref[rule:rtstackcons]{\rtstackcons} on
    \eqref{sr:162},
    \eqref{sr:190},
    \eqref{sr:194} and
    \eqref{sr:144}
    we get the typing for \is{\stack'}{\stackcons{\stack}{\store}{\ectx}}:
    \la{
      \tcstack{\genv}{\heapty_{\stackcons{\stack}{\store}{\ectx}}'}
        {\stackcons{\stack}{\store}{\ectx}}
        {\ectxidx{\type'}{\type_c}}
      \label{sr:145}
    }

    %By transitivity on~\eqref{sr:191a} and~\eqref{sr:25}
    %\la{
    %  \entails{\cg}{\flows{\type_{\expr}'}{\flows{\type_{\expr}}{\tvar_{\stack}}}}
    %  \label{sr:146}
    %}

    %By~\eqref{sr:190},~\eqref{sr:231} and~\eqref{sr:146}
    %\la{
    %  \entails{\cg}{\underbrace{\consconcatthree
    %        {\cg_{\ectx}'}
    %        {\cg_{\stack}'}
    %        {\uniflow{\type_{\expr}'}{\tvar_{\stack}}}
    %    }_{\cg_{\stack}'}}
    %  \label{sr:147}
    %}

    By applying Rule~\hyperref[rule:rtconfstmt]{\rtconfbody} on
    \eqref{sr:39},
    \eqref{sr:33},
    \eqref{sr:124} and
    \eqref{sr:145}:
    \la{
      \tcrtconfexpr{\genv}{\heapty'}{\rttriplet{\heap'}{\stack'}{\store'}}
        {\body{\stmt_0}{\expr_0}}{\type'}
      %\\
      %\is{\cg'}{\consconcatfour 
      %  {\cg_{\heap}}
      %  {\cg_{\stack'}}
      %  {\cg_0''}
      %  {\uniflow{\type_0''}{\tvar_{\ectx}'}}
      %}
    }
    which proves \ref{sr:a}.

    By~\eqref{sr:232} we prove~\ref{sr:b}.

      \item \hyperref[rule:rtret]{\rtret}

    \emph{This case is treated similarly.}

    \end{itemize}

%   \item Case \hyperref[rule:rtconfstmt]{\rtconfstmt}
%
%     \emph{Similar to \hyperref[rule:rtconfexpr]{\rtconfexpr} case.}

% \end{itemize}
\end{proof}
\end{theorem}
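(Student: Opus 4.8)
The plan is to prove this \emph{subject reduction} statement by induction on the derivation of the reduction step~\ref{sr:ii}, after first inverting the configuration-typing hypothesis~\ref{sr:i}. Writing $\rtstate \equiv \rttriplet{\heap}{\stack}{\store}$ and $\rtstate' \equiv \rttriplet{\heap'}{\stack'}{\store'}$, I would begin by applying the only applicable configuration rule, \hyperref[rule:rtconfbody]{\rtconfbody}, to~\ref{sr:i}. This exposes the four ingredients the rest of the argument manipulates: a well-typed heap $\tcheap{\genv}{\heapty}{\heap}$, a store typing $\jstore{\heapty}{\store}{\cenv}$ inducing the flow-sensitive call-site environment $\cenv \equiv \compose{\heapty}{\store}$, a typing of the active term with result type $\type_{\expr}$, effect $\effect_{\expr}$ and output environment $\cenv_{\expr}$, and a stack typing $\tcstack{\genv}{\heapty_{\stack}}{\stack}{\ectxidx{\type}{\type_{\expr}}}$ whose ``hole'' expects the term's type (here $\heapty_{\stack} \equiv \heaptyoverrides{\cenv_{\expr}}{\store}{\heapty}$). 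The goal is then, for each reduction rule, to produce an extended heap typing $\heapty'$ under which the reduct re-assembles via \hyperref[rule:rtconfbody]{\rtconfbody} into a well-typed configuration whose type is a subtype of $\type$.

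Most of the reduction rules --- \hyperref[rule:rtasgn]{\rtasgn}, \hyperref[rule:rtarrow]{\rtarrow}, \hyperref[rule:rtpredvar]{\rtpredvar}, the short-circuiting logical rules, \hyperref[rule:rtlet]{\rtlet}, the conditional rules, and \hyperref[rule:rtskip]{\rtskip} --- leave the stack untouched, i.e.\ $\stack' = \stack$. For these I would simply invoke the already-established Preservation of Typing by Expression Reduction (Lemma~\ref{lemma:expr:preservation}) and its statement analogue (Lemma~\ref{lemma:stmt:preservation}) on the term typing. Those lemmas deliver an $\heapty'$ with $\issubtype{\heapty'}{\heapty}$ and $\tcheap{\genv}{\heapty'}{\heap'}$, together with a typing of the reduct whose type and effect are subtypes of the originals. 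Re-typing the \emph{unchanged} stack under $\heapty'$ is then a direct application of Heap Typing Weakening (Lemma~\ref{lemma:heapty:weakening}), which propagates the subtype through the stack's hole; re-applying \hyperref[rule:rtconfbody]{\rtconfbody} reconstructs the configuration typing with the required subtype on $\type$.

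The interesting and by far hardest case is the function-call reduction \hyperref[rule:rtcall]{\rtcall}, where the active term has shape $\ectxidx{\ectx}{\efuncall{\loc}{\val}}$ and the step pushes a new frame $\stackcons{\stack}{\store}{\ectx}$ while transferring control into the callee body $\body{\stmt_0}{\expr_0}$. Here I would first use Decomposing Evaluation Context Typing (Lemma~\ref{lemma:decomp:ectx}) to split the term typing into a typing of the call and a typing of the surrounding context $\ectx$; invert \hyperref[rule:tccall]{\tccall} to recover the function type, the argument type, the call effect $\effect_c$, and the call-site erasure $\cenv_3 \equiv \erasevarto{\cenv_2}{\genv}{\effect_c}$; and invert \hyperref[rule:rtheapfun]{\rtheapfun} together with \hyperref[rule:tcfun]{\tcfun} on the closure stored at $\loc$ to obtain the body typing under the erased capture environment, along with the subtyping relations between the declared and actual parameter types and effects. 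I would then choose $\heapty'$ to extend $\heapty$ with a binding for the fresh argument location (at the argument's type) and bindings at $\tvoid$ for the freshly allocated locations of the hoisted locals $\locals{\body{\stmt_0}{\expr_0}}$.

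The crux --- and where I expect the real work to live --- is reconstructing the stack typing for the re-pushed frame $\stackcons{\stack}{\store}{\ectx}$ so that \hyperref[rule:rtconfbody]{\rtconfbody} applies to the reduct. The environment used to re-check $\ectx$ must reconcile two sources: the closure's output environment $\cenv_2'$ at the end of the body (governing exactly the callee's in-scope variables) and the caller's pre-call environment $\cenv$ (governing the rest). I would split the domain along $\envdom{\store_0}$ versus its complement, and within the former further along membership in the concrete effect $\effect_c$; on the in-effect slice the definition of \erasesym forces agreement with the globally most-general environment $\genv$, on the off-effect slice I appeal to the NonEffect lemma (Lemma~\ref{lemma:non:effect}) to show the body does not disturb those bindings, and I use Erased Environment Subtyping (Lemma~\ref{lemma:erase:subtype}) with the argument-subtyping facts to glue the pieces into a single environment that is a subtype of $\cenv_3$. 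Throughout, the $\alpha$-renaming assumption is essential: it guarantees each variable resolves to one location across $\store_0$ and $\store$, so the stores agree on their shared domain and the derived heap typings can be compared pointwise; a final Heap Typing Weakening (Lemma~\ref{lemma:heapty:weakening}) pushes the subtype through the stack. The symmetric \hyperref[rule:rtret]{\rtret} case, which pops a frame, is handled analogously by inverting the stack typing rather than the closure heap-typing rule. The main obstacle is precisely this domain-splitting bookkeeping in the call case: keeping the flow-sensitive call-site environment, the conservatively erased effect variables, and the closure body environment mutually consistent under the extended heap typing.
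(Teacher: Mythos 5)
Your proposal follows essentially the same route as the paper's proof: the same inversion of \hyperref[rule:rtconfbody]{\rtconfbody}, induction on the reduction derivation, dispatch of the stack-preserving cases via Lemma~\ref{lemma:expr:preservation}, and for \hyperref[rule:rtcall]{\rtcall} the same chain of Lemma~\ref{lemma:decomp:ectx}, inversion of \hyperref[rule:tccall]{\tccall}, \hyperref[rule:rtheapfun]{\rtheapfun} and \hyperref[rule:tcfun]{\tcfun}, the same choice of $\heapty'$, and the same domain-splitting argument along $\envdom{\store_0}$ and $\effect_c$ using Lemmas~\ref{lemma:non:effect}, \ref{lemma:erase:subtype} and \ref{lemma:heapty:weakening}. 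The plan is correct and matches the paper's argument in both structure and the key technical ideas.
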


\begin{theorem}[Progress -- Expressions and Function Bodies]
\label{theorem:flow:progress}
If
\shiftedalign{
  \tcrtconfexpr{\genv}{\heapty}{\rtstate}{\exprbody}{\type}
% \label{pr:i}
}
then one of the following holds:
\begin{enumerate}[label=(\alph*)]
  \item \exprbody is a value
    \label{pr:a}
  \item there exist $\rtstate'$ and $\exprbody'$ \st
    \stepsconf{\rtstate}{\exprbody}{\rtstate'}{\exprbody'}.
    \label{pr:b}
\end{enumerate}
\begin{proof}
Let
\la{
  \rtstate \equiv \rttriplet{\heap}{\stack}{\store}
  \label{pr:1}
}

We prove the desired by induction on the given derivation.% of~\ref{pr:i}.
\begin{itemize}
  \item \hyperref[rule:rtconfbody]{\rtconfbody}:
    \la{
      \tcrtconfexpr{\genv}{\heapty}{\rtstate}{\expr}{\type}
      \label{pr:2}
    }

    By inverting Rule~\hyperref[rule:rtconfexpr]{\rtconfexpr} on \eqref{pr:2}:
    \la{
      \tcheap{\genv}{\heapty}{\heap}
      \label{pr:4}
      \\
      \is{\cenv}{\compose{\heapty}{\store}}
      \label{pr:6}
      \\
      \tcrtexpr{\cenv}{\genv}{\heapty}{\expr}{\type_{\expr}}{\effect}{\predmap}{\cenv'}
      \label{pr:7}
      \\
      \isheaptyoverrides{\heapty'}{\cenv'}{\store}{\heapty}
      \\
      \tcstack{\genv}{\heapty'}{\stack}{\ectxidx{\type}{\tvar_{\stack}}}
      \label{pr:5}
    }

    By induction on the derivation of~\eqref{pr:7}:
    \begin{itemize}
      \item \hyperref[rule:tcconst]{\tcconst}: This expression is already a
        value so \ref{pr:a} holds.

      \item \hyperref[rule:tccall]{\tccall}:
        \la{
          \tcrtexpr{\cenv}{\genv}{\heapty}{\underbrace{\efuncall{\loc}{\val}}_{\expr}}{\type_{\expr}}{\effect}{\predmap}{\cenv'}
          \label{pr:14}
        }
        %where
        %\la{
        %  \expr \equiv \efuncall{\loc}{\val}
        %  \label{pr:15}
        %}

        By inverting Rule \hyperref[rule:tccall]{\tccall} on \eqref{pr:14}:
        \la{
          \tcrtexpr{\cenv}{\genv}{\heapty}{\loc}{\type_{\loc}}{\effempty}{\predmapempty}{\cenv_1}
          \label{pr:9}
          \\
          \tcrtexpr{\cenv_1}{\genv}{\heapty}{\val}{\type_{\val}}{\effempty}{\predmapempty}{\cenv_2}
          \label{pr:10}
          \\
          \issubtype{\type_{\loc}}{\tarrow{\type_{\val}}{\effect}{\type_{\expr}}}
          \label{pr:11}
          \\
          \is{\cenv'}{\erasevarto{\cenv_2}{\genv}{\effect}}
          \label{pr:12}
        }

        By inverting Rule~\hyperref[rule:rt:tcloc]{\tcloc} on \eqref{pr:9}:
        \la{
          \is{\cenv_1}{\cenv}
          \label{pr:16}
          %\\
          %\is{\cset_{\loc}}{\csetemp}
          %\label{pr:17}
          \\
          \is{\idx{\heapty}{\loc}}{\type_{\loc}}
          \label{pr:28}
        }

        By Rule~\hyperref[rule:tcconst]{\tcconst} (or Rule~\hyperref[rule:rt:tcloc]{\tcloc})
        on \eqref{pr:10}:
        \la{
          \is{\cenv_2}{\cenv}
          \label{pr:18}
        }

        By \eqref{pr:4} for location \loc:
        \la{
          \tcheap{\genv}{\heapty}{\heapext{\heap_0}{\loc}{\hval}}
          \label{pr:19}
        }

        For some heap $\heap_0$ and heap value \hval.

        Next we prove that
        \is{\idx{\heap}{\loc}}{\storearrow{\store_0}{\evar}{\stmt_0}{\expr_{0}}} for some
        $\store_0$, $\stmt_0$ and $\expr_0$ by induction on the derivation of \eqref{pr:19}:
        \begin{itemize}
          \item  \hyperref[rule:rtheaploc]{\rtheaploc}:
            \la{
              \tcheap{\genv}{\heapty}{\heapext{\heap_0}{\loc}{\loc'}}
              \label{pr:20}
            }
            For some location $\loc'$ distinct from \loc.
            
            By inverting \hyperref[rule:rtheaploc]{\rtheaploc} on \eqref{pr:20}:
            \la{
              \tcheap{\genv}{\heapty}{\heap_0}
              \label{pr:21}
            }

            Let \is{\heap_0}{\heapext{\heap_0'}{\loc'}{\hval'}}. \eqref{pr:21}
            becomes:
            \la{
              \tcheap{\genv}{\heapty}{\heapext{\heap_0'}{\loc'}{\hval'}}
              \label{pr:22}
            }

            By induction hypothesis using \eqref{pr:22}:
            \la{
              \is{\idx{\heap_0}{\loc'}}{\storearrow{\store_0'}{\evar}{\stmt_0'}{\expr_0'}}
              \label{pr:23}
            }

          \item \hyperref[rule:rtheapconst]{\rtheapconst}:
            \la{
              \tcheap{\genv}{\heapty}{\heapext{\heap_0}{\loc}{\const}}
              \label{pr:24}
            }

            For some constant $\const$.
            By inverting \hyperref[rule:rtheapconst]{\rtheapconst} on
            \eqref{pr:24}:
            \la{
              \tcheap{\genv}{\heapty}{\heap_0}
              \label{pr:25}
              \\
              \idxis{\heapty}{\loc}{\tbase_{\vconst}}
              \label{pr:26}
              %\label{pr:27}
            }

            The subtyping constraint~\eqref{pr:11} and \eqref{pr:26} lead to a contradiction.

          \item  \hyperref[rule:rtheapfun]{\rtheapfun}:
            \la{
              \tcheap{\genv}{\heapty}
              {\heapext{\heap_0}{\loc}{\storearrow{\store_0}{\evar}{\stmt_0}{\expr_0}}}
            \label{pr:30}
            }
            which proves the desired result immediately.

        \item  \hyperref[appendix:rule:rtheapobj]{\rtheapobj}:
          \emph{Similar to rule \hyperref[rule:rtheapconst]{\rtheapconst}.}

        \end{itemize}

        So there exist $\store_0$, $\stmt_0$ and $\expr_0$ \st:
        \la{
          \is{\idx{\heap}{\loc}}{\storearrow{\store_0}{\evar}{\stmt_0}{\expr_0}}
          \label{pr:31}
        }

        We pick:
        \la{
          \is{\heap'}{\heapexts{\heapext{\heap}{\loc_{\val}}{\val}}{\loc_i}{\vundef}}
          \label{pr:32}
          \\
          \is{\stack'}{\stackcons{\stack}{\store}{\hole}}
          \label{pr:33}
          \\
          \is{\store'}{\storeexts{\storeext{\store_0}{\evar}{\loc_{\val}}}{\evar_i}{\loc_i}}
          \label{pr:34}
        }
        where $\many{\evar_i}$ are the variables defined in the function body, and
        $\loc_{\val}$ and $\loc_{i}$ are fresh locations.

        By applying Rule~\hyperref[rule:rtcall]{\rtcall} using \eqref{pr:31}, \eqref{pr:32},
        \eqref{pr:33} and \eqref{pr:34}
        \la{
          \stepsconf{\rtstate}{\efuncall{\loc}{\val}}
          {\rttriplet{\heap'}{\stack'}{\store'}}{\body{\stmt_0}{\expr_0}}
          \label{pr:35}
        }
        which proves \ref{pr:b}.

      \item 
        \hyperref[rule:tcvar]{\tcvar},
        \hyperref[rule:tcassign]{\tcassign},
        \hyperref[rule:tcfun]{\tcfun},
        \hyperref[rule:tcand]{\tcand},
        \hyperref[rule:tcor]{\tcor},
        \hyperref[rule:tcnot]{\tcnot} and 
        \hyperref[rule:tcpred]{\tcpred} are straight-forward since they 
        only impose very minimal preconditions for the 
        respective transition to happen.

    \end{itemize}

  \item \hyperref[rule:rtconfstmt]{\rtconfstmt}
    Proved by applying Theorem~\ref{theorem:flow:progress:stmt} on the statement part 
    of the body.

\end{itemize}
\end{proof}
\end{theorem}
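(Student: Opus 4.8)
The plan is to prove Progress by the standard Wright--Felleisen method, reusing the case analysis already exercised for Subject Reduction (Theorem~\ref{theorem:flow:preservation}). First I would do an outer induction on the derivation of \tcrtconfexpr{\genv}{\heapty}{\rtstate}{\exprbody}{\type}, inverting the configuration rule (\hyperref[rule:rtconfbody]{\rtconfbody} when the term is an expression or function body, \hyperref[rule:rtconfstmt]{\rtconfstmt} when it is a statement). This exposes a well-typed heap \tcheap{\genv}{\heapty}{\heap}, a store typing inducing the environment \cenv=\compose{\heapty}{\store}, a term-typing judgment, and a stack typing. The configuration can then advance in exactly one of two ways: the current term reduces at its head, or a subterm reduces inside an evaluation context; the latter is handled uniformly by \hyperref[rule:rtectx]{\rtectx} together with the induction hypothesis applied to the subterm occupying the hole.

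The substance is an inner induction on the term-typing derivation. The value cases (\hyperref[rule:tcconst]{\tcconst} and the location rule \hyperref[rule:rt:tcloc]{\tcrtloc}) immediately discharge clause~\ref{pr:a}. For each compound form I would argue that, after possibly reducing an immediate subterm via an evaluation context, the relevant operands become values and the matching head rule fires: \hyperref[rule:rtasgn]{\rtasgn} for assignment, \hyperref[rule:rtandtru]{\rtandtru}/\hyperref[rule:rtandfls]{\rtandfls} and their duals \hyperref[rule:rtortru]{\rtortru}/\hyperref[rule:rtorfls]{\rtorfls} for the logical connectives (never stuck, since every value is classified as either truthy or falsy), \hyperref[rule:rtneg]{\rtneg} for negation, \hyperref[rule:rtrecord]{\rtrecord} for record construction, \hyperref[rule:rtarrow]{\rtarrow} for an arrow literal, and \hyperref[rule:rtpredvar]{\rtpredvar} for a predicate test (total because $\delta_{\primpred}$ is defined on every heap value).

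The delicate cases are those whose head rule demands a specific \emph{shape} of heap value, namely calls \efuncall{\loc}{\val} and field accesses \fieldread{\loc}{\fieldsym} and \fieldwrite{\loc}{\fieldsym}{\val}. Here I would supply the missing canonical-forms facts by inverting the term typing and the heap well-typedness in tandem. For a call, \hyperref[rule:tccall]{\tccall} yields \issubtype{\type_{\loc}}{\tarrow{\type_{\val}}{\effect}{\type}}, while \tcheap{\genv}{\heapty}{\heap} pins down \idx{\heapty}{\loc}; inducting on the heap-typing derivation, the \hyperref[rule:rtheapconst]{\rtheapconst} and \hyperref[appendix:rule:rtheapobj]{\rtheapobj} cases are excluded because no base or record ground type is a subtype of an arrow type, so the binding must arise from \hyperref[rule:rtheapfun]{\rtheapfun}, i.e.\ \idx{\heap}{\loc} is a closure; allocating fresh locations for the argument and for the hoisted locals computed by \localssym then lets \hyperref[rule:rtcall]{\rtcall} apply. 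The symmetric argument for field access uses object subtyping to force \idx{\heap}{\loc} to be a record literal containing \fieldsym, enabling \hyperref[rule:rtfldrd]{\rtfldrd} or \hyperref[rule:rtfldwr]{\rtfldwr}. I expect this canonical-forms reasoning---threading the arrow/record subtyping constraint through the inversion of heap typing and discharging it against the shape-distinguishing behaviour of ground subtyping---to be the main obstacle; the congruence cases and the statement configuration (dispatched via the companion progress theorem for statements) are then routine.
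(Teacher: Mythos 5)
Your proposal follows essentially the same route as the paper's proof: an outer induction on the configuration typing, an inner induction on the term typing, and a canonical-forms argument for calls that inverts the heap typing and uses the arrow subtyping constraint to rule out the constant and record cases before applying \hyperref[rule:rtcall]{\rtcall}. Your treatment is if anything slightly more explicit than the paper's (you spell out the evaluation-context congruence step and the symmetric canonical-forms argument for field reads and writes, which the paper leaves implicit), but there is no substantive difference in approach.
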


\begin{theorem}[Progress -- Statements]\label{theorem:flow:progress:stmt}
If
\begin{enumerate}[label=(\roman*)]
  \item \jrtconfstmt{\heapty}{\rtstate}{\stmt}{\cset}
    \label{pr:stmt:i}
  \item \cset is consistent
\end{enumerate}
then one of the following holds:
\begin{enumerate}[label=(\alph*)]
  \item \stmt is a irreducible form 
  \item there exists $\rtstate'$ and $\stmt'$ \st \stepsconf{\rtstate}{\stmt}{\rtstate'}{\stmt'}
\end{enumerate}
  \begin{proof}

    \emph{The proof is by induction on the derivation
    of~\ref{pr:stmt:i}.}
   %  \begin{itemize}

   %    \item Case \hyperref[rule:cgexp]{\cgexp}
   %    \item Case \hyperref[rule:cglet]{\cglet}
   %    \item Case \hyperref[rule:cgif]{\cgif }
   %    \item Case \hyperref[rule:cgret]{\cgret}
   %    \item Case \hyperref[rule:cgseq]{\cgseq}

   %  \end{itemize}

%    \emph{Similar to \hyperref[rule:rtconfexpr]{\rtconfexpr} case.}

  \end{proof}
\end{theorem}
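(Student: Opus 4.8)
The plan is to follow the same template as the Progress theorem for expressions (Theorem~\ref{theorem:flow:progress}) and argue by induction on the typing derivation of~\ref{pr:stmt:i}. First I would invert the configuration-typing rule (Rule~\hyperref[rule:rtconfstmt]{\rtconfstmt}) to expose a heap typing, a store typing producing the flow-sensitive environment $\cenv$, and an underlying statement-typing judgment for $\stmt$; the interesting case split is then governed by the last statement-typing rule. As in the expression case, the consistency of $\cset$ is what guarantees progress: by soundness of inference (Lemma~\ref{lemma:flow:type:inf:soundness}) it yields a satisfying substitution and hence a declarative typing, and this in turn rules out the only stuck expression configurations---a non-arrow value in receiver position, or a get/set of an absent field. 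Thus every expression appearing in evaluation position in $\stmt$ either is already a value or takes a step, and the statement-level reasoning reduces to dispatching on the shape of $\stmt$ and appealing to the expression part of Theorem~\ref{theorem:flow:progress} wherever a subexpression sits in head position.

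Next I would go through the statement forms. The form $\skipstmt$ matches no reduction rule and is therefore an irreducible form, giving case~(a). For an expression statement (rule \tcexp), the subterm is well typed, so expression progress says it is either a value---in which case the statement is irreducible, case~(a)---or it steps, and since expressions are themselves statements this step directly witnesses case~(b). For a declaration $\varassign{\evar}{\expr}$ (rule \tcvardecl): if $\expr$ steps, the step lifts through the evaluation context $\varassign{\evar}{\hole}$ by \rtectx; if $\expr$ is already a value, rule \rtlet applies---the store binds $\evar$ by hoisting, so the heap update is well defined---stepping the statement to $\skipstmt$. For a conditional $\ite{\expr}{\stmt_1}{\stmt_2}$ (rule \tcif): if $\expr$ steps, lift through $\ite{\hole}{\stmt_1}{\stmt_2}$; otherwise $\expr$ is a value and \rtif selects a branch according to its truthiness. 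In every subcase we land in case~(b).

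The sequence $\seq{\stmt_1}{\stmt_2}$ (rule \tcseq) is where the induction hypothesis enters and where the main obstacle lies. Inversion gives that $\stmt_1$ is well typed, so the inductive hypothesis yields that $\stmt_1$ either steps or is an irreducible form. If $\stmt_1$ steps, I must propagate that step to the enclosing sequence while leaving $\stmt_2$ untouched; if $\stmt_1$ is irreducible, I must argue that the only terminal statement that can arise here is $\skipstmt$, so that \rtskip fires and advances the sequence to $\stmt_2$. I expect this to be the delicate point: \rtectx only lifts \emph{expression} reductions through contexts such as $\seq{\ectx}{\stmt}$, whereas the statement-specific steps \rtlet and \rtif act at the head of a statement, so I would need to confirm that a head reduction of $\stmt_1$ propagates through the sequence---equivalently, to characterize the statement normal forms precisely and check that each is either $\skipstmt$ (discardable by \rtskip) or otherwise unreachable for a well-typed $\stmt_1$. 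Pinning down this normal-form characterization, and thereby ruling out a well-typed first component that is stuck inside a sequence, is the crux; once it is settled, the remaining cases are routine instances of the same pattern and can be deferred as in the companion expression theorem.
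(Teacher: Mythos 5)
Your proposal follows exactly the route the paper takes---induction on the typing derivation of the statement, dispatching on the last rule and appealing to the expression progress theorem for subterms in head position---the paper simply elides all of the case analysis that you spell out. Your flagged concern about lifting a head reduction of $\stmt_1$ through $\seq{\stmt_1}{\stmt_2}$ (since \rtectx only lifts \emph{expression} steps and \rtskip/\rtif/\rtlet act at the head) is a genuine subtlety that the paper's one-line proof sketch does not address, and your proposed resolution via a normal-form characterization of well-typed statements is the right way to discharge it.
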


%\subsection{Type Safety}

\begin{theorem}[Type Safety]
\label{theorem:flow:safety}
A well-typed program is either in normal form or reduces to another well typed
state.
\end{theorem}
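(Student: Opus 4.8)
The plan is to obtain Type Safety as the standard corollary of the two results already established for the declarative system: \emph{Progress} and \emph{Subject Reduction} (Preservation). Once these are in hand the theorem is essentially their conjunction, so the argument is short and requires no fresh induction—the substantive work has all been done in the supporting theorems.

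First I would fix the meaning of the statement. A \emph{well-typed program} is a runtime configuration \mkconf{\rtstate}{\exprbody} (where \exprbody is an expression or a function body, with the analogous statement variant for \stmt) together with a heap typing \heapty and a flow-insensitive environment \genv such that \tcrtconfexpr{\genv}{\heapty}{\rtstate}{\exprbody}{\type} holds for some \type (respectively \tcrtconfstmt{\genv}{\heapty}{\rtstate}{\stmt} in the statement case). \emph{Normal form} means the term component is a value \val, or an irreducible statement. With these conventions, I would then split on the shape of the term and invoke the matching Progress result. For an expression or function body, Theorem~\ref{theorem:flow:progress} yields that either \exprbody is already a value—so we are in normal form—or there exist $\rtstate'$ and $\exprbody'$ with \stepsconf{\rtstate}{\exprbody}{\rtstate'}{\exprbody'}; for statements, Theorem~\ref{theorem:flow:progress:stmt} gives the same dichotomy. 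This discharges the ``either in normal form or can step'' half of the claim.

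Second, in the reducing case I would apply Subject Reduction, Theorem~\ref{theorem:flow:preservation}, to the step produced by Progress: from \tcrtconfexpr{\genv}{\heapty}{\rtstate}{\exprbody}{\type} and \stepsconf{\rtstate}{\exprbody}{\rtstate'}{\exprbody'} it produces a heap typing $\heapty'$ and a type $\type'$ with $\type' \leq \type$ such that \tcrtconfexpr{\genv}{\heapty'}{\rtstate'}{\exprbody'}{\type'}. Thus the successor configuration is again well typed, establishing that the program ``reduces to another well-typed state.'' Combining the normal-form and reducing cases, across the expression/body and statement variants, completes the proof.

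The real difficulty, and hence the main obstacle, is not in this final assembly but in the supporting results it invokes. On the Preservation side the delicate case is \rtcall, where the call-site erasure (the ``havoc'' machinery) must be shown to preserve typing: one has to reconstruct an accurate heap typing on entry to the callee, reconcile the closure's captured store with the caller's store (using $\alpha$-renaming to identify shared locations), and argue that the widened, erased environment at the call site soundly over-approximates the post-call environment on exactly the variables touched by the callee's effect—precisely the content of Lemmas~\ref{lemma:non:effect} and~\ref{lemma:erase:subtype}. On the Progress side the subtle step is the inversion on the heap typing needed to show that a value in receiver position of a call is genuinely a closure, ruling out constants and records through the consistency and subtyping constraints. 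Once those are settled, Type Safety itself follows immediately by the case analysis above.
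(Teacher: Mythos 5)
Your proposal is correct and follows essentially the same route as the paper, which likewise proves Type Safety by combining the Progress theorems (\ref{theorem:flow:progress} and \ref{theorem:flow:progress:stmt}) with Subject Reduction (\ref{theorem:flow:preservation}). Your additional remarks about where the real difficulty lies accurately reflect the structure of the supporting proofs.
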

\begin{proof}
  The proof follows by subsequent applications of Theorems
  \ref{theorem:flow:progress}, \ref{theorem:flow:progress:stmt} and 
  \ref{theorem:flow:preservation}.
\end{proof}

\end{document}